\newif\if@restonecol
\date{--- {\jobname} --- {\today} ---}
\def\nauty{\texttt{nauty}} \def\Nnauty{\texttt{Nauty}}
\def\traces{\texttt{Traces}}
\def\saucy{\texttt{saucy}} \def\Ssaucy{\texttt{Saucy}}
\def\bliss{\texttt{Bliss}}
\def\conauto{\texttt{conauto}} \def\Cconauto{\texttt{Conauto}}
\def\dfrac#1#2{\lower0.15ex\hbox{\large$\frac{#1}{#2}$}}
\def\({\bigl(}
\def\){\bigr)}
\def\isom{\cong}
\def\cat{\mathbin{\Vert}}
\def\G{{\cal G}}
\def\T{{\cal T}}
\def\nullseq{(\,)}
\def\Aut{\operatorname{Aut}}
\def\abs#1{\lvert#1\rvert} \let\card=\abs
\def\nicebreak{\vskip 0pt plus 100pt\penalty-300\vskip 0pt plus -100pt }
\begin{document}
\newcommand{\Bliss}{\addplot[mark=triangle*, mark options={scale=1.4, draw=red!80!black, fill=red, opacity=1},draw=red!80!black]}
\newcommand{\Saucy}{\addplot[mark=diamond*, mark options={scale=1.2, draw=green!50!black,  fill=green,opacity=1},draw=green!50!black]}
\newcommand{\Conauto}{\addplot[mark=oplus*, mark options={scale=1.4, very thin, draw=yellow!50!black, fill=yellow!50!white,opacity=1},draw=yellow!50!black]}
\newcommand{\Nauty}{\addplot[mark=*, mark options={scale=1.4, very thin, draw=black, fill=blue, opacity=1}, draw=blue!80!black]}
\newcommand{\NautyLight}{\addplot[mark=*, mark options={scale=1.4, very thin, draw=black, fill=black!60!white, opacity=1}, draw=black!80!white]}
\newcommand{\Traces}{\addplot[mark=square*, mark options={scale=1, thin, draw=cyan!50!black,fill=cyan!50!white, opacity=1},cyan!50!white, opacity=1]}

\newcommand{\BlissOnlyMarks}{\addplot[mark=triangle*, mark options={scale=1.4, draw=red!80!black, fill=red, opacity=1},only marks]}
\newcommand{\SaucyOnlyMarks}{\addplot[mark=diamond*, mark options={scale=1.2, draw=green!50!black,  fill=green,opacity=1},only marks]}
\newcommand{\ConautoOnlyMarks}{\addplot[mark=oplus*, mark options={scale=1.4, very thin, draw=yellow!50!black, fill=yellow!50!white,opacity=1},only marks]}
\newcommand{\NautyOnlyMarks}{\addplot[mark=*, mark options={scale=1.4, very thin, draw=black, fill=blue, opacity=1},only marks]}
\newcommand{\NautyLightOnlyMarks}{\addplot[mark=*, mark options={scale=1.4, very thin, draw=black, fill=black!60!white, opacity=1},only marks]}
\newcommand{\TracesOnlyMarks}{\addplot[mark=square*, mark options={scale=1, thin, draw=cyan!50!black,fill=cyan!50!white, opacity=1},only marks]}

\newcommand{\BlissMed}{\addplot[mark=triangle*, mark options={scale=1.2, draw=red!80!black, fill=red, opacity=1},only marks]}
\newcommand{\SaucyMed}{\addplot[mark=diamond*, mark options={scale=.9, draw=green!50!black,  fill=green,opacity=1},only marks]}
\newcommand{\ConautoMed}{\addplot[mark=oplus*, mark options={scale=1.2, very thin, draw=yellow!50!black, fill=yellow!50!white,opacity=1},only marks]}
\newcommand{\NautyMed}{\addplot[mark=*, mark options={scale=1.2, very thin, draw=black, fill=blue, opacity=1},only marks]}
\newcommand{\TracesMed}{\addplot[mark=square*, mark options={scale=.75, thin, draw=cyan!50!black,fill=cyan!50!white, opacity=1},only marks]}

\newcommand{\BlissSmall}{\addplot[mark=triangle*, mark options={scale=.8, draw=red!80!black, fill=red, opacity=1},draw=red!80!black]}
\newcommand{\SaucySmall}{\addplot[mark=diamond*, mark options={scale=.6, draw=green!50!black, very thin, fill=green,opacity=1},draw=green!50!black]}
\newcommand{\ConautoSmall}{\addplot[mark=oplus*, mark options={scale=.8, very thin, draw=yellow!50!black, fill=yellow!50!white,opacity=1},draw=yellow!50!black]}
\newcommand{\NautySmall}{\addplot[mark=*, mark options={scale=.8, very thin, draw=black, fill=blue, opacity=1}, draw=blue!80!black]}
\newcommand{\TracesSmall}{\addplot[mark=square*, mark options={scale=.5, thin, draw=cyan!50!black,fill=cyan!50!white, opacity=1},cyan!50!white, opacity=1]}

\newcommand{\SaucyThree}{\addplot[mark=diamond*, mark options={scale=.75, draw=green!50!black,  fill=green,opacity=1},draw=green!50!black]}
\newcommand{\ConautoThree}{\addplot[mark=oplus*, mark options={scale=.8, very thin, draw=yellow!50!black, fill=yellow!50!white,opacity=1},draw=yellow!50!black]}

\newcommand{\timeoutbox}[1]{timeout: $#1$ secs}

\def\showBliss{%
	\tikz\draw[%
		mark options={scale=1.4, draw=red!80!black, fill=red, opacity=1},
		mark=triangle*]
	plot coordinates {(0,0)};%
}%
\def\showSaucy{%
	\tikz\draw[%
		mark options={scale=1.2, draw=green!50!black,  fill=green,opacity=1},
		x=0.8cm,y=0.3cm,
		mark=diamond*]
	plot coordinates {(0,0)};%
}%
\def\showConauto{%
	\tikz\draw[%
		mark options={scale=1.4, very thin, draw=yellow!50!black, fill=yellow!50!white,opacity=1},
		x=0.8cm,y=0.3cm,
		mark=oplus*]
	plot coordinates {(0,0)};%
}%
\def\showNauty{%
	\tikz\draw[%
		mark options={scale=1.4, very thin, draw=black, fill=blue, opacity=1},
		x=0.8cm,y=0.3cm,
		mark=*]
	plot coordinates {(0,0)};%
}%
\def\showNautyLight{%
	\tikz\draw[%
		mark options={scale=1.4, very thin, draw=black!80!white, fill=black!60!white, opacity=1},
		x=0.8cm,y=0.3cm,
		mark=*]
	plot coordinates {(0,0)};%
}%
\def\showTraces{%
	\tikz\draw[%
		mark options={scale=1, thin, draw=cyan!50!black,fill=cyan!50!white, opacity=1},
		x=0.8cm,y=0.3cm,
		mark=square*]
	plot coordinates {(0,0)};%
}%

\begin{frontmatter}

\title {Practical graph isomorphism, II}

\author{Brendan~D.~McKay}
\address{Research School of Computer Science,
Australian National University,
Canberra ACT 0200, Australia\thanksref{ARC}}
\ead{bdm@cs.anu.edu.au}
\thanks[ARC]{Supported by the Australian Research Council.}

\author{Adolfo Piperno}
\address{Dipartimento di Informatica,
Sapienza Universit\`a di Roma,
Via Salaria 113, I-00198 Roma, \\Italy}
\ead{piperno@di.uniroma1.it}

\begin{abstract}
   We report the current state of the graph isomorphism problem
   from the practical point of view.  After describing the general principles
   of the refinement-individualization paradigm and proving its validity,
   we explain how it is
   implemented in several of the key programs. In particular, we
   bring the description of the best known program {\nauty} up to date
   and describe
   an innovative approach called {\traces} that outperforms the
   competitors for many difficult graph classes.  Detailed comparisons
   against {\saucy}, {\bliss} and {\conauto} are presented.
\end{abstract}

\end{frontmatter}

\nicebreak
\section{Introduction}\label{s:intro}

An \textit{isomorphism} between two graphs is a bijection between 
their vertex sets that preserves adjacency.
An \textit{automorphism} is an isomorphism from a graph to itself.
The set of all automorphisms of a graph $G$ form a group under
composition called the \textit{automorphism group} $\Aut(G)$.

The graph isomorphism problem (GI) is that of determining whether
there is an isomorphism between two given graphs.
GI has long been a favorite target of
algorithm designers---so much so that it was already described as
a ``disease'' in 1976~\citep{RandC}.

Though it is not the focus of this paper, we summarize the current state
of the theoretical study of graph isomorphism.
It is obvious that $\mathrm{GI}\in \mathrm{NP}$ but unknown whether 
$\mathrm{GI}\in \text{co-NP}$.  As that implies, no polynomial time
algorithm is known (despite many published claims),
but neither is $\mathrm{GI}$ known to be NP-complete.
NP-completeness is considered unlikely since it would imply collapse of the
polynomial-time hierarchy~\citep{Goldreich}.
The fastest proven running time for GI has stood for three decades at
$e^{O(\sqrt{n\log n})}$~\citep{Babai}.

On the other hand, polynomial time algorithms are known for many 
special classes
of graphs.  The most general such classes are those with a forbidden
minor \citep{Ponomarenko,GroheLICS} and those with a forbidden topological
minor \citep{GroheSODA}.
These classes include many earlier classes such as graphs of bounded
degree \citep{Luks}, bounded genus \citep{Filotti,Miller} and bounded
tree-width \citep{Bodlaender}.
The algorithms resulting from this theory
are most unlikely to be useful in practice.  Only for a very few important
graph classes, such as trees~\citep{Aho} and planar graphs~\citep{Kellog} are
there practical approaches which are sure to outperform general methods
such as described in this paper.

Testing two graphs for isomorphism directly can have the advantage that
an isomorphism might be found long before an exhaustive search is
complete.  On the other hand, it is poorly suited for the common problems
of rejecting isomorphs from a collection of graphs or identifying a graph
in a database of graphs.  For this reason, the most common practical
approach is ``canonical labelling'', a process in which a graph is relabeled
in such a way that isomorphic graphs are identical after relabelling.
When we have an efficient canonical labelling procedure, we can use
a sorting algorithm for removing isomorphs from a large collection and
standard data structures for database retrieval.


\medskip
It is impossible to comprehensively survey the history of this problem
since there are at least a few hundred published algorithms.  However,
a clear truth of history is that the most successful approach has involved
fixing of vertices together with refinement of partitions of the vertex set.
This ``individualization-refinement'' paradigm was introduced by
\citet{Parris} and developed by \citet{CandG} and
\citet{Arlazarov}. However, the first
program that could handle both structurally regular graphs with
hundreds of vertices and graphs with large automorphism groups
was that of \citet{nauty1978,PGI1}, that later became known as
{\nauty}.  The main advantage of {\nauty} over earlier programs was
its innovative use of automorphisms to prune the search. 
Although there were some worthy competitors \citep{Leon,Kocay},
{\nauty} dominated the field for the next several decades.

This situation changed when
\citet{saucy1} introduced {\saucy}, which at that stage was
essentially a reimplementation of the automorphism group subset of
{\nauty} using sparse data structures.  This gave it a very large advantage
for many graphs of practical interest, prompting the first author to release
a version of {\nauty} for sparse graphs.  {\Ssaucy} has since introduced
some important innovations, such as the ability to detect some types of
automorphism (such as those implied by a locally tree-like structure)
very early \citep{saucy2}.  
Soon afterwards \citet{Bliss1,Bliss2} introduced {\bliss}, which also used the same
algorithm but had some extra ideas that helped its performance on 
difficult graphs.  In particular, it allowed refinement operations to be
aborted early in some cases.  The latter idea reached its full
expression in {\traces}, which we introduce in this paper.
More importantly, {\traces} pioneered a major revision of the way
the search tree is scanned, which we will demonstrate to produce great
efficiency gains.

Another program worthy of consideration is {\conauto}~\citep{conauto1,conauto2}.
It does not feature canonically labelling, though it can compare two graphs
for isomorphism.

\medskip

In Section~\ref{s:generic}, we provide a description of algorithms
based on the individualization-refinement paradigm.  It is sufficiently
general to encompass the primary structure of all of the most successful 
algorithms.
In Section~\ref{s:implementation}, we flesh out the details of how
{\nauty} and {\traces} are implemented, with emphasis on how they 
differ from differ.
In Section~\ref{s:performance}, we compare the performance of
{\nauty} and {\traces} with {\bliss}, {\saucy} and {\conauto} when applied
to a variety of families of graphs ranging from those traditionally
easy to the most difficult known.
Although none of the programs is the fastest in all cases, we
will see that {\nauty} is generally the fastest for small graphs
and some easier families, while {\traces} is better, sometimes in
dramatic fashion, for most of the difficult graph families.

\nicebreak
\section{Generic Algorithm}\label{s:generic}

In this section, we give formal definitions of colourings (partitions),
invariants, and group actions.  We then define the search tree which
is at the heart of most recent graph isomorphism algorithms and
explain how it enables computation of automorphism groups and
canonical forms.
This section is intended to be a self-contained introduction to the
overall strategy and does not contain new features.

Let $\G=\G_n$ denote the set of graphs with vertex set $V=\{1,2,\ldots,n\}$.

\subsection{Colourings}\label{ss:colourings}

A \textit{colouring} of $V$ (or of $G\in\G$) is a surjective 
function $\pi$ from $V$
onto $\{1,2,\ldots,k\}$ for some~$k$.
The number of colours, i.e. $k$, is denoted by~$\abs\pi$.
A \textit{cell} of $\pi$ is the set of vertices with
some given colour; that is, $\pi^{-1}(j)$ for some~$j$ with
$1\le j\le \abs\pi$.
A~\textit{discrete colouring} is a colouring in which each cell is a singleton,
in which case $\abs\pi=n$.
Note that a discrete colouring is a permutation of~$V$.

If $\pi,\pi'$ are colourings, then $\pi'$ is \textit{finer than or equal to} $\pi$,
written $\pi'\preceq\pi$, if 
$\pi(v)<\pi(w)\Rightarrow\pi'(v)<\pi'(w)$ for all $v,w\in V$.
(This implies that each cell of $\pi'$ is a subset of a cell of~$\pi$,
but the converse is not true.)

Since a colouring partitions $V$ into cells, it is frequently called a
\textit{partition}.  However, note that the colours come in a
particular order and this
matters when defining concepts like ``finer''.

A pair $(G,\pi)$, where $\pi$ is a colouring of $G$, is called a
\textit{coloured graph}.

\subsection{Group actions and isomorphisms}\label{ss:groupacts}

Let $S_n$ denote the symmetric group acting on $V$.  We indicate the
action of elements of $S_n$ by exponentiation.  That is, for $v\in V$
and $g\in S_n$, $v^g$ is the image of~$v$ under~$g$.
The same notation indicates the induced action on complex structures
derived from $V$; in particular:
\begin{enumerate}
\item[(a)] If $W\subseteq V$, then $W^g=\{w^g : w\in W\}$, and similarly
  for sequences.
\item[(b)] If $G\in\G$, then $G^g\in\G$ has
  $v^g$ adjacent to $w^g$ exactly when $v$ and $w$
  are adjacent in~$G$. As a special case, a discrete colouring $\pi$ 
  is a permutation on $V$ so we can write $G^\pi$.
\item[(c)] If $\pi$ is a colouring of $V$, then $\pi^g$ is the colouring
 with $\pi^g(v)=\pi(v^g)$ for each $v\in V$.
\item[(d)] If $(G,\pi)$ is a coloured graph, then
  $(G,\pi)^g=(G^g,\pi^g)$.
\end{enumerate}

Two coloured graphs $(G,\pi), (G',\pi')$ are \textit{isomorphic} if
there is $g\in S_n$ such that $(G',\pi')=(G,\pi)^g$,
in which case we write $(G,\pi)\isom (G',\pi')$.
Such a $g$ is called an \textit{isomorphism}.
The \textit{automorphism group} $\Aut(G,\pi)$
is the group of isomorphisms of the coloured graph $(G,\pi)$ to
itself; that is,
\[  \Aut(G,\pi) = \{ g\in S_n : (G,\pi)^g=(G,\pi)\}. \]
A \textit{canonical form} is a function
\[  C : \G\times\varPi \to \G\times\varPi \]
such that, for all $G\in\G$, $\pi\in\varPi$ and $g\in S_n$,
\begin{enumerate}
\item[(C1)] $C(G,\pi) \isom (G,\pi)$,
\item[(C2)] $C(G^g,\pi^g) = C(G,\pi)$.
\end{enumerate}
In other words, it assigns to each coloured graph an isomorphic
coloured graph that is a unique representative of its isomorphism
class.  It follows from the definition that $(G,\pi)\isom (G',\pi')
\Leftrightarrow C(G,\pi)=C(G',\pi')$.

Property (C2) is an important property that must be satisfied by 
many functions we define.  It says that if the elements of $V$ 
appearing in
the inputs to the function are renamed in some manner, the
elements of $V$ appearing in the function value are renamed
in the same manner.  We call this \textit{label-invariance}.

\subsection{Search tree}\label{ss:searchtree}

Now we define a rooted tree whose nodes correspond to sequences
of vertices, with the empty sequence at the root of the tree.
The sequences become longer as we move down the tree.
Each sequence corresponds to a colouring of the graph obtained
by giving the vertices in the sequence unique colours then
inferring in a controlled fashion a colouring of the other vertices.
Leaves of the tree correspond to sequences for which the
derived colouring is discrete.

To formally define the tree, we first define a ``refinement function''
that specifies the colouring that corresponds to a sequence.
Let $V^*$ denote the set of finite sequences of vertices.
For $\nu\in V^*$, $\abs\nu$ denotes the number of components of~$\nu$.
If $\nu=(v_1,\ldots,v_k)\in V^*$ and $w\in V$, then $\nu\cat w$
denotes $(v_1,\ldots,v_k,w)$.
Furthermore, for $0\le s\le k$, $[\nu]_s=(v_1,\ldots,v_s)$.
The ordering $\le$ on finite sequences is the lexicographic order: 
If $\nu=(v_1,\ldots,v_k)$ and $\nu'=(v'_1,\ldots,v'_\ell)$, then
$\nu\le\nu'$ if $\nu$ is a prefix of $\nu'$ or there is some
$j\le\min\{k,\ell\}$ such that $v_i=v'_i$ for $i<j$ and $v_j<v'_j$.

A \textit{refinement function} is a function
\[ R : \G\times\varPi\times V^*  \to \varPi \]
such that for any $G\in\G$, $\pi\in\varPi$ and $\nu\in V^*$,
\begin{enumerate}
\item[(R1)] $R(G,\pi,\nu) \preceq\pi$;
\item[(R2)] if $v\in \nu$, then $\{v\}$ is a cell of $R(G,\pi,\nu)$;
\item[(R3)] for any $g\in S_n$, we have
   $R(G^g,\pi^g,\nu^g)=R(G,\pi,\nu)^g$.
\end{enumerate}

To complete the definition of the tree, we need to specify what are
the children of each node.  We do this by choosing one non-singleton
cell of the colouring, called the \textit{target cell}, and appending
an element of it to the sequence.

A \textit{target cell selector} chooses a non-singleton
cell of a colouring, if there is one.  Formally, it is a function
\[  T : \G\times\varPi\times V^* \to 2^V \]
such that for any $\pi_0\in\varPi$, $G\in\G$ and $\nu\in V^*$,
\begin{enumerate}
\item[(T1)] if $R(G,\pi_0,\nu)$ is discrete, then $T(G,\pi_0,\nu)=\emptyset$;
\item[(T2)] if $R(G,\pi_0,\nu)$ is not discrete, then $T(G,\pi_0,\nu)$ is
  a non-singleton cell of $R(G,\pi_0,\nu)$;
\item[(T3)] for any $g\in S_n$, we have
   $T(G^g,\pi^g,\nu^g)=T(G,\pi,\nu)^g$.
\end{enumerate}

\bigskip

Now we can define the \textit{search tree}
$\T(G,\pi_0)$ depending on an initially-specified coloured graph $(G,\pi_0)$.
The nodes of the tree are elements of $V^*$. \\[0.5ex]
(a) The root of $\T(G,\pi_0)$ is the empty sequence $\nullseq$. \\
(b) If $\nu$ is a node of $\T(G,\pi_0)$, let
  $W=T(G,\pi_0,\nu)$. Then the children of $\pi$ are
  \[ \{ \nu\cat w : w\in W \}. \]
This definition implies by (T2) that a node $\nu$ of $\T(G,\pi_0)$ is
a leaf iff $R(G,\pi_0,\nu)$ is discrete.

For any node $\nu$ of $\T(G,\pi_0)$, define $\T(G,\pi_0,\nu)$ to be the
subtree of $\T(G,\pi_0)$ consisting of $\nu$ and all its descendants.
The following lemmas are easily derived using induction
from the definition of the search
tree and the properties of the functions $R$, $T$ and~$I$.

\begin{lem}\label{l:invartrees}
   For any $G\in\G, \pi_0\in\varPi, g\in S_n$, we have
   $\T(G^g,\pi_0^g)= \T(G,\pi_0)^g$. 
\end{lem}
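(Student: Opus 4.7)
The plan is to prove this by induction on the depth of nodes in the tree, using the label-invariance properties (R3) and (T3) as the only real ingredients. Specifically, I will show that for every $k \ge 0$, a sequence $\nu$ is a node of $\T(G,\pi_0)$ at depth $k$ if and only if $\nu^g$ is a node of $\T(G^g,\pi_0^g)$ at depth $k$, and that the children relation is preserved.

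For the base case ($k=0$), both trees have the empty sequence $\nullseq$ as their root, and $\nullseq^g = \nullseq$, so the roots match. For the inductive step, suppose $\nu$ is a node of $\T(G,\pi_0)$ at depth $k$ if and only if $\nu^g$ is a node of $\T(G^g,\pi_0^g)$. By the definition of the search tree, the children of $\nu$ in $\T(G,\pi_0)$ are $\{\nu \cat w : w \in T(G,\pi_0,\nu)\}$, and the children of $\nu^g$ in $\T(G^g,\pi_0^g)$ are $\{\nu^g \cat w' : w' \in T(G^g,\pi_0^g,\nu^g)\}$. By (T3), $T(G^g,\pi_0^g,\nu^g) = T(G,\pi_0,\nu)^g$, so the latter set is exactly $\{\nu^g \cat w^g : w \in T(G,\pi_0,\nu)\} = \{(\nu \cat w)^g : w \in T(G,\pi_0,\nu)\}$, which is precisely the image under $g$ of the children of $\nu$. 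This establishes both directions of the inductive claim simultaneously, since the construction is symmetric: applying the argument with $g^{-1}$ in the role of $g$ and $(G^g,\pi_0^g)$ in the role of $(G,\pi_0)$ recovers the converse.

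A small point worth flagging is that one must check leaves correspond to leaves, i.e.\ that the recursion terminates at the same places in the two trees. This follows from (T1) combined with (R3): $R(G^g,\pi_0^g,\nu^g) = R(G,\pi_0,\nu)^g$, and a colouring is discrete iff its relabelling by $g$ is discrete, so by (T1) and (T2) the target cell is empty on one side exactly when it is empty on the other.

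There is no real obstacle here; the entire content of the lemma is that the label-invariance axioms (R3) and (T3) were chosen exactly so that the inductively built tree commutes with the $S_n$-action. The main thing to be careful about is bookkeeping the distinction between $\nu^g$ (image of the sequence, which is a node of the transformed tree) and elements of $T(G,\pi_0,\nu)^g$, and to invoke the right property at the right step.
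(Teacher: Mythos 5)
Your proof is correct and follows essentially the same route as the paper's: induction on the depth of a node, using (T3) to show the children sets correspond under $g$, with the reverse inclusion obtained by applying the argument to $g^{-1}$. The paper's version is terser (it omits the explicit check that leaves map to leaves, which your (R3)/(T1) remark covers), but the substance is identical.
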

\begin{proof}
  Let $\nu=(v_1,\ldots,v_k)$ be a node of $\T(G,\pi_0)$. It is easily
  proved by induction on $s$ that 
  $[\nu^g]_s$ is a node of $\T(G^g,\pi_0^g)$ for $0\le s\le k$.
  Therefore, $\T(G,\pi_0)^g\subseteq \T(G^g,\pi_0^g)$.  The reverse
  inclusion follows on considering $g^{-1}$ instead, so the lemma
  is proved.
\end{proof}

\begin{cor}\label{c:subtrees}
   Let $\nu$ be a node of $\T(G,\pi_0)$ and let $g\in\Aut(G,\pi_0)$.
   Then $\nu^g$ is a node of $\T(G,\pi_0)$ and
   $\T(G,\pi_0,\nu^g)=\T(G,\pi_0,\nu)^g$. 
\end{cor}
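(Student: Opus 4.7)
The plan is to derive the corollary as a direct specialization of Lemma~\ref{l:invartrees} to the automorphism setting, together with the observation that every definition used to build $\T(G,\pi_0)$ is label-invariant, so the same argument applies verbatim to subtrees.

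First, I would unpack what $g\in\Aut(G,\pi_0)$ gives us: by the definition in Section~\ref{ss:groupacts}, $(G^g,\pi_0^g)=(G,\pi_0)$, and hence $\T(G^g,\pi_0^g)=\T(G,\pi_0)$. Applying Lemma~\ref{l:invartrees} to this same $g$ gives $\T(G^g,\pi_0^g)=\T(G,\pi_0)^g$, so combining the two equalities yields $\T(G,\pi_0)=\T(G,\pi_0)^g$. In particular, if $\nu$ is a node of $\T(G,\pi_0)$, then $\nu^g$ is also a node of $\T(G,\pi_0)$, which is the first half of the claim.

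For the subtree identity, I would argue by induction on the height of $\T(G,\pi_0,\nu)$. The base case is when $\nu$ is a leaf: then $R(G,\pi_0,\nu)$ is discrete by (T1)–(T2), and by (R3) and the fact that $(G^g,\pi_0^g)=(G,\pi_0)$, $R(G,\pi_0,\nu^g)=R(G,\pi_0,\nu)^g$ is also discrete, so $\nu^g$ is a leaf and both subtrees equal $\{\nu\}^g=\{\nu^g\}$. For the inductive step, the children of $\nu$ in $\T(G,\pi_0)$ are $\{\nu\cat w : w\in T(G,\pi_0,\nu)\}$, and by (T3) together with $(G^g,\pi_0^g)=(G,\pi_0)$ we get $T(G,\pi_0,\nu^g)=T(G,\pi_0,\nu)^g$. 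Hence the children of $\nu^g$ are exactly $\{(\nu\cat w)^g : w\in T(G,\pi_0,\nu)\}$, i.e.\ the $g$-images of the children of $\nu$. Applying the inductive hypothesis to each child subtree and taking the union with $\{\nu^g\}$ gives $\T(G,\pi_0,\nu^g)=\T(G,\pi_0,\nu)^g$.

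I do not anticipate any real obstacle: the entire content of the corollary is already encoded in the label-invariance axioms (R3) and (T3), which are precisely what make the inductive step go through. The only subtlety worth flagging is that one must be careful to apply the label-invariance axioms with the automorphism $g$ (so that the graph and initial colouring stay fixed) rather than with a generic permutation, which is exactly why passing through Lemma~\ref{l:invartrees} is the cleanest route.
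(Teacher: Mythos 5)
Your proposal is correct and follows essentially the same route as the paper: the paper's proof is exactly your first paragraph, namely specializing Lemma~\ref{l:invartrees} via the observation that $(G^g,\pi_0^g)=(G,\pi_0)$ when $g\in\Aut(G,\pi_0)$. Your inductive argument for the subtree identity merely spells out what the paper treats as immediate from the fact that $\nu\mapsto\nu^g$ is then an automorphism of the rooted tree $\T(G,\pi_0)$, so it adds detail but not a different idea.
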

\begin{proof}
  This follows from Lemma~\ref{l:invartrees} on noticing that
  $(G,\pi_0)^g=(G,\pi)$ if $g\in\Aut(G,\pi_0)$.
\end{proof}

\begin{lem}\label{l:stabilizer}
   Let $\nu$ be a node of $\T(G,\pi_0)$ and let $\pi=R(G,\pi_0,\nu)$.
   Then $\Aut(G,\pi)$ is the point-wise stabilizer of $\nu$ in
   $\Aut(G,\pi_0)$.
\end{lem}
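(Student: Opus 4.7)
The plan is to prove set equality by showing two inclusions, using the three refinement axioms (R1), (R2), (R3) in turn.

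\medskip

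For the forward inclusion, suppose $g\in\Aut(G,\pi)$; I would show both $g\in\Aut(G,\pi_0)$ and $v^g=v$ for every component $v$ of $\nu$. Since $\pi^g=\pi$ we have $\pi(v^g)=\pi(v)$ for all $v\in V$. By (R1), $\pi\preceq\pi_0$, and the definition of $\preceq$ rules out $\pi_0(v^g)\ne\pi_0(v)$ (either strict inequality would force a strict inequality in $\pi$). Hence $\pi_0^g=\pi_0$, and since $G^g=G$ already, $g\in\Aut(G,\pi_0)$. For pointwise fixing, (R2) tells us that each $v$ occurring in $\nu$ is a singleton cell of $\pi$; the equation $\pi(v^g)=\pi(v)$ together with the cell being a singleton forces $v^g=v$.

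\medskip

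For the reverse inclusion, suppose $g\in\Aut(G,\pi_0)$ fixes $\nu$ pointwise. Then $G^g=G$, $\pi_0^g=\pi_0$, and $\nu^g=\nu$. Applying (R3),
\[ R(G,\pi_0,\nu) = R(G^g,\pi_0^g,\nu^g) = R(G,\pi_0,\nu)^g, \]
i.e.\ $\pi^g=\pi$. Combined with $G^g=G$, this gives $(G,\pi)^g=(G,\pi)$, so $g\in\Aut(G,\pi)$.

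\medskip

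I do not anticipate a real obstacle: each of the three axioms feeds directly into one of the three set-theoretic facts that must be verified. The only mildly subtle point is the first one — deducing $\pi_0^g=\pi_0$ from $\pi^g=\pi$ — because $\preceq$ is defined via the ordering of colours rather than merely as partition refinement, so one must argue from the implication $\pi_0(v)<\pi_0(w)\Rightarrow\pi(v)<\pi(w)$ rather than just appealing to cells. Everything else is a direct application of the corresponding axiom.
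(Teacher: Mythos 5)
Your proof is correct and follows essentially the same route as the paper's: (R2) gives pointwise stabilization of $\nu$, and (R3) applied with $G^g=G$, $\pi_0^g=\pi_0$, $\nu^g=\nu$ gives the converse. The only difference is that you explicitly verify $\Aut(G,\pi)\subseteq\Aut(G,\pi_0)$ via (R1) and the order-based definition of $\preceq$, a step the paper leaves implicit; that is a welcome bit of extra care, not a divergence in method.
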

\begin{proof}
   By condition (R2), every element of $\Aut(G,\pi)$ stabilizes~$\nu$.
   Conversely, suppose $g\in \Aut(G,\pi_0)$ stabilizes~$\nu$.  Then by (R3),
   $\pi^g = R(G,\pi_0,\nu)^g = R(G,\pi_0,\nu)=\pi$, so
   $g\in \Aut(G,\pi)$.
\end{proof}

\subsection{Automorphisms and canonical forms}\label{ss:automorphisms}

Now we describe how the search tree $\T(G,\pi_0)$, defined as
in the previous subsection, can be used to compute $\Aut(G,\pi_0)$
and a canonical form.

Let $\varOmega$ be some totally ordered set.  A \textit{node invariant}
is a function
\[  \phi : \G\times\varPi\times V^* \to \varOmega, \]
such that for any $\pi_0\in\varPi$, $G\in\G$,
  and distinct $\nu,\nu'\in\T(G,\pi_0)$,
\begin{enumerate}
\item[($\phi1$)] if $\abs\nu=\abs{\nu'}$
  and $\phi(G,\pi_0,\nu)<\phi(G,\pi_0,\nu')$, then for every leaf
  $\nu_1\in\T(G,\pi_0,\nu)$ and leaf $\nu'_1\in\T(G,\pi_0,\nu')$ we have
  $\phi(G,\pi_0,\nu_1) < \phi(G,\pi_0,\nu'_1)$;
\item[($\phi2$)] if $\pi=R(G,\pi_0,\nu)$ and $\pi'=R(G,\pi_0,\nu')$ are discrete, then
 $\phi(G,\pi_0,\nu)=\phi(G,\pi_0,\nu') \allowbreak\Leftrightarrow G^\pi=G^{\pi'}$
 (note that the last relation is equality, not isomorphism);
\item[($\phi3$)] for any $g\in S_n$, we have
   $\phi(G^g,\pi_0^g,\nu^g)=\phi(G,\pi_0,\nu)$.
\end{enumerate}

Say that leaves $\nu,\nu'$ are \textit{equivalent} if
$\phi(G,\pi_0,\nu)=\phi(G,\pi_0,\nu')$.  If this is the case, there is
a unique $g\in\Aut(G,\pi_0)$ such that $\nu^g=\nu'$, namely
$g=R(G,\pi_0,\nu') R(G,\pi_0,\nu)^{-1}$.
(Recall that $R(G,\pi_0,\nu)$ is a permutation if $\nu$ is a leaf.)

According to Corollary~\ref{c:subtrees}, if $\nu$ is a leaf of $\T(G,\pi_0)$,
then so is $\nu^g$ for any $g\in\Aut(G,\pi_0)$.
Moreover, by the properties of $\phi$ these leaves (over $g\in\Aut(G,\pi_0)$)
have the same value of $\phi$ and no other leaf has that value.
Consequently, for any leaf $\nu$,
\begin{align*}
   \Aut(G,\pi_0) &= \{ R(G,\pi_0,\nu') R(G,\pi_0,\nu)^{-1}  \\
    &{\qquad} : \nu' \text{ is a leaf of $\T(G,\pi_0)$ with }
     \phi(G,\pi_0,\nu')=\phi(G,\pi_0,\nu) \}.
\end{align*}

To define a canonical form, let
\[ 
 \phi^*(G,\pi_0) = \max\{ \phi(G,\pi_0,\nu) : \nu \text{ is a leaf of }\T(G,\pi_0)\},
\]
and let $\nu^*$ be any leaf of $\T(G,\pi_0)$ that achieves the maximum.
Now define $C(G,\pi_0) = (G,\pi_0)^{R(G,\pi_0,\nu^*)}$.
By the properties of $\phi$,
$C(G,\pi_0)$ thus defined is independent of the choice of~$\nu^*$.
In particular, we have:
\begin{lem}\label{l:iscanon}
  The function 
  \[  C : \G \times \varPi \to \G \times \varPi \]
  as just defined is a canonical form.
\end{lem}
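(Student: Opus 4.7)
The plan is to verify the two axioms (C1) and (C2) defining a canonical form. Axiom (C1) is immediate: by construction $C(G,\pi_0)=(G,\pi_0)^{R(G,\pi_0,\nu^*)}$ is the image of $(G,\pi_0)$ under a permutation in $S_n$, so $(G,\pi_0)\isom C(G,\pi_0)$ by the definition of isomorphism for coloured graphs.

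Axiom (C2) splits into two sub-statements: that the construction does not depend on the maximising leaf $\nu^*$ chosen within a single tree, and that it is invariant under replacing $(G,\pi_0)$ by $(G^g,\pi_0^g)$. For the first, if $\nu^*_1,\nu^*_2$ both achieve $\phi^*(G,\pi_0)$, the discussion preceding the lemma supplies a unique $h\in\Aut(G,\pi_0)$ with $(\nu^*_1)^h=\nu^*_2$ and $h=R(G,\pi_0,\nu^*_2)\,R(G,\pi_0,\nu^*_1)^{-1}$; then $R(G,\pi_0,\nu^*_2)=h\cdot R(G,\pi_0,\nu^*_1)$, and since $(G,\pi_0)^h=(G,\pi_0)$, the two candidate canonical forms coincide.

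For the second sub-statement I would use Lemma~\ref{l:invartrees} to identify $\T(G^g,\pi_0^g)$ with $\T(G,\pi_0)^g$, yielding a bijection of leaves $\nu\leftrightarrow\nu^g$. Property $(\phi3)$ makes $\phi$ constant along this bijection, so $\phi^*(G^g,\pi_0^g)=\phi^*(G,\pi_0)$ and maximisers correspond: if $\nu^*$ is a max-leaf of $\T(G,\pi_0)$, then $(\nu^*)^g$ is one for $\T(G^g,\pi_0^g)$. Writing $\sigma=R(G,\pi_0,\nu^*)$ and invoking $(R3)$ gives $R(G^g,\pi_0^g,(\nu^*)^g)=\sigma^g$, so $C(G^g,\pi_0^g)=(G^g,\pi_0^g)^{\sigma^g}=((G,\pi_0)^g)^{\sigma^g}$. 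The final step is to verify the identity $((G,\pi_0)^g)^{\sigma^g}=(G,\pi_0)^{\sigma}$ directly from the definitions of the $S_n$-actions on graphs, on colourings, and on sequences; this identifies the right-hand side with $C(G,\pi_0)$. That concluding identity is the main, though routine, bookkeeping obstacle; the rest of the argument is a direct chase through the correspondences supplied by Lemma~\ref{l:invartrees}, $(\phi3)$, and $(R3)$.
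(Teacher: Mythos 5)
Your proof is correct and supplies exactly the verification the paper leaves implicit: the paper states Lemma~\ref{l:iscanon} without an explicit proof, asserting only that independence of the choice of $\nu^*$ follows from the properties of $\phi$, and your argument --- (C1) trivially, well-definedness via the automorphism $h$ joining equivalent maximal leaves, and label-invariance via Lemma~\ref{l:invartrees}, $(\phi3)$ and (R3) --- is the intended one. One caution on the final bookkeeping identity $((G,\pi_0)^g)^{\sigma^g}=(G,\pi_0)^{\sigma}$: it holds under the standard right-action reading $\pi^g(v)=\pi(v^{g^{-1}})$ (so that $\sigma^g=g^{-1}\sigma$ and $(G^g)^{\sigma^g}=G^{gg^{-1}\sigma}=G^{\sigma}$), whereas taking the paper's stated formula $\pi^g(v)=\pi(v^g)$ completely literally would give $\sigma^g=g\sigma$ and hence $(G^g)^{\sigma^g}=G^{g^2\sigma}$, so carry out that verification with the consistent right-action convention.
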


These observations provide an algorithm for computing $\Aut(G,\pi_0)$
and $C(G,\pi_0)$, once we have defined $T$ and $\phi$.
In practice it is not of much use, since the search tree can be extremely
large and the group is found element by element rather than as a set
of generators.  However, in practice we can
dramatically improve the performance by judicious pruning of the tree.

When we refer to a \textit{leaf} of $\T(G,\pi_0)$, we always mean a node
$\nu$ of $\T(G,\pi_0)$ for which $R(G,\pi_0,\nu)$ is discrete, even if our
pruning of the tree results in additional nodes having no children.

We define three types of pruning operation on the search tree.
\begin{enumerate} 
 \item[(A)] Suppose $\nu,\nu'$ are distinct nodes of $\T(G,\pi_0)$ with
 $\abs\nu=\abs\nu'$ and $\phi(G,\pi_0,\nu)>\phi(G,\pi_0,\nu')$.
 \textit{Operation $P_A(\nu,\nu')$} is to remove $\T(G,\pi_0,\nu')$. 
 \item[(B)] Suppose $\nu,\nu'$ are distinct nodes of $\T(G,\pi_0)$ with
 $\abs\nu=\abs\nu'$ and $\phi(G,\pi_0,\nu)\ne\phi(G,\pi_0,\nu')$.
 \textit{Operation $P_B(\nu,\nu')$} is to remove $\T(G,\pi_0,\nu')$.
 \item[(C)] Suppose $g\in\Aut(G,\pi_0)$ and suppose $\nu<\nu'$ are
 nodes of $\T(G,\pi_0)$ such that $\nu^g=\nu'$.
 \textit{Operation $P_C(\nu,g)$} is to remove $\T(G,\pi_0,\nu')$.
\end{enumerate}

\begin{thm}\label{t:pruning}
  Consider any $G\in\G$ and $\pi_0\in\varPi$.
  \begin{enumerate} 
  \item[(a)]  Suppose any sequence of operations of the form
    $P_A(\nu,\nu')$ or $P_C(\nu,g)$ are performed.
    Then there remains at least one leaf $\nu_1$ with
     $\phi(G,\pi_0,\nu_1)=\phi^*(G,\pi_0)$.
  \item[(b)] Let $\nu_0$ be some fixed leaf of $\T(G,\pi_0)$.
     Suppose any sequence of operations of the form $P_B(\nu,\nu')$
    or $P_C(\nu'',g)$ are performed, where $\phi(G,\pi_0,\nu'')
    \ne\phi(G,\pi_0,[\nu_0]_{\card{\nu''}})$.
     Let $g_1,\ldots,g_k$ be the automorphisms used in the
     operations $P_C$ that were performed, and let
     \[ A = \{ g \in \Aut(G,\pi_0) : \nu_0^g \text{ is a remaining leaf\/}\}. \]
     Then $\Aut(G,\pi_0)$ is generated by $\{g_1,\ldots,g_k\}\cup A$. 
  \end{enumerate}
\end{thm}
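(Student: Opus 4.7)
For part (a), my plan is to fix the lexicographically smallest leaf $\nu^*$ of $\T(G,\pi_0)$ achieving $\phi^*(G,\pi_0)$ and show, by induction on the number of pruning operations, that $\nu^*$ is never removed. In the $P_A(\nu,\nu')$ case with $\phi(G,\pi_0,\nu)>\phi(G,\pi_0,\nu')$, property $(\phi 1)$ forces every leaf of $\T(G,\pi_0,\nu')$ to have $\phi$-value strictly less than every leaf of $\T(G,\pi_0,\nu)$; so if $\nu^*$ were in $\T(G,\pi_0,\nu')$, any descendant leaf of $\nu$ would have $\phi$-value exceeding $\phi(G,\pi_0,\nu^*)=\phi^*$, which is impossible. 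In the $P_C(\nu,g)$ case with $\nu<\nu^g$, I would suppose $\nu^*\in\T(G,\pi_0,\nu^g)$ and examine $\mu=(\nu^*)^{g^{-1}}$: by Corollary~\ref{c:subtrees} and $(\phi 3)$, $\mu$ is a leaf of $\T(G,\pi_0)$ with $\phi(G,\pi_0,\mu)=\phi^*$. Since $[\nu^*]_{|\nu|}=\nu^g$, taking $g^{-1}$ coordinate-by-coordinate gives $[\mu]_{|\nu|}=\nu$, so the lexicographic inequality $\nu<\nu^g$ transfers to $\mu<\nu^*$, contradicting the minimality of $\nu^*$.

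For part (b), I plan to proceed by induction on the number of operations, maintaining the invariant $\Aut(G,\pi_0)=\langle\{g_1,\ldots,g_i\}\cup A_i\rangle$, where $A_i$ denotes the current analogue of $A$. The base case holds because, by Corollary~\ref{c:subtrees} and $(\phi 3)$, $\nu_0^h$ is a leaf with $\phi(G,\pi_0,\nu_0^h)=\phi(G,\pi_0,\nu_0)$ for every $h\in\Aut(G,\pi_0)$, so $A_0=\Aut(G,\pi_0)$. The heart of the inductive step is the observation that the restriction $\phi(G,\pi_0,\nu'')\ne\phi(G,\pi_0,[\nu_0]_{|\nu''|})$ prevents $P_C(\nu'',g)$ from removing any $\nu_0$-equivalent leaf: if $\nu_0^h\in\T(G,\pi_0,(\nu'')^g)$, then $(\nu'')^g$ is a prefix of $\nu_0^h$, and two applications of $(\phi 3)$ (to $g$ and to $h$, each in $\Aut(G,\pi_0)$) give
\[
\phi(G,\pi_0,\nu'')=\phi(G,\pi_0,(\nu'')^g)=\phi(G,\pi_0,[\nu_0^h]_{|\nu''|})=\phi(G,\pi_0,[\nu_0]_{|\nu''|}),
\]
a contradiction. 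Reading the proviso in the natural way so that it also governs $P_B$, i.e.\ the pruned node $\nu'$ of $P_B(\nu,\nu')$ satisfies $\phi(\nu')\ne\phi([\nu_0]_{|\nu'|})$, the analogous calculation rules out $P_B$ from removing a $\nu_0$-equivalent. Hence $A_i$ is preserved, the newly recorded $g$ already lies in $\Aut(G,\pi_0)$, and the invariant is maintained.

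The step I expect to be most delicate is the $P_C$ case in part (a): transferring the strict inequality $\nu<\nu^g$ through the coordinate-wise map $v\mapsto v^{g^{-1}}$ requires identifying the first index $j$ at which $\nu$ and $\nu^g$ differ and verifying that $\mu$ and $\nu^*$ agree up to index $j-1$ and differ at index $j$ in the same direction. In part (b), the main subtlety is the precise scope of the theorem's proviso; with the natural reading that it governs both operation types, the plan above works cleanly, while a more permissive reading of $P_B$ would force a finer coset-theoretic analysis to express any $\nu_0$-equivalent lost to $P_B$ as a product of the recorded $g_i$ with surviving elements of $A$.
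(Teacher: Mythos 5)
Your part (a) is correct and is the same argument the paper uses (the paper compresses it to one sentence: the lexicographically least leaf attaining $\phi^*$ is never removed); your expansion of the $P_A$ and $P_C$ cases, including the careful transfer of the lexicographic inequality $\nu<\nu^g$ through the coordinatewise action of $g^{-1}$, is sound.

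Part (b), however, has a genuine gap. You resolve the ambiguous proviso by reading it as restricting \emph{both} operation types, so that in particular $P_C(\nu'',g)$ is only permitted when $\phi(G,\pi_0,\nu'')\ne\phi(G,\pi_0,[\nu_0]_{\card{\nu''}})$. Under that reading your own calculation shows that \emph{no} allowed operation ever removes a leaf equivalent to $\nu_0$, hence $A=\Aut(G,\pi_0)$ throughout and the conclusion is vacuous: the generators $g_1,\ldots,g_k$ play no role. That is not the theorem being claimed. The intended reading --- the one the paper proves and the one needed for the applications in Sections 3.5 and 3.6 --- is that the proviso constrains only the $P_B$ prunings (the subtree removed by $P_B$ must contain no leaf equivalent to $\nu_0$), while $P_C$ is \emph{unrestricted}: automorphism pruning is applied precisely to discard subtrees that do contain leaves equivalent to $\nu_0$, and the entire content of (b) is that the recorded automorphisms $g_i$ compensate for the equivalent leaves so destroyed. (The third reading, in which $P_B$ is unrestricted, makes the statement false: a $P_B$ operation could delete every leaf $\nu_0^g$ with $g\ne 1$ without recording any automorphism.) The missing idea is the paper's descent argument: for arbitrary $g\in\Aut(G,\pi_0)$, if the leaf $\nu_0^g$ has been removed, this must have been by some $P_C(\nu'',g_i)$ with $(\nu'')^{g_i}=[\nu_0^g]_{\card{\nu''}}$ and $\nu''<(\nu'')^{g_i}$; then $\nu_0^{gg_i^{-1}}$ is an equivalent leaf lying below $\nu''$, hence lexicographically smaller than $\nu_0^g$. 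Iterating, and using the fact that the lexicographically least leaf equivalent to $\nu_0$ survives, one reaches $h\in\langle g_1,\ldots,g_k\rangle$ with $\nu_0^{gh}$ a remaining leaf, whence $gh\in A$ and $g\in\langle\{g_1,\ldots,g_k\}\cup A\rangle$. Your closing remark gestures at ``a finer coset-theoretic analysis'' but attaches it to $P_B$, where no such recovery is possible; it is for $P_C$ that this analysis is both necessary and available, and without it the theorem is not proved.
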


\begin{proof}
 To prove claim (a), note that the lexicographically least leaf $\nu_1$
 with $\phi(G,\pi_0,\nu_1)=\phi^*(G,\pi_0)$ is never removed.
 
 For claim (b), note that the lexicographically least leaf $\nu_1$ equivalent
 to~$\nu_0$ is not removed by the allowed operations.
 Choose an arbitrary $g\in\Aut(G,\pi_0)$.  
 By Corollary~\ref{c:subtrees}, $\nu_0^g$ is a leaf of $\T(G,\pi_0)$.
 If it has been removed, that
 must have been by some $P_C(\nu'',g_i)$ with $\nu''<\nu^g$,
 since operation $P_B(\nu,\nu')$ only removes leaves inequivalent
 to~$\nu_0$.
 Note that $\nu_0^{gg_i^{-1}}$ is a leaf descended from $\nu''$ and
 $\nu_0^{gg_i^{-1}}<\nu_0^g$.
 If $\nu_0^{gg_i^{-1}}$, has been
 removed, that must have been due to some $P_C(\nu''',g_j)$ with
 $\nu'''<\nu_0^{gg_i^{-1}}$, so consider the leaf
 $\nu_0^{gg_i^{-1}g_j^{-1}}<\nu_0^{gg_i^{-1}}$.
 Continuing in this way we must eventually find a leaf that has not been
 removed, since the leaf $\nu_1$ is still present.
 That is, there is some $h\in\langle g_1,\ldots,g_k\rangle$ such that
 leaf $\nu_0^{gh}$ has not been removed.  This proves $g$ belongs to
 the group generated by $\{g_1,\ldots,g_k\}\cup A$, as we wished to prove.
\end{proof}

The theorem leaves unspecified where the automorphisms 
for $P_C(\nu,g)$ operations come from.  They might be provided in
advance, detected by noticing two leaves are equivalent, or 
otherwise.  This is discussed in the following section.

\nicebreak
\section{Implementation strategies}\label{s:implementation}

In this section, we describe two implementations of the generic
algorithm, which are distributed together as
{\nauty} and {\traces}~\citep{nautypage}.

\subsection{Refinement}\label{ss:refinement}

Let $G\in\G$.
A colouring of $G$ is called \textit{equitable}
if any two vertices of the same colour are adjacent to the
same number of vertices of each colour.\footnote{Unfortunately,
``equitable colouring'' also has another meaning in graph theory.
More commonly, our concept is called an \textit{equitable partition}.}

It is well known that for every colouring $\pi$ there is a
coarsest equitable colouring $\pi'$ such that $\pi'\preceq\pi$,
and that $\pi'$ is unique up to the order of its cells.
An algorithm for computing $\pi'$ appears in~\citet{PGI1}.
We summarize it in Algorithm~\ref{a:refinement}.

\begin{algorithm}[ht]
\KwData{$\pi$ is the input colouring and $\alpha$ is a sequence of some cells of $\pi$}
\KwResult{the final value of $\pi$ is the output colouring}
\smallskip
 \While{$\alpha$ is not empty {\bf and} $\pi$ is not discrete}{
    Remove some element $W$ from $\alpha$.\\
    \For{each cell $X$ of $\pi$}{
      Let $X_1,\ldots,X_k$ be the fragments of $X$ distinguished
      according\\ to \Indp the number of edges from each vertex to $W$.\\
      \Indm Replace $X$ by $X_1,\ldots,X_k$ in $\pi$.\\
      \eIf{$X\in\alpha$}{
         Replace $X$ by $X_1,\ldots,X_k$ in $\alpha$.
      }{
        Add all but one of the largest of $X_1,\ldots,X_k$ to $\alpha$.
      }
    }
 }
\caption{\strut Refinement algorithm $F(G,\pi,\alpha)$ \label{a:refinement}}
\end{algorithm}

\begin{figure}[ht]
\centering
\includegraphics[scale=0.20]{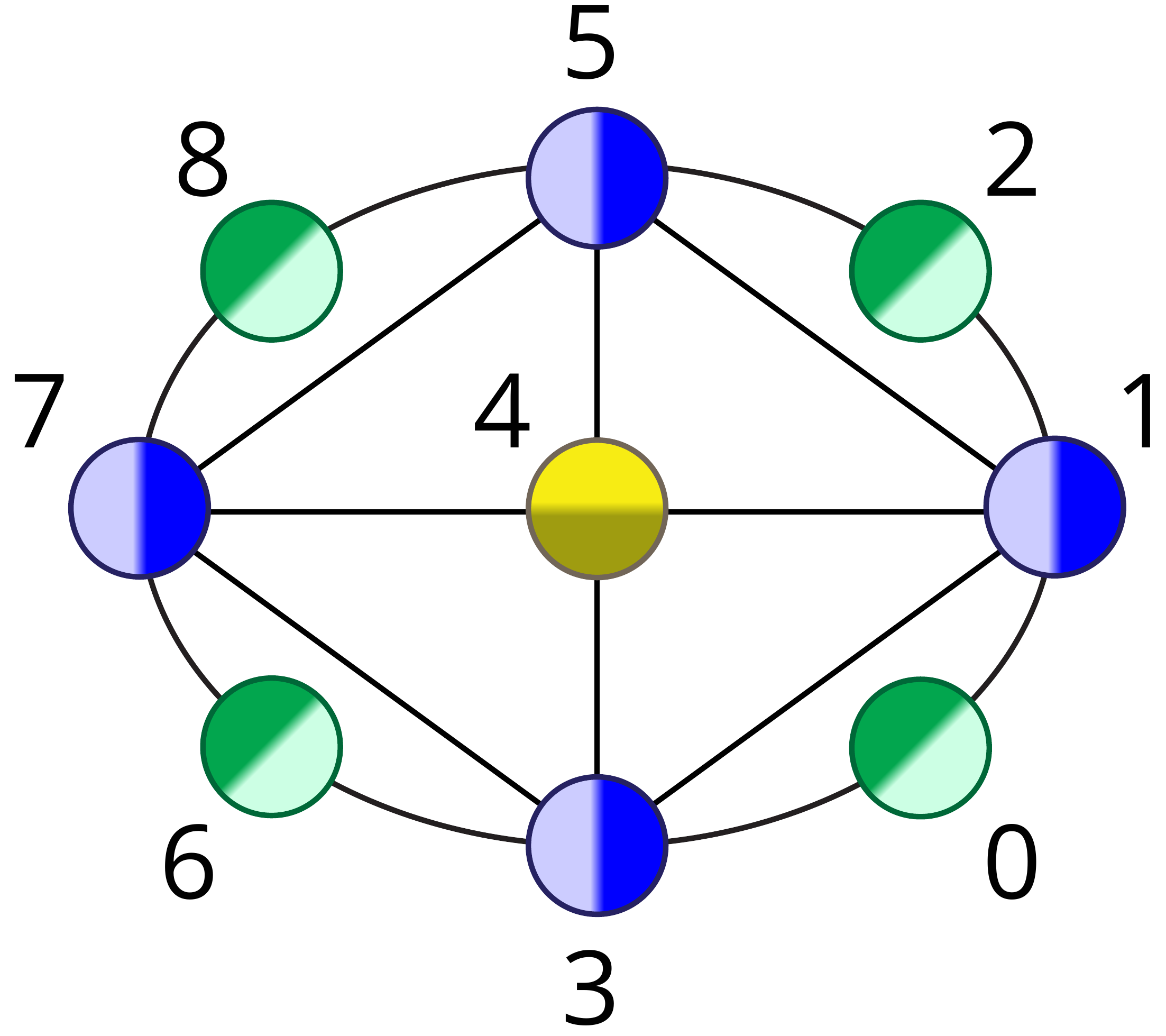}
\caption{Example of an equitable colouring\label{f:equitable}}
\end{figure}

Let $F(G,\pi,\alpha)$ be the function defined by Algorithm~\ref{a:refinement},
which we assume to be implemented in a label-invariant manner.  Now
define the  function
\[ I : \varPi\times V \to \varPi, \]
such that, if $v$ is a vertex in a non-singleton cell of $\pi$
and $\pi' = I(\pi,v)$, then for $w\in V$,
\[
   \pi'(w) = \begin{cases}  
                   \pi(w), & \text{if }\pi(w)<\pi(v)\text{ or }w=v; \\
                   \pi(w)+1, & \text{otherwise}.
               \end{cases}
\]
We see that $I(\pi,v)$ differs from $\pi$ in that a unique colour
has been given to vertex~$v$.
Now we can define a refinement function.
For a sequence of vertices $v_1,v_2,\ldots\,$, define
\begin{align*}
   R(G,\pi_0,\nullseq) &= F(G,\pi_0,\textit{a list of all the cells of $\pi_0$}), \\
   R(G,\pi_0,(v_1)) &= F\(G,I(R(G,\pi_0,\nullseq),v_1),(\{v_1\})\), \displaybreak[0]\\
  R(G,\pi_0,(v_1,v_2)) &= F\(G,I(R(G,\pi_0,(v_1)),v_2),(\{v_2\}))\), \\
  R(G,\pi_0,(v_1,v_2,v_3)) &= F\(G,I(R(G,\pi_0,(v_1,v_2)),v_3),(\{v_3\}))\),
\end{align*}
and so on.  According to Theorem~2.7 and Lemma~2.8 of~\citet{PGI1},
$R$ satisfies (R1)--(R3) and, moreover, $R(G,\pi_0,\nu)$ is equitable.

In practice most of the execution time of the whole algorithm is devoted
to refining colourings, so the implementation is critical. Since the
splitting of $X$ into fragments can be coded more efficiently if
$W$ is a singleton, we have found it advantageous to choose
singletons out of $\alpha$ in preference to larger cells.

\begin{figure}[ht]
\centering
\includegraphics[scale=0.18]{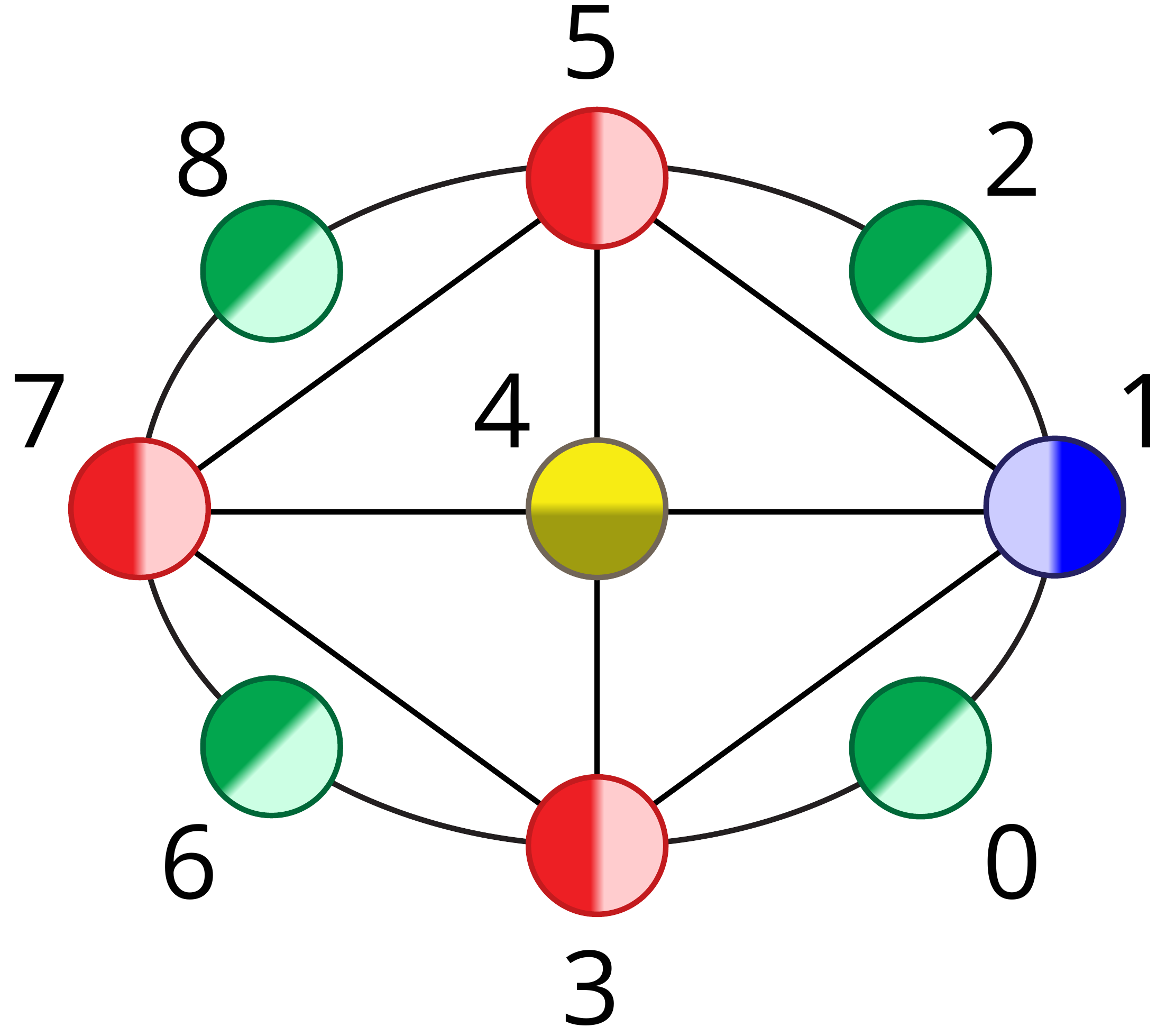}
\qquad\quad
\includegraphics[scale=0.18]{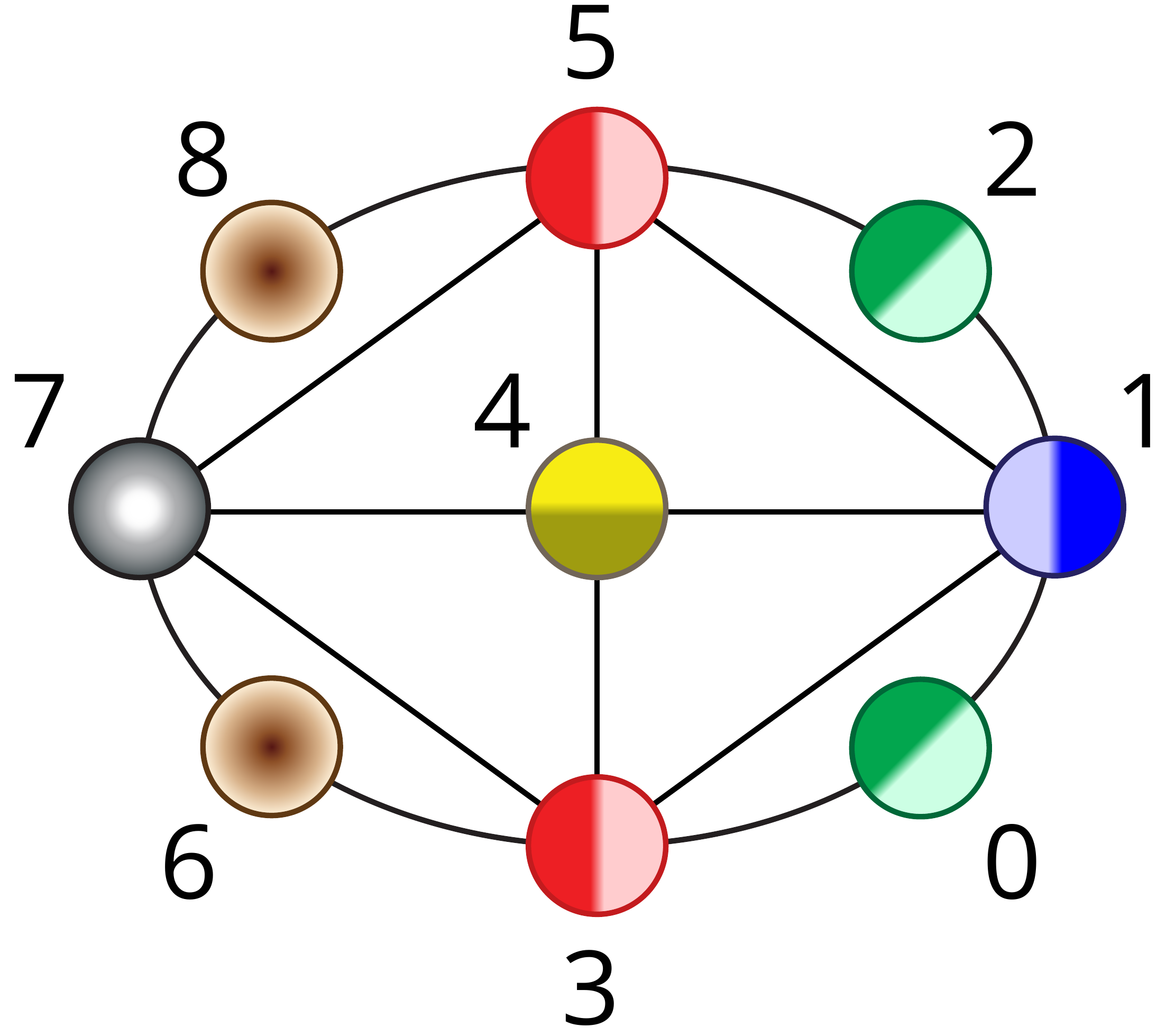}
\caption{Individualization of vertex 1 and subsequent refinement\label{f:refine}}
\end{figure}

\medskip

While the function $R$ defined above is sufficient for many graphs,
there are difficult classes (see Section~\ref{s:performance}) for
which it does not adequately separate inequivalent vertices.
Regular graphs are the simplest example, since the colouring with
only one colour is equitable.
A simple way of doing better is to count the number of triangles
incident to each vertex.  In choosing such a strategy, there is a
trade-off between the partitioning power and the cost. 
{\nauty} provides a small library of
stronger partitioning functions, some of them designed for
particular classes of difficult graphs.
The improvement in performance can be very dramatic.
On the other hand, choice of which partitioning function to employ
is left to the user and requires skill, which is not very
satisfactory.

{\traces} has a different approach to this problem, as we will see
in Section~\ref{ss:invariants}.

\subsection{Target cell selection}\label{ss:targetcell}

The choice of target cell has a significant effect on the shape of the
search tree, and thus on performance.  A small target cell may 
perhaps have a greater chance
of being an orbit of the group which fixes the current stabilizer sequence.
For this reason, \citet{PGI1} recommended using the first smallest
non-singleton cell.  However, \citet{Kocay} found (without realizing
it) that using the first non-singleton cell regardless of size was
better for most test cases, as confirmed by~\citet{Kirk}.
The current version of {\nauty} has two strategies.
One is to use the first non-singleton cell, and the other is to choose
the first cell which is joined in a non-trivial fashion to the largest
number of cells, where a non-trivial join between two cells means
that there is more than 0 edges and less than the maximum possible.

{\traces}, on the other hand prefers large target cells, as they tend
to make the tree less deep.  A strategy developed by experiment is
to use the first largest non-singleton cell that is a subset of the
target cell in the parent node.  If there are no such non-singleton cells,
the target cell in the grandparent node is used, and so on, with the
first largest cell altogether being the last possibility.

\subsection{Node invariants}\label{ss:invariants}

Information useful for computing node invariants can come from two
related sources.  At each node $\nu$ there is a colouring $R(G,\pi_0,\nu)$ and
we can use properties of this colouring such as the number and size of the
cells, as well as combinatorial properties of the coloured graph.
Another source is the intermediate states of the computation of a colouring
from that of the parent node, such as the order, position and size of the cells
produced by the refinement procedure and various counts of edges that are
determined during the computation.

If $f(\nu)$ is some function of this information, computed during the computation
of $R(G,\pi_0,\nu)$ and from the resulting coloured graph, the vector
$\(f([\nu]_0),f([\nu]_1),\ldots,f(\nu)\)$, with lexicographic ordering, satisfies
Conditions ($\phi$1) and ($\phi$3) for a node invariant.
If $\nu$ is a leaf, we can append $G^\pi$, where $\pi$ is the discrete
colouring $R(G,\pi_0,\nu)$, to the vector so as to satisfy ($\phi$2) as well.

In {\nauty}, the value of $f(\nu)$ is an integer, and the pruning rules are applied
as each node is computed.
{\traces} introduced a major improvement, defining each $f(\nu)$ as a vector 
itself.  The primary components of $f(\nu)$ are the sizes and positions of the
cells in the order that they are created by the refinement procedure. 
$\phi(G,\pi_0,\nu)$ thus becomes a vector of vectors, called the \textit{trace}
(and hence the name ``{\traces}'').
The advantage is that it often enables the comparison of $f(\nu)$ and
$f(\nu')$ to be made while the computation of $\nu'$ is only partly complete.
A limited form of this idea appeared in {\bliss} \citep{Bliss1}, and also appears in 
a recent version of {\saucy}~\citep{saucy2}.
For many difficult graph families, only a fraction of all refinement operations
need to be completed.
A practical consequence is that the stronger refinements used by {\nauty}
(see Section~\ref{ss:refinement}) are rarely needed.  This makes good performance
in {\traces} less dependent on user expertise than is the case with {\nauty}.

If $\pi$ is an equitable colouring of a graph $G$, we can define a
the \textit{quotient graph} $Q(G,\pi)$ as follows.  The vertices of $Q(G,\pi)$
are the cells of $\pi$, labelled with the cell number and size.  For any two
cells $C_1,C_2\in\pi$, possibly equal, the corresponding vertices of $Q(G,\pi)$
are joined by an edge labelled with the number of edges of $G$ between
$C_1$ and $C_2$.

The node invariant $\phi(G,\pi_0,\nu)$ computed by {\traces}, and also
by {\nauty} if the standard refinement process Algorithm~\ref{a:refinement}
is used, is a deterministic function of the sequence of quotient graphs
$Q\(G,R(G,\pi_0,[\nu]_i)\)$ for $i=0,\ldots,\card\nu$.
We could in fact use that sequence of quotient graphs, but that would be
expensive in both time and space.
Our experience is that the information we do use, which is essentially
information about the quotient matrices collected during the refinement
process, rarely has less pruning power than the 
quotient matrices themselves would have.

\begin{figure}[htp]
\centering
\includegraphics[scale=0.62]{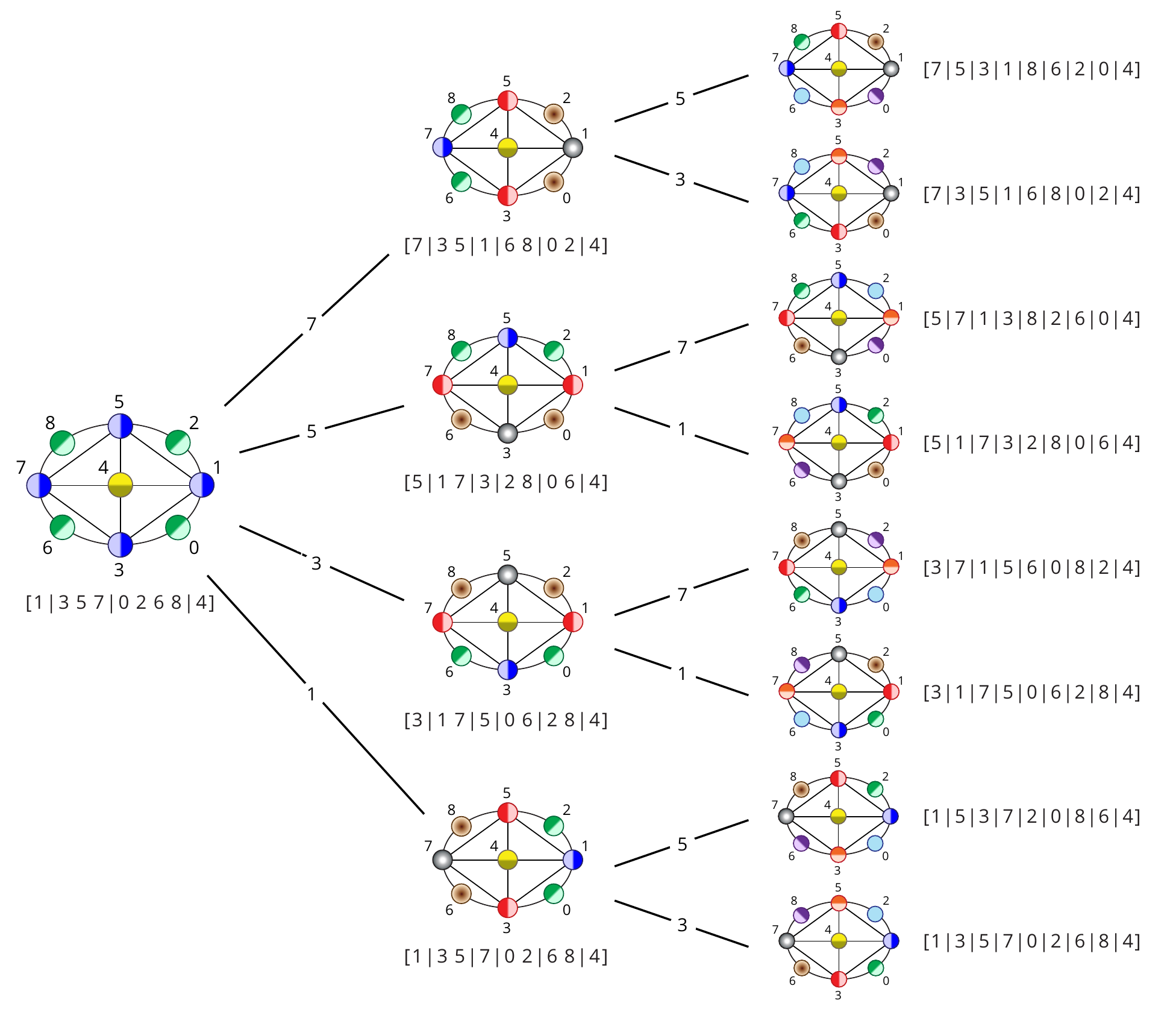}
\caption{Example of a search tree for the graph of
 Fig.~\ref{f:equitable}\label{f:Tree}}
\end{figure}

\subsection{Strategies for tree generation}\label{ss:strategies}

Now we have described the search tree $\T(g,\pi_0)$ as defined by
{\nauty} and {\traces}.
In general only a fraction of the search tree is actually generated, since the
pruning rules of Section~\ref{ss:automorphisms} are applied.  These pruning
rules utilise both node invariants, as described in Section~\ref{ss:invariants},
and automorphisms, which are mainly discovered by noticing that two discrete
colourings give the same coloured graph.  Now we will describe order of
generation of the tree, which is fundamentally different for {\nauty}
and {\traces}.

\medskip

In {\nauty}, the tree is generated in depth-first order.
The lexicographically least leaf $\nu_1$ is kept.
If the canonical labelling is sought (rather than just the automorphism group),
the leaf $\nu^*$ with the greatest invariant discovered so far is also kept.
A non-leaf node $\nu$ is pruned if neither
$\phi(G,\pi_0,\nu)=\phi(G,\pi_0,[\nu_1]_{\card{\nu}})$
or $\phi(G,\pi_0,\nu)\ge\phi(G,\pi_0,[\nu^*]_{\card{\nu}})$.
Such operations have both type $P_A(\nu^*,\nu)$ and
$P_B(\nu_1,\nu)$, so Theorem~\ref{t:pruning} applies.
Automorphisms are found by discovering leaves equivalent to $\nu_1$ or
$\nu^*$, and also to a limited extent from the properties of equitable
colourings.
Pruning operation $P_C$ is performed wherever possible, as high in the
tree as possible (i.e., at the children of the nearest common ancestor
of the two leaves found to be equivalent).

Until a recent version of {\nauty}, the only automorphisms used for
pruning operation $P_C$ were those directly discovered, without any
attempt to compose them.
Now we use the random Schreier method~\citep{schreier} to perform more
complete pruning.  By Lemma~\ref{l:stabilizer}, nodes $\nu\cat v_1$
and $\nu\cat v_2$ are equivalent if $v_1,v_2$ belong to the same orbit of
the point-wise stabiliser of $\nu$ in $\varGamma$, where $\varGamma$ 
is the group generated by the automorphisms found so far.
This stabiliser could be computed with a deterministic algorithm
as proposed by \citet{Kocay} and \citet{Butler}, but we have
found the random Schreier method \citep{schreier} to be more efficient
and it doesn't matter if occasionally (due to its probabilistic nature)
it computes smaller orbits.
The usefulness of this for {\nauty}'s efficiency with
some classes of difficult graph was demonstrated in 1985 by
\citet{Kirk} but only made it into the distributed edition of
{\nauty} in 2011.

\begin{figure}[ht]
\centering
\includegraphics[scale=0.98]{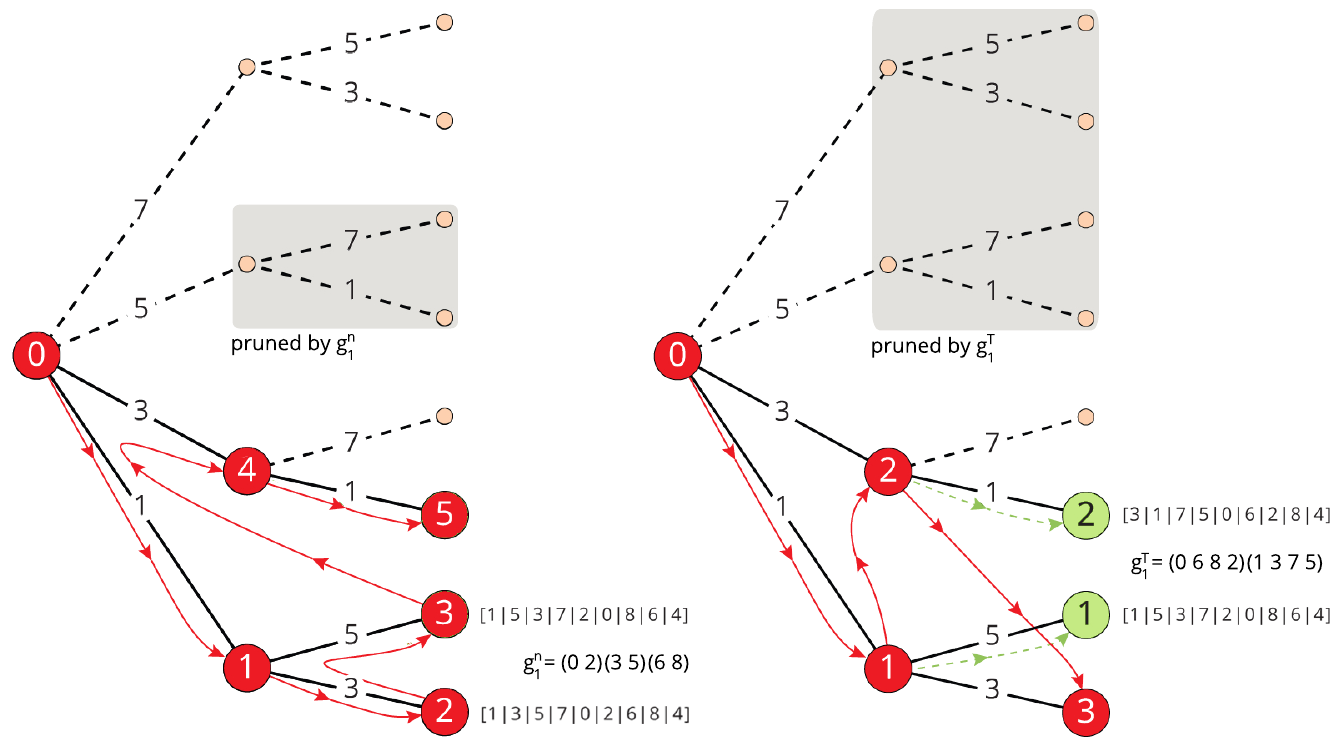}
\caption{Search tree order for {\nauty} (left) and {\traces} (right) \label{f:order}}
\end{figure}

\smallskip
{\Nnauty}'s basic depth-first approach is also followed by {\bliss} and
{\saucy}. However, {\traces} introduces an entirely different order of
generating the tree.
Some variations are possible but we will first describe the normative
method, which is based on a breadth-first search.
Define \textit{level}~$k$ to be the set of nodes $\nu$ with $\card\nu=k$.
In the $k$-th phase, {\traces} computes those nodes $\nu$ in level~$k$
which have the greatest value of $\phi(G,\pi_0,\nu)$ on that level.
By property ($\phi$1), such nodes are the children of the nodes
with greatest $\phi$ on the previous level, so no backtracking is needed.
This order of tree generation has the big advantage that pruning operation
$P_A$ is used to the maximum possible extent.

As mentioned in Section~\ref{ss:invariants}, the node invariant
$\phi(G,\pi_0,\nu)$ is computed incrementally during the refinement
process, so that pruning operation $P_A$ can often be applied when
the refinement is only partly complete.

An apparent disadvantage of breadth-first order
is that pruning by automorphisms (operation $P_C$) is
only possible when automorphisms are known, which in general requires
leaves of the tree.  To remedy this problem, for every node a single
path, called an ``experimental path'', is generated from that node down
to a leaf of the tree.  Automorphisms are found by comparing the
labelled graphs that correspond to those leaves, with the value of
$\phi(G,\pi_0,\nu)$ at the leaf being used to avoid most unnecessary
comparisons.
We have found experimentally that generating experimental paths randomly
tends to find automorphisms that generate larger subgroups, so that the
group requires fewer generators altogether and more of the group is
available early for pruning. 

The group generated by the automorphisms found so far is maintained
using the random Schreier method. Some features of the Schreier method
are turned on and off in {\traces} when it is possible to 
heuristically infer their computational weight.

Figure \ref{f:order} 
continues the example of Figure \ref{f:Tree}, showing the portion of the 
search tree traversed by {\nauty} (left) and {\traces} (right).  Node labels indicate
the order in which nodes are visited, and edge labels indicate which 
vertex is individualized.
During its backtrack search, {\nauty}
stores the first leaf ($2$) for comparison with subsequent leaves.
Leaves $2$ and $3$ provide the generator $g_1^n = (0\,2)(3\,5)(6\,8)$,
which for example allows pruning of the greyed subtree formed by
individualizing vertex 5 at the root.
{\traces} executes a breadth-first search, storing with each visited node the
discrete partition obtained by a randomly chosen experimental path
(shown by green arrow).
After processing node $2$ of the tree, the experimental leaves $1$ and $2$ are
compared, revealing the generator $g_1^T = (0\,6\,8\,2) (1\,3\,7\,5)$, which
allows for pruning the greyed subtrees formed by individualizing
vertices $5$ and $7$ at the root.

\subsection{Detection of automorphisms}\label{ss:detection}

The primary way that automorphisms are detected, in all the
programs under consideration, is to compare the labelled graphs
corresponding to leaves of the search tree as described above.

An important innovation of {\saucy}~\citep{saucy2} was to detect
some types of automorphism higher in the tree.
Suppose that $\pi,\pi'$ are equitable colourings with the same
number of vertices of each colour.  Any automorphism of $(G,\pi_0)$
that takes $\pi$ onto $\pi'$ has known action on the fixed
vertices of $\pi$: it maps them to the fixed vertices of $\pi'$ with
the same colours.
In some cases that {\saucy} can detect very quickly, this 
partial mapping is an automorphism when extended as the
identity on the non-fixed vertices.
This happens, for example, when a component of $G$ is
completely fixed by two different but equivalent
stabilization sequences. This is one of the
main reasons {\saucy} can be very fast on graphs with many
automorphisms that move few vertices.

\begin{figure}[ht]
\centering
\includegraphics[scale=0.26]{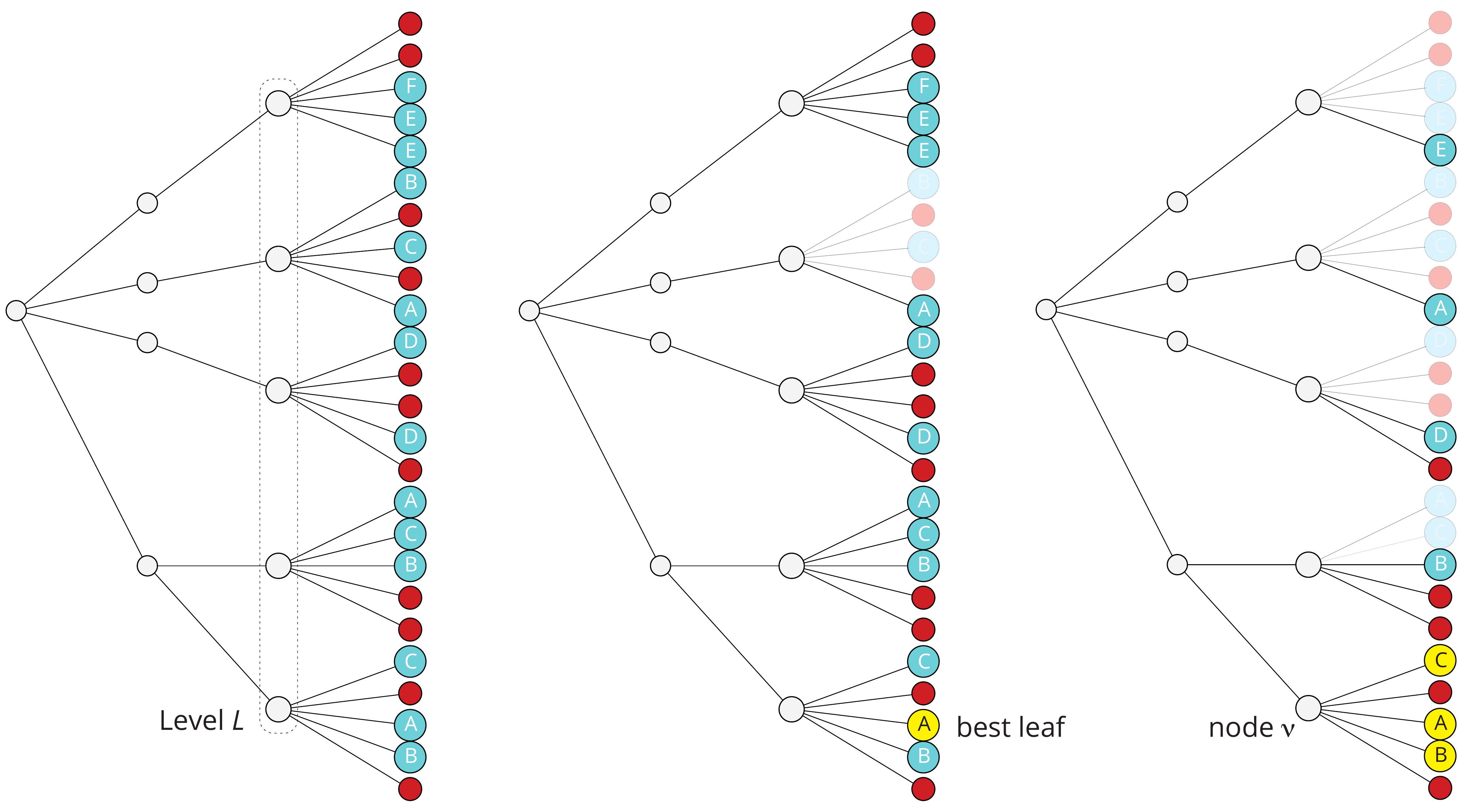}
\caption{{\traces} search strategies for canonical labelling or
  automorphism group\label{f:final}}
\end{figure}

{\traces} extends this idea by finding many automorphisms that do
not require the identity mapping on the non-trivial vertices.
It does this by a heuristic that extends the mapping from the fixed
vertices to the non-fixed vertices, which is applied in certain
situations where it is more likely to succeed.

When {\traces} is only looking for the automorphism group, and not
for a canonical labelling, it employs another strategy which is 
sometimes much faster.  Suppose that
while generating the nodes on some level $L$, it notices (during
experimental path generation) that one of them, say $\nu$, has a
child which is discrete.
At this point, {\traces} determines and keeps all the discrete children
of $\nu$ (modulo the usual automorphism pruning).  Now, for all nodes
$\nu'$ on level $L$, a single discrete child $\nu''$ is found, if any, and
an automorphism is discovered if it is equivalent to any child of $\nu$.
The validity of this approach follows from Theorem~\ref{t:pruning}
with the role of $\nu_0$ played by the first discrete child of~$\nu$.

Figure \ref{f:final} (left) shows the whole tree up to level $L{+}1$, where
a node labelled by $X$ represents a discrete partition corresponding
to labelled graph~$X$,
while an unlabelled (and smaller) node stands for a non-discrete partition.
Figure \ref{f:final} (center) shows the part of the tree which is traversed by
{\traces} during the search for a canonical labelling. Only the best leaf is
kept for comparison with subsequent discrete partitions.

Figure \ref{f:final} (right) shows the part of the tree which is traversed by
{\traces} during an automorphism group computation.
All the discrete children of $\nu$ are kept for comparison with subsequent
discrete partitions. When the first discrete partition is found as a child of
a node $\nu'$ at level $L$, either it has the same labelled graph as one
of those stored, or the whole subtree rooted at $\nu'$ has no
leaf with one of the stored graphs. In the first case, an automorphism
is found. In both cases, the computation is resumed from the next
node at level~$L$.

\subsection{Low degree vertices}\label{ss:sparse}

Graphs in some applications, such as constraint satisfaction problems described
by \citet{saucy1} have many small components with vertices of low
degree, vertices with common neighborhoods, and so on.
{\Ssaucy} handles them efficiently by a refinement procedure tuned to
this situation plus early detection of sparse automorphisms.
{\traces} employs another method. Recall that after the first refinement
vertices with equal colours also have equal degrees.
The target cell selector never selects cells containing vertices of
degree 0, 1, 2 or $n{-}1$, and nodes whose non-trivial cells are only
of those degrees are not expanded further.  Special-purpose code
then produces generators for the automorphism group fixed by
the node and, if necessary, a unique discrete colouring that refines
the node.

This technique is quite successful.  However, in our opinion, graphs
of this type ought to be handled by preprocessing.  For example, 
sets of vertices with the same neighborhoods ought to be replaced by 
single vertices with a colour that encodes the multiplicity.  All
tree-like appendages, long paths of degree~2 vertices, and similar
easy subgraphs, could be efficiently factored out in this manner.


\nicebreak
\section{Performance}\label{s:performance}

In the following figures, we present some comparisons between programs
for a variety of graphs ranging from very easy to very difficult.  We made
an effort to include graphs that are easy and difficult for each of the
programs tested.

Most of the graphs are taken from the {\bliss} collection, but for the record we
provide all of our test graphs at the {\nauty} and {\traces}
website~\citep{nautypage}.

The times given are for a Macbook Pro with 2.66 GHz Intel i7
processor, compiled using gcc 4.7 and running in a single thread.
Easy graphs were processed multiple times to give more precise times.
In order to avoid non-typical behaviour due to the input labelling, all the
graphs were randomly labelled before processing.
In some classes, such as the ``combinatorial graphs'', the processing
time can depend a lot on the initial labelling; the plots show
whatever happened in our tests.

\medskip
The following programs were included. Programs (c)--(e) reflect their
distributed versions at the end of October 2012.
\nicebreak
\begin{enumerate}
 \item[(a)] {\nauty} version 2.5
 \item[(b)] {\traces} version 2.0
 \item[(c)] {\saucy} version 3.0
 \item[(d)] {\bliss} version 7.2
 \item[(e)] {\conauto} version 2.0.1
\end{enumerate}
The first column of plots in each figure is for computation of the
automorphism group alone.  The second column is for computation 
of a canonical labelling, which for all the programs here includes
an automorphism group computation.

For {\nauty} we used the dense or sparse version consistently within each
class, depending on whether the class is inherently dense or sparse.
We did not use an invariant except where indicated, even though it
would often help.

{\Ssaucy} does not have a canonical labelling option.  Version
3.0, which was released just as this paper neared completion, has an
amalgam of {\saucy} and {\bliss} that can do canonical labelling, but
we have not tested it much.

{\Cconauto} features automorphism group computation and the ability
for testing two graphs for isomorphism.  We decided that the latter
is outside the scope of this study.
For the same reason we did not include the program of~\citet{Italians}
in our comparisons.

Another excellent program, that we were unfortunately unable to include
for technical reasons, is due to~\citet{Stoichev}.
Many more experiments and comments can be found at
\texttt{http://pallini.di.uniroma1.it}.

%

\nicebreak
\section{Conclusions}\label{s:conclusions}

We have brought the published description of {\nauty} up to date and
introduced the program {\traces}. 
In particular, we have shown that the highly innovative tree scanning algorithm
introduced by {\traces} can have a remarkable effect on the processing
power. 
Although none of the programs tested have the best performance on all
graph classes, it is clear that {\traces} is currently the leader on the majority
of difficult graph classes tested, while {\nauty} is still preferred for mass
testing of small graphs.
An exception is provided by some classes of graphs consisting of disjoint or
minimally-overlapping components, here represented by non-disjoint
unions of tripartite graphs.  Conauto and Bliss~\citep{Bliss2} have
special code for such graphs, but as yet {\nauty} and {\traces} do not.

We wish to thank Gordon Royle for many useful test graphs.  We also
thank the authors of {\saucy}, {\bliss} and {\conauto} for many useful discussions.
The second author is indebted to Riccardo Silvestri for his 
strong encouragement and valuable suggestions.

\nicebreak


\def\multicolumn{2}{l}{#} \\[1ex]
   \input{fams/1.tex} & \input{fams/1table.tex}#2{\multicolumn{2}{l}{#1} \\[1ex]
   \input{fams/#2.tex} & \input{fams/#2table.tex}}
\def\multicolumn{2}{l}{#} \\[1ex]
   \multicolumn{2}{l}{\input{fams/1.tex}}#2{\multicolumn{2}{l}{#1} \\[1ex]
   \multicolumn{2}{l}{\input{fams/#2.tex}}}
\def\multicolumn{2}{l}{#} \\
   \input{fams/1.tex} & \input{fams/1-can.tex}#2{\multicolumn{2}{l}{#1} \\
   \input{fams/#2.tex} & \input{fams/#2-can.tex}}
\def\picScale{.72}


\begin{figure}[p]
\begin{center}
  \begin{tabular}{@{} cc @{}}
    Automorphism group & Canonical label \\
    \hline
   \multicolumn{2}{l}{Random graphs with $p=\tfrac12$} \\
   \begin{tikzpicture}[transform shape, scale=\picScale]
\def \timeout{120}
\def \xmin{10}
\def \xmax{5000}
\def \ymin{-0}
\def \ymax{0.12}
\begin{loglogaxis}[
width=9cm,
height=8cm,
xmin=\xmin,  xmax=\xmax, 
ymin=\ymin,  ymax=\ymax, 
enlargelimits=0.01, 
grid=major]
\Bliss coordinates {
(10,0.0000037)
(15,0.0000046)
(20,0.0000053)
(25,0.0000064)
(30,0.0000075)
(40,0.0000107)
(100,0.0000395)
(200,0.0001298)
(300,0.0002769)
(400,0.0004742)
(1000,0.0029087)
(2000,0.0117248)
(3000,0.0260897)
(4000,0.0491325)
(5000,0.0799354)};
\Nauty coordinates {
(10,0.0000006)
(15,0.0000008)
(20,0.0000012)
(25,0.0000015)
(30,0.0000020)
(40,0.0000030)
(100,0.0000115)
(200,0.0000359)
(300,0.0000767)
(400,0.0001390)
(1000,0.0007769)
(2000,0.0028965)
(3000,0.0063508)
(4000,0.0112401)
(5000,0.0175833)};
\Conauto coordinates {
(10,0.0000073)
(15,0.0000092)
(20,0.0000091)
(25,0.0000104)
(30,0.0000122)
(40,0.0000133)
(100,0.0000271)
(200,0.0000564)
(300,0.0000882)
(400,0.0001246)
(1000,0.0003771)
(2000,0.0007484)
(3000,0.0010640)
(4000,0.0016105)
(5000,0.0020418)};
\Saucy coordinates {
(10,0.0000012)
(15,0.0000015)
(20,0.0000020)
(25,0.0000024)
(30,0.0000030)
(40,0.0000044)
(100,0.0000194)
(200,0.0000584)
(300,0.0001197)
(400,0.0002012)
(1000,0.0012355)
(2000,0.0049921)
(3000,0.0105150)
(4000,0.0184980)
(5000,0.0285488)};
\Traces coordinates {
(10,0.0000060)
(15,0.0000062)
(20,0.0000065)
(25,0.0000065)
(30,0.0000067)
(40,0.0000069)
(100,0.0000093)
(200,0.0000136)
(300,0.0000182)
(400,0.0000234)
(1000,0.0000768)
(2000,0.0002024)
(3000,0.0003121)
(4000,0.0004360)
(5000,0.0005605)};
\end{loglogaxis} 
\end{tikzpicture}
\def \timeout{120}
\def \xmin{10}
\def \xmax{5000}
\def \ymin{-0}
\def \ymax{1}
\begin{loglogaxis}[
width=9cm,
height=8cm,
xmin=\xmin,  xmax=\xmax, 
ymin=\ymin,  ymax=\ymax, 
enlargelimits=0.01, 
grid=major]
\Bliss coordinates {
(10,0.0000076)
(15,0.0000115)
(20,0.0000163)
(25,0.0000221)
(30,0.0000283)
(40,0.0000427)
(100,0.0002223)
(200,0.0008247)
(300,0.0018637)
(400,0.0032002)
(1000,0.0205349)
(2000,0.0877391)
(3000,0.2100000)
(4000,0.3813333)
(5000,0.6235000)};
\Nauty coordinates {
(10,0.0000007)
(15,0.0000010)
(20,0.0000014)
(25,0.0000018)
(30,0.0000024)
(40,0.0000037)
(100,0.0000153)
(200,0.0000493)
(300,0.0001065)
(400,0.0001921)
(1000,0.0011152)
(2000,0.0045197)
(3000,0.0097500)
(4000,0.0176512)
(5000,0.0277333)};
\Traces coordinates {
(10,0.0000061)
(15,0.0000063)
(20,0.0000067)
(25,0.0000067)
(30,0.0000071)
(40,0.0000076)
(100,0.0000129)
(200,0.0000264)
(300,0.0000478)
(400,0.0000776)
(1000,0.0004008)
(2000,0.0017592)
(3000,0.0037853)
(4000,0.0065547)
(5000,0.0102600)};
\end{loglogaxis} 
\end{tikzpicture}
   \multicolumn{2}{l}{Random graphs with $p=n^{-1/2}$} \\
   \begin{tikzpicture}[transform shape, scale=\picScale]
\def \timeout{120}
\def \xmin{10}
\def \xmax{10000}
\def \ymin{-0}
\def \ymax{.12}
\begin{loglogaxis}[
width=9cm,
height=8cm,
xmin=\xmin,  xmax=\xmax, 
ymin=\ymin,  ymax=\ymax, 
enlargelimits=0.01, 
grid=major]
\Bliss coordinates {
(10,0.0000041)
(15,0.0000040)
(20,0.0000046)
(25,0.0000052)
(30,0.0000057)
(40,0.0000070)
(50,0.0000084)
(60,0.0000097)
(70,0.0000110)
(80,0.0000129)
(90,0.0000149)
(100,0.0000161)
(200,0.0000363)
(300,0.0000618)
(400,0.0000894)
(500,0.0001192)
(600,0.0001505)
(700,0.0001849)
(800,0.0002200)
(900,0.0002590)
(1000,0.0003030)
(2000,0.0007724)
(3000,0.0013892)
(4000,0.0020700)
(5000,0.0028820)
(6000,0.0038111)
(7000,0.0048903)
(8000,0.0061051)
(9000,0.0075728)
(10000,0.0089862)};
\Nauty coordinates {
(10,0.0000006)
(15,0.0000007)
(20,0.0000010)
(25,0.0000011)
(30,0.0000014)
(40,0.0000020)
(50,0.0000025)
(60,0.0000031)
(70,0.0000036)
(80,0.0000044)
(90,0.0000050)
(100,0.0000057)
(200,0.0000133)
(300,0.0000218)
(400,0.0000379)
(500,0.0000526)
(600,0.0000713)
(700,0.0000900)
(800,0.0001111)
(900,0.0001292)
(1000,0.0001471)
(2000,0.0003573)
(3000,0.0006111)
(4000,0.0008949)
(5000,0.0012173)
(6000,0.0015740)
(7000,0.0019644)
(8000,0.0023836)
(9000,0.0028629)
(10000,0.0034121)};
\Conauto coordinates {
(10,0.0000069)
(15,0.0000084)
(20,0.0000096)
(25,0.0000152)
(30,0.0000145)
(40,0.0000202)
(50,0.0000269)
(60,0.0000324)
(70,0.0000382)
(80,0.0000474)
(90,0.0000571)
(100,0.0000666)
(200,0.0001699)
(300,0.0003068)
(400,0.0004944)
(500,0.0007596)
(600,0.0009267)
(700,0.0014197)
(800,0.0015075)
(900,0.0019851)
(1000,0.0025038)
(2000,0.0077347)
(3000,0.0167246)
(4000,0.0269219)
(5000,0.0475250)
(6000,0.0546200)
(7000,0.0689346)
(8000,0.0851359)
(9000,0.1169632)
(10000,0.1189592)};
\Saucy coordinates {
(10,0.0000012)
(15,0.0000014)
(20,0.0000018)
(25,0.0000020)
(30,0.0000025)
(40,0.0000032)
(50,0.0000039)
(60,0.0000046)
(70,0.0000053)
(80,0.0000061)
(90,0.0000068)
(100,0.0000081)
(200,0.0000174)
(300,0.0000302)
(400,0.0000437)
(500,0.0000578)
(600,0.0000747)
(700,0.0000923)
(800,0.0001092)
(900,0.0001299)
(1000,0.0001517)
(2000,0.0003802)
(3000,0.0006468)
(4000,0.0009881)
(5000,0.0013389)
(6000,0.0018127)
(7000,0.0023710)
(8000,0.0028934)
(9000,0.0035808)
(10000,0.0047164)};
\Traces coordinates {
(10,0.0000065)
(15,0.0000062)
(20,0.0000066)
(25,0.0000066)
(30,0.0000068)
(40,0.0000070)
(50,0.0000075)
(60,0.0000078)
(70,0.0000077)
(80,0.0000088)
(90,0.0000091)
(100,0.0000089)
(200,0.0000126)
(300,0.0000161)
(400,0.0000204)
(500,0.0000254)
(600,0.0000304)
(700,0.0000330)
(800,0.0000386)
(900,0.0000463)
(1000,0.0000556)
(2000,0.0001508)
(3000,0.0002322)
(4000,0.0003110)
(5000,0.0004167)
(6000,0.0004836)
(7000,0.0006017)
(8000,0.0006809)
(9000,0.0007934)
(10000,0.0008652)};
\end{loglogaxis} 
\end{tikzpicture}
\def \timeout{120}
\def \xmin{10}
\def \xmax{10000}
\def \ymin{-0}
\def \ymax{.1}
\begin{loglogaxis}[
width=9cm,
height=8cm,
xmin=\xmin,  xmax=\xmax, 
ymin=\ymin,  ymax=\ymax, 
enlargelimits=0.01, 
grid=major]
\Bliss coordinates {
(10,0.0000075)
(15,0.0000094)
(20,0.0000119)
(25,0.0000148)
(30,0.0000176)
(40,0.0000243)
(50,0.0000304)
(60,0.0000371)
(70,0.0000469)
(80,0.0000545)
(90,0.0000616)
(100,0.0000701)
(200,0.0001803)
(300,0.0003210)
(400,0.0004727)
(500,0.0006403)
(600,0.0007963)
(700,0.0009980)
(800,0.0012142)
(900,0.0013667)
(1000,0.0016008)
(2000,0.0043866)
(3000,0.0073695)
(4000,0.0113022)
(5000,0.0172054)
(6000,0.0222372)
(7000,0.0281808)
(8000,0.0361374)
(9000,0.0400424)
(10000,0.0466586)};
\Nauty coordinates {
(10,0.0000007)
(15,0.0000008)
(20,0.0000011)
(25,0.0000013)
(30,0.0000016)
(40,0.0000023)
(50,0.0000031)
(60,0.0000035)
(70,0.0000042)
(80,0.0000051)
(90,0.0000058)
(100,0.0000066)
(200,0.0000155)
(300,0.0000266)
(400,0.0000449)
(500,0.0000627)
(600,0.0000840)
(700,0.0001056)
(800,0.0001295)
(900,0.0001505)
(1000,0.0001754)
(2000,0.0004467)
(3000,0.0007628)
(4000,0.0011202)
(5000,0.0015242)
(6000,0.0019705)
(7000,0.0024976)
(8000,0.0030868)
(9000,0.0037921)
(10000,0.0045776)};
\Traces coordinates {
(10,0.0000065)
(15,0.0000064)
(20,0.0000067)
(25,0.0000067)
(30,0.0000070)
(40,0.0000075)
(50,0.0000078)
(60,0.0000083)
(70,0.0000083)
(80,0.0000094)
(90,0.0000098)
(100,0.0000099)
(200,0.0000148)
(300,0.0000204)
(400,0.0000267)
(500,0.0000348)
(600,0.0000442)
(700,0.0000480)
(800,0.0000580)
(900,0.0000700)
(1000,0.0000839)
(2000,0.0002375)
(3000,0.0003846)
(4000,0.0005442)
(5000,0.0007591)
(6000,0.0009783)
(7000,0.0013531)
(8000,0.0017522)
(9000,0.0022191)
(10000,0.0026765)};
\end{loglogaxis} 
\end{tikzpicture}
   \multicolumn{2}{l}{Random cubic graphs ({\nauty} invariant \textit{distances}(2))} \\
   \begin{tikzpicture}[transform shape, scale=\picScale]
\def \timeout{200}
\def \xmin{1000}
\def \xmax{10000}
\def \ymin{-0}
\def \ymax{140}
\begin{semilogyaxis}[
width=9cm,
height=8cm,
xmin=\xmin,  xmax=\xmax, 
ymin=\ymin,  ymax=\ymax, 
enlargelimits=0.01, 
grid=major,
scaled x ticks = false]
\Bliss coordinates {
(1000,0.0099299)
(2000,0.0351648)
(3000,0.1029709)
(4000,0.1779351)
(5000,0.2720194)
(6000,0.3554646)
(7000,0.5053737)
(8000,0.7374459)
(9000,0.7722576)
(10000,0.8881061)};
\Nauty coordinates {
(1000,0.1715152)
(2000,0.9878788)
(3000,2.9054545)
(4000,6.5836364)
(5000,12.1945455)
(6000,20.3290909)
(7000,33.0054545)
(8000,50.1045455)
(9000,71.0463636)
(10000,97.3918182)};
\NautyLight coordinates {
(1000,0.0004479)
(2000,0.0013073)
(3000,0.0024964)
(4000,0.0039897)
(5000,0.0058162)
(6000,0.0079447)
(7000,0.0106326)
(8000,0.0133442)
(9000,0.0166080)
(10000,0.0201111)};
\Conauto coordinates {
(1000,0.1064396)
(2000,0.5396667)
(3000,1.7637879)
(4000,3.3090909)
(5000,5.1009091)
(6000,7.1072727)
(7000,10.4645455)
(8000,17.2763636)
(9000,17.9672727)
(10000,20.3190909)};
\Saucy coordinates {
(1000,0.0026488)
(2000,0.0072101)
(3000,0.0143042)
(4000,0.0227933)
(5000,0.0367325)
(6000,0.0389577)
(7000,0.0671063)
(8000,0.0586646)
(9000,0.0850131)
(10000,0.1107858)};
\Traces coordinates {
(1000,0.0010575)
(2000,0.0029554)
(3000,0.0061240)
(4000,0.0099283)
(5000,0.0135687)
(6000,0.0179273)
(7000,0.0249306)
(8000,0.0301143)
(9000,0.0356169)
(10000,0.0439269)};
\end{semilogyaxis} 
\end{tikzpicture}
\def \timeout{150}
\def \xmin{1000}
\def \xmax{10000}
\def \ymin{-0}
\def \ymax{150}
\begin{semilogyaxis}[
width=9cm,
height=8cm,
xmin=\xmin,  xmax=\xmax, 
ymin=\ymin,  ymax=\ymax, 
enlargelimits=0.01, 
grid=major,
scaled x ticks = false]
\Bliss coordinates {
(1000,0.0118729)
(2000,0.0391173)
(3000,0.1102617)
(4000,0.1893223)
(5000,0.2846419)
(6000,0.3793636)
(7000,0.5232121)
(8000,0.7559091)
(9000,0.8082879)
(10000,0.9566818)};
\Nauty coordinates {
(1000,0.1731060)
(2000,0.9962121)
(3000,2.9200000)
(4000,6.5990909)
(5000,12.2800000)
(6000,20.4963636)
(7000,32.7336364)
(8000,72.4054545)
(9000,70.6327273)
(10000,98.2609091)};
\NautyLight coordinates {
(1000,0.0004547)
(2000,0.0013259)
(3000,0.0025138)
(4000,0.0040721)
(5000,0.0058818)
(6000,0.0080058)
(7000,0.0108414)
(8000,0.0135054)
(9000,0.0168906)
(10000,0.0202746)};
\Traces coordinates {
(1000,0.0010625)
(2000,0.0029702)
(3000,0.0061441)
(4000,0.0099187)
(5000,0.0135919)
(6000,0.0179599)
(7000,0.0249298)
(8000,0.0301628)
(9000,0.0358239)
(10000,0.0439567)};
\end{semilogyaxis} 
\end{tikzpicture}
    \hline
   \multicolumn{2}{l}{\showBliss\hspace{5pt}{\bliss}\hspace{10pt}\showSaucy\hspace{5pt}{\saucy}\hspace{10pt}
   \showConauto\hspace{5pt}{\conauto}\hspace{10pt}\showNauty\hspace{5pt}{\nauty}\hspace{10pt}\showNautyLight
   \hspace{5pt}{\nauty} with invariant\hspace{5pt}\showTraces\hspace{5pt}{\traces}} \\
  \end{tabular}
\end{center}
\caption{Performance comparison
  (horizontal: number of vertices; vertical: time in seconds)}
\label{perf1}
\end{figure}


\begin{figure}[p]
\begin{center}
  \begin{tabular}{@{} cc @{}}
    Automorphism group & Canonical label \\
    \hline
   \multicolumn{2}{l}{Hypercubes (vertex-transitive)} \\
   \begin{tikzpicture}[transform shape, scale=\picScale]
\def \timeout{600}
\def \xmin{8}
\def \xmax{2097152}
\def \ymin{-0}
\def \ymax{1700}
\begin{loglogaxis}[
width=9cm,
height=8cm,
xmin=\xmin,  xmax=\xmax, 
ymin=\ymin,  ymax=\ymax, 
enlargelimits=0.01, 
legend style={anchor=north west,
at={(0.01,0.998)},
font=\tiny,
inner xsep=-.5pt,
inner ysep=-.5pt,
fill=white,
draw=none},
grid=major]
\addplot [lightgray, no markers,line width=3pt] coordinates {(\xmin,\timeout) (\xmax,\timeout)};
\addlegendentry{timeout: 600 secs}
\Bliss coordinates {
(8,0.0000160)
(16,0.0000250)
(32,0.0000660)
(64,0.0001770)
(128,0.0005098)
(256,0.0009583)
(512,0.0030303)
(1024,0.0071174)
(2048,0.0152672)
(4096,0.0381132)
(8192,0.0971429)
(16384,0.2662500)
(32768,0.6833333)
(65536,1.7400000)
(131072,6.6500000)
(262144,13.5200000)
(524288,41.9700000)
(1048576,106.6650000)
(2097152,225.1300000)};
\Nauty coordinates {
(8,0.0000026)
(16,0.0000049)
(32,0.0000154)
(64,0.0000479)
(128,0.0001690)
(256,0.0003634)
(512,0.0010460)
(1024,0.0025638)
(2048,0.0061280)
(4096,0.0135526)
(8192,0.0323437)
(16384,0.0883333)
(32768,0.2625000)
(65536,0.6666667)
(131072,2.3600000)
(262144,6.4800000)
(524288,22.0900000)
(1048576,76.7250000)
(2097152,175.8900000)};
\Conauto coordinates {
(8,0.0000133)
(16,0.0000301)
(32,0.0001051)
(64,0.0003149)
(128,0.0013841)
(256,0.0053908)
(512,0.0262338)
(1024,0.0971429)
(2048,0.5025000)
(4096,2.1500000)};
\Saucy coordinates {
(8,0.0000030)
(16,0.0000116)
(32,0.0000359)
(64,0.0000831)
(128,0.0003240)
(256,0.0008777)
(512,0.0022109)
(1024,0.0039650)
(2048,0.0112408)
(4096,0.0354531)
(8192,0.0908475)
(16384,0.1747720)
(32768,0.6317557)
(65536,1.9353360)
(131072,6.3523770)
(262144,20.9249540)
(524288,77.8195480)
(1048576,232.0548520)
(2097152,584.2536780)};
\Traces coordinates {
(8,0.0000163)
(16,0.0000260)
(32,0.0000520)
(64,0.0001029)
(128,0.0002076)
(256,0.0004433)
(512,0.0011905)
(1024,0.0025510)
(2048,0.0056389)
(4096,0.0125625)
(8192,0.0290278)
(16384,0.0709375)
(32768,0.1750000)
(65536,0.5500000)
(131072,2.3400000)
(262144,5.8200000)
(524288,18.8200000)
(1048576,58.9150000)
(2097152,170.7500000)};
\end{loglogaxis} 
\end{tikzpicture}
\def \timeout{600}
\def \xmin{8}
\def \xmax{2097152}
\def \ymin{-0}
\def \ymax{700}
\begin{loglogaxis}[
width=9cm,
height=8cm,
xmin=\xmin,  xmax=\xmax, 
ymin=\ymin,  ymax=\ymax, 
enlargelimits=0.01, 
grid=major]
\Bliss coordinates {
(8,0.0000198)
(16,0.0000310)
(32,0.0000830)
(64,0.0002132)
(128,0.0005732)
(256,0.0011031)
(512,0.0034783)
(1024,0.0078125)
(2048,0.0167500)
(4096,0.0410204)
(8192,0.1030000)
(16384,0.2862500)
(32768,0.6833333)
(65536,1.7700000)
(131072,6.5100000)
(262144,13.5500000)
(524288,42.6900000)
(1048576,106.4550000)
(2097152,233.3000000)};
\Nauty coordinates {
(8,0.0000028)
(16,0.0000052)
(32,0.0000172)
(64,0.0000576)
(128,0.0001919)
(256,0.0004153)
(512,0.0012195)
(1024,0.0029706)
(2048,0.0072143)
(4096,0.0161719)
(8192,0.0373214)
(16384,0.1020833)
(32768,0.3085714)
(65536,0.7633333)
(131072,2.7000000)
(262144,7.0100000)
(524288,26.4100000)
(1048576,81.4900000)
(2097152,201.5900000)};
\Traces coordinates {
(8,0.0000164)
(16,0.0000261)
(32,0.0000524)
(64,0.0001031)
(128,0.0002092)
(256,0.0004472)
(512,0.0011792)
(1024,0.0026172)
(2048,0.0058140)
(4096,0.0127500)
(8192,0.0300000)
(16384,0.0693750)
(32768,0.2060000)
(65536,0.5325000)
(131072,2.2100000)
(262144,5.8300000)
(524288,18.3100000)
(1048576,62.5150000)
(2097152,155.5700000)};
\end{loglogaxis} 
\end{tikzpicture}
   \multicolumn{2}{l}{Miscellaneous vertex-transitive graphs} \\
   \begin{tikzpicture}[transform shape, scale=\picScale]
\def \timeout{120}
\def \xmin{6}
\def \xmax{56832}
\def \ymin{-0}
\def \ymax{20}
\begin{loglogaxis}[
width=9cm,
height=8cm,
xmin=\xmin,  xmax=\xmax, 
ymin=\ymin,  ymax=\ymax, 
enlargelimits=0.01, 
grid=major]
\BlissOnlyMarks coordinates {
(6,0.0000093)
(8,0.0000134)
(10,0.0000222)
(12,0.0000273)
(16,0.0000534)
(20,0.0000690)
(24,0.0000615)
(32,0.0000935)
(36,0.0002155)
(40,0.0000723)
(45,0.0002971)
(48,0.0004175)
(52,0.0006156)
(64,0.0003537)
(72,0.0004946)
(80,0.0004694)
(81,0.0009918)
(90,0.0014195)
(99,0.0019149)
(100,0.0005014)
(108,0.0005043)
(110,0.0002432)
(120,0.0016557)
(126,0.0017898)
(130,0.0017281)
(132,0.0019399)
(135,0.0009587)
(144,0.0024973)
(160,0.0025284)
(170,0.0004746)
(176,0.0005816)
(180,0.0032206)
(182,0.0007115)
(187,0.0018727)
(189,0.0035971)
(192,0.0033278)
(198,0.0045558)
(208,0.0044444)
(210,0.0014378)
(216,0.0010494)
(221,0.0007605)
(243,0.0019324)
(264,0.0072464)
(270,0.0010336)
(300,0.0010554)
(308,0.0032051)
(312,0.0095834)
(322,0.0073260)
(340,0.0017841)
(352,0.0140845)
(360,0.0058651)
(375,0.0025873)
(396,0.0042105)
(400,0.0027816)
(405,0.0342373)
(414,0.0082645)
(416,0.0068729)
(429,0.0147816)
(437,0.0077519)
(440,0.0133113)
(464,0.0046296)
(465,0.0067340)
(475,0.0123313)
(480,0.0015886)
(486,0.0312308)
(504,0.0032103)
(513,0.0269333)
(560,0.0183486)
(561,0.0170339)
(575,0.0315625)
(600,0.0394118)
(624,0.0016708)
(640,0.0034722)
(651,0.0046404)
(775,0.1268750)
(783,0.0307692)
(784,0.0719928)
(840,0.0396078)
(900,0.1553846)
(924,0.1592308)
(928,0.0434783)
(1024,0.0051151)
(1089,0.0152672)
(1260,0.0047170)
(1860,0.1094737)
(1920,0.0108108)
(2736,0.0088496)
(2856,0.0251250)
(2880,0.0142857)
(3016,0.0131579)
(3080,0.0242169)
(3840,0.2928571)
(4000,0.4240000)
(4200,0.0129677)
(4224,0.0116959)
(4480,0.0223333)
(4752,0.0242169)
(7296,0.6700000)
(7440,0.0322222)
(7544,0.5425000)
(7680,0.0502500)
(8064,0.0436957)
(11340,1.6100000)
(12288,2.0300000)
(13824,0.1172222)
(36864,0.1972727)
(38400,0.1900000)
(46200,0.5225000)
(46592,1.2100000)
(47104,0.3683333)
(56832,0.9700000)};
\NautyOnlyMarks coordinates {
(6,0.0000012)
(8,0.0000020)
(10,0.0000039)
(12,0.0000064)
(16,0.0000105)
(20,0.0000176)
(24,0.0000212)
(32,0.0000292)
(36,0.0000403)
(40,0.0000123)
(45,0.0000567)
(48,0.0000902)
(52,0.0002350)
(64,0.0001645)
(72,0.0002107)
(80,0.0001220)
(81,0.0005653)
(90,0.0008573)
(99,0.0017116)
(100,0.0006443)
(108,0.0001310)
(110,0.0000630)
(120,0.0005588)
(126,0.0005360)
(130,0.0006574)
(132,0.0003834)
(135,0.0012696)
(144,0.0011428)
(160,0.0009843)
(170,0.0001116)
(176,0.0001197)
(180,0.0006562)
(182,0.0001660)
(187,0.0003356)
(189,0.0004980)
(192,0.0007764)
(198,0.0008741)
(208,0.0008621)
(210,0.0004209)
(216,0.0003116)
(221,0.0001532)
(243,0.0004456)
(264,0.0011416)
(270,0.0003181)
(300,0.0003201)
(308,0.0007645)
(312,0.0033628)
(322,0.0007813)
(340,0.0004480)
(352,0.0015244)
(360,0.0007692)
(375,0.0004125)
(396,0.0009058)
(400,0.0005605)
(405,0.0152941)
(414,0.0016234)
(416,0.0013369)
(429,0.0011199)
(437,0.0006649)
(440,0.0014045)
(464,0.0010040)
(465,0.0009058)
(475,0.0013441)
(480,0.0005400)
(486,0.0223958)
(504,0.0016556)
(513,0.0015432)
(560,0.0027610)
(561,0.0016556)
(575,0.0021739)
(600,0.0114773)
(624,0.0005774)
(640,0.0016667)
(651,0.0008897)
(775,0.0768750)
(783,0.0025253)
(784,0.0031493)
(840,0.0205769)
(900,0.0246591)
(924,0.1172222)
(928,0.0177500)
(1024,0.0016779)
(1089,0.0022841)
(1260,0.0017730)
(1860,0.0970833)
(1920,0.0063125)
(2736,0.0030488)
(2856,0.0061585)
(2880,0.0043534)
(3016,0.0038068)
(3080,0.0050000)
(3840,0.5475000)
(4000,0.3616667)
(4200,0.0046991)
(4224,0.0057955)
(4480,0.0056389)
(4752,0.0108696)
(7296,1.6900000)
(7440,0.0133553)
(7544,1.3150000)
(7680,0.0236364)
(8064,0.0171667)
(11340,2.9800000)
(12288,3.5600000)
(13824,0.0502500)
(36864,0.4080000)
(38400,0.3200000)
(46200,0.1275000)
(46592,0.7133333)
(47104,0.6175000)
(56832,1.1100000)};
\ConautoOnlyMarks coordinates {
(6,0.0000084)
(8,0.0000122)
(10,0.0000180)
(12,0.0000236)
(16,0.0000330)
(20,0.0000452)
(24,0.0000469)
(32,0.0000966)
(36,0.0001083)
(40,0.0001079)
(45,0.0001671)
(48,0.0001050)
(52,0.0002111)
(64,0.0003183)
(72,0.0003574)
(80,0.0004682)
(81,0.0005027)
(90,0.0005739)
(99,0.0006008)
(100,0.0009238)
(108,0.0009153)
(110,0.0006557)
(120,0.0011943)
(126,0.0008358)
(130,0.0010135)
(132,0.0014782)
(135,0.0014355)
(144,0.0012611)
(160,0.0011912)
(170,0.0012788)
(176,0.0016681)
(180,0.0011249)
(182,0.0015201)
(187,0.0012469)
(189,0.0018484)
(192,0.0009940)
(198,0.0032895)
(208,0.0026774)
(210,0.0028902)
(216,0.0019847)
(221,0.0020305)
(243,0.0027248)
(264,0.0025543)
(270,0.0046620)
(300,0.0044843)
(308,0.0020790)
(312,0.0046379)
(322,0.0040404)
(340,0.0026076)
(352,0.0125000)
(360,0.0087336)
(375,0.0066890)
(396,0.0054201)
(400,0.0069686)
(405,0.0092593)
(414,0.0049505)
(416,0.0044150)
(429,0.0088492)
(437,0.0074349)
(440,0.0136986)
(464,0.0118343)
(465,0.0074906)
(475,0.0095694)
(480,0.0058140)
(486,0.0109290)
(504,0.0173276)
(513,0.0162602)
(560,0.0125786)
(561,0.0132450)
(575,0.0147794)
(600,0.0152672)
(624,0.0075472)
(640,0.0187850)
(651,0.0203030)
(775,0.0261039)
(783,0.0310769)
(784,0.0117347)
(840,0.0281690)
(900,0.0294118)
(924,0.0434783)
(928,0.0203030)
(1024,0.0490244)
(1089,0.0414286)
(1260,0.0412245)
(1860,0.1442857)
(1920,0.2333333)
(2736,0.1881818)
(2856,0.2377778)
(2880,0.3350000)
(3016,0.4640000)
(3080,0.4960000)
(3840,0.6300000)
(4000,0.6933333)
(4200,0.2587500)
(4224,0.3650000)
(4480,1.0150000)
(4752,0.8033333)
(7440,0.9366667)
(7544,1.2000000)
(7680,2.3900000)
(8064,2.6200000)};
\SaucyOnlyMarks coordinates {
(6,0.0000019)
(8,0.0000030)
(10,0.0000042)
(12,0.0000056)
(16,0.0000090)
(24,0.0000154)
(32,0.0000278)
(36,0.0000614)
(40,0.0000244)
(45,0.0000570)
(48,0.0001109)
(52,0.0000991)
(64,0.0001165)
(72,0.0001807)
(90,0.0003813)
(99,0.0004735)
(100,0.0001511)
(108,0.0002460)
(110,0.0001174)
(120,0.0006079)
(126,0.0006297)
(130,0.0005950)
(132,0.0006464)
(135,0.0003942)
(144,0.0008846)
(160,0.0009258)
(170,0.0002874)
(176,0.0002285)
(180,0.0010758)
(182,0.0002824)
(187,0.0003895)
(189,0.0005350)
(192,0.0011785)
(198,0.0015909)
(210,0.0006050)
(216,0.0004273)
(221,0.0002565)
(243,0.0005458)
(264,0.0020111)
(270,0.0004502)
(300,0.0004309)
(308,0.0010352)
(312,0.0026163)
(322,0.0026724)
(340,0.0006870)
(352,0.0045873)
(360,0.0017952)
(375,0.0009725)
(396,0.0023704)
(400,0.0009365)
(405,0.0110835)
(414,0.0020445)
(416,0.0029028)
(429,0.0030803)
(437,0.0015097)
(440,0.0049147)
(464,0.0009853)
(465,0.0016529)
(475,0.0037103)
(480,0.0008001)
(486,0.0116697)
(504,0.0014213)
(513,0.0068290)
(560,0.0067513)
(561,0.0059247)
(575,0.0036156)
(600,0.0137052)
(624,0.0009691)
(640,0.0019785)
(775,0.0477434)
(783,0.0085072)
(784,0.0085250)
(840,0.0161761)
(900,0.0279112)
(924,0.0490921)
(1024,0.0021214)
(1089,0.0030520)
(1260,0.0020039)
(1860,0.0455024)
(1920,0.0045052)
(2736,0.0041733)
(2856,0.0090857)
(2880,0.0051072)
(3016,0.0060048)
(3080,0.0081644)
(3840,0.0933741)
(4200,0.0052848)
(4480,0.0087659)
(4752,0.0110014)
(7296,0.2547917)
(7440,0.0091123)
(7544,0.1724707)
(8064,0.0188008)
(11340,0.4606038)
(12288,0.5535155)
(13824,0.0233274)
(36864,0.0670103)
(38400,0.0579284)
(46592,0.1234188)
(47104,0.0741452)
(56832,0.1125106)};
\TracesOnlyMarks coordinates {
(6,0.0000066)
(8,0.0000127)
(10,0.0000163)
(12,0.0000185)
(16,0.0000310)
(20,0.0000365)
(24,0.0000378)
(32,0.0000412)
(36,0.0000790)
(40,0.0000277)
(45,0.0000838)
(48,0.0001475)
(52,0.0002781)
(64,0.0001562)
(72,0.0001947)
(80,0.0001565)
(81,0.0004879)
(90,0.0010381)
(99,0.0019572)
(100,0.0002505)
(108,0.0001867)
(110,0.0000890)
(120,0.0006982)
(126,0.0005473)
(130,0.0007078)
(132,0.0003482)
(135,0.0007566)
(144,0.0009662)
(160,0.0011211)
(170,0.0001409)
(176,0.0001330)
(180,0.0004252)
(182,0.0002432)
(187,0.0002793)
(189,0.0003650)
(192,0.0004587)
(198,0.0008197)
(208,0.0012755)
(210,0.0002938)
(216,0.0003025)
(221,0.0001867)
(243,0.0006579)
(264,0.0007062)
(270,0.0004554)
(300,0.0003378)
(308,0.0005682)
(312,0.0029841)
(322,0.0005092)
(340,0.0003618)
(352,0.0011848)
(360,0.0006667)
(375,0.0004052)
(396,0.0006964)
(400,0.0004970)
(405,0.0147059)
(414,0.0009653)
(416,0.0008224)
(429,0.0008772)
(437,0.0007003)
(440,0.0011312)
(464,0.0009881)
(465,0.0006631)
(475,0.0010549)
(480,0.0004798)
(486,0.0170833)
(504,0.0015244)
(513,0.0013966)
(560,0.0028409)
(561,0.0012195)
(575,0.0015924)
(600,0.0145833)
(624,0.0008503)
(640,0.0018116)
(651,0.0009363)
(775,0.0562500)
(783,0.0018611)
(784,0.0033830)
(840,0.0251250)
(900,0.0122619)
(924,0.1029167)
(928,0.0077273)
(1024,0.0015244)
(1089,0.0019685)
(1260,0.0021113)
(1860,0.1127778)
(1920,0.0049020)
(2736,0.0039648)
(2856,0.0054891)
(2880,0.0037313)
(3016,0.0039881)
(3080,0.0051531)
(3840,0.5275000)
(4000,0.0740625)
(4200,0.0032051)
(4224,0.0041667)
(4480,0.0056667)
(4752,0.0099038)
(7296,0.1538462)
(7440,0.0069257)
(7544,0.1366667)
(7680,0.0192308)
(8064,0.0163281)
(11340,0.3114286)
(12288,0.3616667)
(13824,0.0178571)
(36864,0.0575000)
(38400,0.0431250)
(46200,0.0833333)
(46592,0.1241176)
(47104,0.0787500)
(56832,0.0937500)};
\end{loglogaxis} 
\end{tikzpicture}
\def \timeout{120}
\def \xmin{6}
\def \xmax{56832}
\def \ymin{-0}
\def \ymax{20}
\begin{loglogaxis}[
width=9cm,
height=8cm,
xmin=\xmin,  xmax=\xmax, 
ymin=\ymin,  ymax=\ymax, 
enlargelimits=0.01, 
grid=major]
\BlissOnlyMarks coordinates {
(6,0.0000112)
(8,0.0000169)
(10,0.0000258)
(12,0.0000314)
(16,0.0000621)
(20,0.0000859)
(24,0.0000750)
(32,0.0001135)
(36,0.0002630)
(40,0.0000947)
(45,0.0003313)
(48,0.0005175)
(52,0.0007013)
(64,0.0004083)
(72,0.0005834)
(80,0.0005507)
(81,0.0011466)
(90,0.0016445)
(99,0.0021631)
(100,0.0005919)
(108,0.0005767)
(110,0.0003001)
(120,0.0018744)
(126,0.0020493)
(130,0.0019659)
(132,0.0021954)
(135,0.0011099)
(144,0.0028911)
(160,0.0028409)
(170,0.0006026)
(176,0.0007145)
(180,0.0036765)
(182,0.0008356)
(187,0.0021368)
(189,0.0040900)
(192,0.0038241)
(198,0.0053476)
(208,0.0048900)
(210,0.0017065)
(216,0.0012269)
(221,0.0009294)
(243,0.0022051)
(264,0.0082305)
(270,0.0012658)
(300,0.0012723)
(308,0.0036765)
(312,0.0108768)
(322,0.0082645)
(340,0.0021345)
(352,0.0154615)
(360,0.0064935)
(375,0.0030257)
(396,0.0048077)
(400,0.0032520)
(405,0.0384615)
(414,0.0091324)
(416,0.0077821)
(429,0.0163252)
(437,0.0088889)
(440,0.0148148)
(464,0.0051813)
(465,0.0078125)
(475,0.0140845)
(480,0.0019305)
(486,0.0358929)
(504,0.0036101)
(513,0.0301493)
(560,0.0201000)
(561,0.0191429)
(575,0.0354386)
(600,0.0439130)
(624,0.0020555)
(640,0.0039139)
(651,0.0053050)
(775,0.1464286)
(783,0.0346552)
(784,0.0802254)
(840,0.0476190)
(900,0.1661538)
(924,0.1872727)
(928,0.0523077)
(1024,0.0057971)
(1089,0.0170339)
(1260,0.0055556)
(1860,0.1250000)
(1920,0.0120482)
(2736,0.0104712)
(2856,0.0285714)
(2880,0.0164754)
(3016,0.0152672)
(3080,0.0279167)
(3840,0.3228571)
(4000,0.4840000)
(4200,0.0150000)
(4224,0.0133333)
(4480,0.0239286)
(4752,0.0268000)
(7296,0.7866667)
(7440,0.0377358)
(7544,0.6500000)
(7680,0.0566667)
(8064,0.0492683)
(11340,1.8000000)
(12288,2.3000000)
(13824,0.1241176)
(36864,0.2120000)
(38400,0.2200000)
(46200,0.5175000)
(46592,1.2550000)
(47104,0.4020000)
(56832,1.0200000)};
\NautyOnlyMarks coordinates {
(6,0.0000013)
(8,0.0000022)
(10,0.0000044)
(12,0.0000076)
(16,0.0000125)
(20,0.0000220)
(24,0.0000284)
(32,0.0000416)
(36,0.0000513)
(40,0.0000132)
(45,0.0000752)
(48,0.0001067)
(52,0.0005081)
(64,0.0002746)
(72,0.0005182)
(80,0.0001615)
(81,0.0014946)
(90,0.0019034)
(99,0.0038247)
(100,0.0018519)
(108,0.0001472)
(110,0.0000662)
(120,0.0007906)
(126,0.0007652)
(130,0.0008880)
(132,0.0004386)
(135,0.0052302)
(144,0.0019262)
(160,0.0014368)
(170,0.0001179)
(176,0.0001265)
(180,0.0007418)
(182,0.0001957)
(187,0.0003597)
(189,0.0005543)
(192,0.0008651)
(198,0.0010460)
(208,0.0013228)
(210,0.0005040)
(216,0.0003314)
(221,0.0001605)
(243,0.0006127)
(264,0.0012821)
(270,0.0003511)
(300,0.0003392)
(308,0.0008446)
(312,0.0045763)
(322,0.0008503)
(340,0.0004771)
(352,0.0017092)
(360,0.0008197)
(375,0.0004333)
(396,0.0010417)
(400,0.0005967)
(405,0.0221875)
(414,0.0018611)
(416,0.0015060)
(429,0.0011956)
(437,0.0007042)
(440,0.0015320)
(464,0.0010823)
(465,0.0009615)
(475,0.0014535)
(480,0.0005760)
(486,0.0351562)
(504,0.0022124)
(513,0.0016340)
(560,0.0032372)
(561,0.0017986)
(575,0.0023364)
(600,0.0150000)
(624,0.0006098)
(640,0.0018474)
(651,0.0009398)
(775,0.1386667)
(783,0.0027174)
(784,0.0032856)
(840,0.0287500)
(900,0.0298611)
(924,0.2150000)
(928,0.0191964)
(1024,0.0017819)
(1089,0.0024752)
(1260,0.0018939)
(1860,0.1366667)
(1920,0.0090179)
(2736,0.0032372)
(2856,0.0070833)
(2880,0.0045909)
(3016,0.0039453)
(3080,0.0053191)
(3840,0.8233333)
(4000,0.4080000)
(4200,0.0047642)
(4224,0.0058430)
(4480,0.0058430)
(4752,0.0113636)
(7296,2.1000000)
(7440,0.0134868)
(7544,1.6250000)
(7680,0.0248864)
(8064,0.0178571)
(11340,3.7200000)
(12288,4.5200000)
(13824,0.0505000)
(36864,0.4080000)
(38400,0.3200000)
(46200,0.1306250)
(46592,0.7200000)
(47104,0.6150000)
(56832,1.1100000)};
\TracesOnlyMarks coordinates {
(6,0.0000066)
(8,0.0000128)
(10,0.0000160)
(12,0.0000186)
(16,0.0000313)
(20,0.0000369)
(24,0.0000365)
(32,0.0000419)
(36,0.0000800)
(40,0.0000308)
(45,0.0000838)
(48,0.0001487)
(52,0.0002796)
(64,0.0001581)
(72,0.0001955)
(80,0.0001578)
(81,0.0004907)
(90,0.0010342)
(99,0.0019722)
(100,0.0002580)
(108,0.0001885)
(110,0.0001008)
(120,0.0006942)
(126,0.0005540)
(130,0.0007121)
(132,0.0003506)
(135,0.0007537)
(144,0.0009948)
(160,0.0011261)
(170,0.0001800)
(176,0.0001690)
(180,0.0004274)
(182,0.0002560)
(187,0.0003053)
(189,0.0004417)
(192,0.0004990)
(198,0.0008251)
(208,0.0012755)
(210,0.0003463)
(216,0.0003146)
(221,0.0002296)
(243,0.0006596)
(264,0.0008865)
(270,0.0004587)
(300,0.0003704)
(308,0.0005800)
(312,0.0030076)
(322,0.0008143)
(340,0.0003864)
(352,0.0015924)
(360,0.0008591)
(375,0.0006313)
(396,0.0008143)
(400,0.0005995)
(405,0.0147794)
(414,0.0010163)
(416,0.0009191)
(429,0.0014521)
(437,0.0010776)
(440,0.0015625)
(464,0.0010331)
(465,0.0010000)
(475,0.0014620)
(480,0.0004854)
(486,0.0171667)
(504,0.0015337)
(513,0.0022523)
(560,0.0027778)
(561,0.0019531)
(575,0.0024632)
(600,0.0143750)
(624,0.0008446)
(640,0.0017986)
(651,0.0010204)
(775,0.0565000)
(783,0.0034247)
(784,0.0035350)
(840,0.0251250)
(900,0.0134211)
(924,0.0995833)
(928,0.0079688)
(1024,0.0014793)
(1089,0.0023810)
(1260,0.0021659)
(1860,0.1068421)
(1920,0.0049265)
(2736,0.0039453)
(2856,0.0056389)
(2880,0.0036949)
(3016,0.0043319)
(3080,0.0067905)
(3840,0.5300000)
(4000,0.0768750)
(4200,0.0032630)
(4224,0.0056111)
(4480,0.0056667)
(4752,0.0098077)
(7296,0.1561538)
(7440,0.0089286)
(7544,0.1331250)
(7680,0.0195192)
(8064,0.0163281)
(11340,0.3114286)
(12288,0.3733333)
(13824,0.0178571)
(36864,0.0646875)
(38400,0.0587500)
(46200,0.1016667)
(46592,0.1262500)
(47104,0.0941667)
(56832,0.1211765)};
\end{loglogaxis} 
\end{tikzpicture}
   \multicolumn{2}{l}{(Non-disjoint) union of tripartite graphs} \\
   \begin{tikzpicture}[transform shape, scale=\picScale]
\def \timeout{600}
\def \xmin{26}
\def \xmax{1014}
\def \ymin{-0}
\def \ymax{1700}
\begin{semilogyaxis}[
width=9cm,
height=8cm,
xmin=\xmin,  xmax=\xmax, 
ymin=\ymin,  ymax=\ymax, 
enlargelimits=0.01, 
legend style={anchor=north west,
at={(0.01,0.998)},
font=\tiny,
inner xsep=-.5pt,
inner ysep=-.5pt,
fill=white,
draw=none},
grid=major]
\addplot [lightgray, no markers,line width=3pt] coordinates {(\xmin,\timeout) (\xmax,\timeout)};
\addlegendentry{timeout: 600 secs}
\Bliss coordinates {
(26,0.0000363)
(52,0.0001630)
(78,0.0004068)
(104,0.0009867)
(182,0.0033956)
(286,0.0112360)
(416,0.0574286)
(598,0.1528571)
(806,0.2957143)
(1014,0.4960000)};
\Nauty coordinates {
(26,0.0000118)
(52,0.0001084)
(78,0.0003497)
(104,0.0007937)
(182,0.0053989)
(286,0.0283333)
(416,0.1406667)
(598,0.5250000)
(806,1.1650000)
(1014,2.6300000)};
\Conauto coordinates {
(26,0.0000589)
(52,0.0001808)
(78,0.0003905)
(104,0.0006618)
(182,0.0021882)
(286,0.0061728)
(416,0.0160000)
(598,0.0453333)
(806,0.1089474)
(1014,0.2737500)};
\Saucy coordinates {
(26,0.0000125)
(52,0.0000440)
(78,0.0013219)
(104,0.0001932)
(182,1.0527605)
(286,\timeout)
};
\Traces coordinates {
(26,0.0000554)
(52,0.0001143)
(78,0.0004845)
(104,0.0010504)
(182,0.0069792)
(286,0.0460417)
(416,0.2140000)
(598,1.3400000)
(806,45.8200000)
(1014,\timeout)};
\end{semilogyaxis} 
\end{tikzpicture}
\def \timeout{600}
\def \xmin{26}
\def \xmax{1014}
\def \ymin{-0}
\def \ymax{1700}
\begin{semilogyaxis}[
width=9cm,
height=8cm,
xmin=\xmin,  xmax=\xmax, 
ymin=\ymin,  ymax=\ymax, 
enlargelimits=0.01, 
legend style={anchor=north west,
at={(0.01,0.998)},
font=\tiny,
inner xsep=-.5pt,
inner ysep=-.5pt,
fill=white,
draw=none},
grid=major]
\addplot [lightgray, no markers,line width=3pt] coordinates {(\xmin,\timeout) (\xmax,\timeout)};
\addlegendentry{timeout: 600 secs}
\Bliss coordinates {
(26,0.0000635)
(52,0.0006633)
(78,0.0023229)
(104,0.1366667)
(182,9.3300000)
(286,\timeout)
};
\Nauty coordinates {
(26,0.0000350)
(52,0.0006053)
(78,0.0052344)
(104,0.0191071)
(182,2.0400000)
(286,\timeout)
};
\Traces coordinates {
(26,0.0000793)
(52,0.0001630)
(78,0.0006944)
(104,0.0014881)
(182,0.0093519)
(286,0.0656250)
(416,0.3142857)
(598,2.2200000)
(806,55.6300000)
(1014,\timeout)};
\end{semilogyaxis} 
\end{tikzpicture}
    \hline
   \multicolumn{2}{l}{\showBliss\hspace{5pt}{\bliss}\hspace{10pt}\showSaucy\hspace{5pt}{\saucy}\hspace{10pt}
   \showConauto\hspace{5pt}{\conauto}\hspace{10pt}\showNauty\hspace{5pt}{\nauty}\hspace{10pt}\showTraces\hspace{5pt}{\traces}} \\
  \end{tabular}
\end{center}
\caption{Performance comparison
  (horizontal: number of vertices; vertical: time in seconds)}
\label{perf2}
\end{figure}


\begin{figure}[p]
\begin{center}
  \begin{tabular}{@{} cc @{}}
    Automorphism group & Canonical label \\
    \hline
   \multicolumn{2}{l}{Small strongly-regular graphs} \\
   \input{fams/srg-small.tex} & \begin{tikzpicture}[transform shape, scale=\picScale]
\def \timeout{120}
\def \xmin{4}
\def \xmax{500}
\def \ymin{-0}
\def \ymax{10}
\begin{semilogyaxis}[
width=9cm,
height=8cm,
xmin=\xmin,  xmax=\xmax, 
ymin=\ymin,  ymax=\ymax, 
enlargelimits=0.01, 
grid=major,
scaled x ticks = false,
x tick label style={/pgf/number format/fixed},
xtick={100,200,300,400,500}]
\BlissOnlyMarks coordinates {
(4,0.0000130)
(5,0.0000110)
(6,0.0001194)
(7,0.0000280)
(9,0.0000238)
(10,0.0001790)
(12,0.0000511)
(13,0.0000236)
(15,0.0003109)
(16,0.0000540)
(17,0.0000340)
(21,0.0004955)
(25,0.0000996)
(26,0.0001203)
(28,0.0009901)
(29,0.0000847)
(35,0.0002896)
(36,0.0006864)
(37,0.0001435)
(41,0.0002720)
(45,0.0003986)
(49,0.0003783)
(53,0.0003717)
(55,0.0006487)
(57,0.0072604)
(61,0.0005054)
(64,0.0006478)
(66,0.0008482)
(70,0.0128471)
(73,0.0005889)
(78,0.0011325)
(81,0.0010622)
(89,0.0008711)
(91,0.0015773)
(97,0.0010325)
(100,0.0369658)
(101,0.0017376)
(105,0.0021858)
(109,0.0013132)
(113,0.0017778)
(117,0.0585876)
(120,0.0026212)
(121,0.0590761)
(125,0.0031008)
(136,0.0032733)
(137,0.0020899)
(144,0.1099728)
(149,0.0025063)
(153,0.0042283)
(155,0.1407175)
(157,0.0035149)
(169,0.1623786)
(171,0.0186111)
(173,0.0033670)
(176,0.1987289)
(181,0.0056818)
(190,0.0390385)
(193,0.0053050)
(196,0.2619277)
(197,0.0056180)
(210,0.0436957)
(222,0.4159545)
(225,0.4368480)
(229,0.0076923)
(231,0.0548649)
(233,0.0062696)
(241,0.0066890)
(247,0.5546678)
(253,0.0561111)
(256,0.5904214)
(257,0.0075758)
(269,0.0086580)
(276,0.0676667)
(277,0.0090909)
(281,0.0117647)
(289,0.8222386)
(293,0.0100000)
(300,0.0788462)
(301,0.9981944)
(313,0.0178571)
(317,0.0152672)
(324,1.2460984)
(325,0.1116667)
(330,1.2924949)
(337,0.0136054)
(349,0.0186111)
(351,0.1205882)
(353,0.0190476)
(361,1.6123216)
(373,0.0215054)
(378,0.1340000)
(389,0.0183486)
(392,2.2488889)
(397,0.0190476)
(400,2.2761979)
(401,0.0198020)
(406,0.1306250)
(409,0.0263158)
(421,0.0280556)
(425,2.7872994)
(433,0.0297059)
(435,0.1542857)
(441,3.0052080)
(449,0.0451111)
(457,0.0327869)
(461,0.0335000)
(484,3.9417521)
(495,4.5779167)
(529,5.2525789)
(532,5.5220270)
(576,7.0065077)
(610,8.8133333)
(625,8.8033295)
(651,10.2770076)
(676,10.7207730)
(729,13.2586325)
(737,15.6608333)
(782,17.8313172)
(784,16.7798352)
(841,20.9255271)
(876,26.2691667)
(900,26.0196702)
(925,30.3591667)
(1024,43.1100000)
(1027,42.8983333)
(1156,62.9600000)
(1225,75.9100000)
(1296,92.2300000)
(1444,127.2800000)
(1600,176.9600000)
(2025,356.4200000)
(2500,\timeout)};
\NautyOnlyMarks coordinates {
(4,0.0000014)
(5,0.0000011)
(6,0.0000021)
(7,0.0000045)
(9,0.0000043)
(10,0.0000036)
(12,0.0000108)
(13,0.0000031)
(15,0.0000093)
(16,0.0000083)
(17,0.0000046)
(21,0.0000125)
(25,0.0000151)
(26,0.0000813)
(28,0.0000288)
(29,0.0000086)
(35,0.0000790)
(36,0.0000581)
(37,0.0000152)
(41,0.0000149)
(45,0.0000918)
(49,0.0000749)
(53,0.0000205)
(55,0.0001658)
(57,0.0080631)
(61,0.0000338)
(64,0.0001581)
(66,0.0002374)
(70,0.0140630)
(73,0.0000351)
(78,0.0003383)
(81,0.0002177)
(89,0.0000631)
(91,0.0004744)
(97,0.0000580)
(100,0.0330984)
(101,0.0000813)
(105,0.0005593)
(109,0.0000702)
(113,0.0001012)
(117,0.0558491)
(120,0.0007396)
(121,0.0398384)
(125,0.0004596)
(136,0.0010204)
(137,0.0001451)
(144,0.0680388)
(149,0.0001255)
(153,0.0012887)
(155,0.1194853)
(157,0.0001753)
(169,0.1074674)
(171,0.0015924)
(173,0.0002137)
(176,0.1614969)
(181,0.0001797)
(190,0.0017819)
(193,0.0003129)
(196,0.1921905)
(197,0.0002668)
(210,0.0021848)
(222,0.3045238)
(225,0.2203997)
(229,0.0002660)
(231,0.0028230)
(233,0.0004281)
(241,0.0003823)
(247,0.3956783)
(253,0.0033500)
(256,0.3433514)
(257,0.0004160)
(269,0.0003582)
(276,0.0039453)
(277,0.0004717)
(281,0.0003805)
(289,0.4988366)
(293,0.0005263)
(300,0.0046136)
(301,0.6817708)
(313,0.0004638)
(317,0.0006039)
(324,0.5589598)
(325,0.0053723)
(330,0.8389139)
(337,0.0006793)
(349,0.0007082)
(351,0.0062500)
(353,0.0005841)
(361,0.7778131)
(373,0.0006443)
(378,0.0067905)
(389,0.0006964)
(392,1.3456667)
(397,0.0009124)
(400,1.2792957)
(401,0.0007310)
(406,0.0078125)
(409,0.0007553)
(421,0.0010040)
(425,1.6148275)
(433,0.0008333)
(435,0.0089224)
(441,1.5411408)
(449,0.0011161)
(457,0.0014045)
(461,0.0011628)
(484,2.0622054)
(495,2.4727778)
(529,1.9931524)
(532,2.8872222)
(576,2.4543163)
(610,4.1862500)
(625,3.1283850)
(651,4.8897083)
(676,3.7484458)
(729,4.6576549)
(737,6.9808333)
(782,7.8930937)
(784,6.3707807)
(841,7.4596524)
(876,11.0258333)
(900,8.1302056)
(925,12.7375260)
(1024,13.2600000)
(1027,16.6583333)
(1156,18.0800000)
(1225,20.6000000)
(1296,24.1200000)
(1444,31.8700000)
(1600,41.7700000)
(2025,78.6900000)
(2500,136.4400000)};
\TracesOnlyMarks coordinates {
(4,0.0000064)
(5,0.0000068)
(6,0.0000120)
(7,0.0000066)
(9,0.0000154)
(10,0.0000158)
(12,0.0000263)
(13,0.0000146)
(15,0.0000236)
(16,0.0000266)
(17,0.0000181)
(21,0.0000366)
(25,0.0000372)
(26,0.0000641)
(28,0.0000606)
(29,0.0000257)
(35,0.0000907)
(36,0.0000758)
(37,0.0000349)
(41,0.0000516)
(45,0.0001438)
(49,0.0001021)
(53,0.0000582)
(55,0.0002024)
(57,0.0015699)
(61,0.0000750)
(64,0.0001900)
(66,0.0002874)
(70,0.0025074)
(73,0.0000831)
(78,0.0003846)
(81,0.0002185)
(89,0.0001143)
(91,0.0005423)
(97,0.0001314)
(100,0.0060434)
(101,0.0001857)
(105,0.0007022)
(109,0.0001605)
(113,0.0001935)
(117,0.0096300)
(120,0.0009225)
(121,0.0078539)
(125,0.0003788)
(136,0.0011364)
(137,0.0002287)
(144,0.0132799)
(149,0.0002599)
(153,0.0014620)
(155,0.0203950)
(157,0.0003303)
(169,0.0186078)
(171,0.0018248)
(173,0.0003268)
(176,0.0266446)
(181,0.0004950)
(190,0.0022124)
(193,0.0004655)
(196,0.0286902)
(197,0.0004892)
(210,0.0026172)
(222,0.0510436)
(225,0.0417843)
(229,0.0006313)
(231,0.0031646)
(233,0.0005556)
(241,0.0006024)
(247,0.0641396)
(253,0.0041189)
(256,0.0581619)
(257,0.0006427)
(269,0.0007331)
(276,0.0045682)
(277,0.0007440)
(281,0.0008897)
(289,0.0814112)
(293,0.0008278)
(300,0.0053457)
(301,0.1140037)
(313,0.0012376)
(317,0.0011364)
(324,0.1166627)
(325,0.0065705)
(330,0.1402265)
(337,0.0010870)
(349,0.0013441)
(351,0.0076894)
(353,0.0013736)
(361,0.1402735)
(373,0.0014970)
(378,0.0083065)
(389,0.0013812)
(392,0.2305750)
(397,0.0014535)
(400,0.1943956)
(401,0.0014793)
(406,0.0093056)
(409,0.0018116)
(421,0.0019380)
(425,0.2723133)
(433,0.0019841)
(435,0.0115909)
(441,0.2590226)
(449,0.0026882)
(457,0.0022433)
(461,0.0022235)
(484,0.3151886)
(495,0.4302727)
(529,0.4172742)
(532,0.5023681)
(576,0.4915291)
(610,0.7655556)
(625,0.6130718)
(651,0.8892393)
(676,0.7508001)
(729,0.9382786)
(737,1.3119444)
(782,1.4869737)
(784,1.1128398)
(841,1.3406538)
(876,2.1000000)
(900,1.5811321)
(925,2.3620703)
(1024,2.6100000)
(1027,3.3975000)
(1156,3.8100000)
(1225,4.2100000)
(1296,4.9300000)
(1444,7.0700000)
(1600,8.8400000)
(2025,16.6700000)
(2500,28.7800000)};
\end{semilogyaxis} 
\end{tikzpicture}
   \multicolumn{2}{l}{Large strongly-regular graphs} \\
   \input{fams/srg.tex} & \begin{tikzpicture}[transform shape, scale=\picScale]
\def \timeout{600}
\def \xmin{500}
\def \xmax{2500}
\def \ymin{-0}
\def \ymax{900}
\begin{semilogyaxis}[
width=9cm,
height=8cm,
xmin=\xmin,  xmax=\xmax, 
ymin=\ymin,  ymax=\ymax, 
enlargelimits=0.01, 
legend style={anchor=north west,
at={(0.01,0.998)},
font=\tiny,
inner xsep=-.5pt,
inner ysep=-.5pt,
fill=white,
draw=none},
grid=major,
scaled x ticks = false,
x tick label style={/pgf/number format/fixed},
xtick={500,1000,1500,2000,2500}]
\addplot [lightgray, no markers,line width=3pt] coordinates {(0,\timeout) (3000,\timeout)};
\addlegendentry{timeout: 600 secs}
\Bliss coordinates {
(4,0.0000130)
(5,0.0000110)
(6,0.0001194)
(7,0.0000280)
(9,0.0000238)
(10,0.0001790)
(12,0.0000511)
(13,0.0000236)
(15,0.0003109)
(16,0.0000540)
(17,0.0000340)
(21,0.0004955)
(25,0.0000996)
(26,0.0001203)
(28,0.0009901)
(29,0.0000847)
(35,0.0002896)
(36,0.0006864)
(37,0.0001435)
(41,0.0002720)
(45,0.0003986)
(49,0.0003783)
(53,0.0003717)
(55,0.0006487)
(57,0.0072604)
(61,0.0005054)
(64,0.0006478)
(66,0.0008482)
(70,0.0128471)
(73,0.0005889)
(78,0.0011325)
(81,0.0010622)
(89,0.0008711)
(91,0.0015773)
(97,0.0010325)
(100,0.0369658)
(101,0.0017376)
(105,0.0021858)
(109,0.0013132)
(113,0.0017778)
(117,0.0585876)
(120,0.0026212)
(121,0.0590761)
(125,0.0031008)
(136,0.0032733)
(137,0.0020899)
(144,0.1099728)
(149,0.0025063)
(153,0.0042283)
(155,0.1407175)
(157,0.0035149)
(169,0.1623786)
(171,0.0186111)
(173,0.0033670)
(176,0.1987289)
(181,0.0056818)
(190,0.0390385)
(193,0.0053050)
(196,0.2619277)
(197,0.0056180)
(210,0.0436957)
(222,0.4159545)
(225,0.4368480)
(229,0.0076923)
(231,0.0548649)
(233,0.0062696)
(241,0.0066890)
(247,0.5546678)
(253,0.0561111)
(256,0.5904214)
(257,0.0075758)
(269,0.0086580)
(276,0.0676667)
(277,0.0090909)
(281,0.0117647)
(289,0.8222386)
(293,0.0100000)
(300,0.0788462)
(301,0.9981944)
(313,0.0178571)
(317,0.0152672)
(324,1.2460984)
(325,0.1116667)
(330,1.2924949)
(337,0.0136054)
(349,0.0186111)
(351,0.1205882)
(353,0.0190476)
(361,1.6123216)
(373,0.0215054)
(378,0.1340000)
(389,0.0183486)
(392,2.2488889)
(397,0.0190476)
(400,2.2761979)
(401,0.0198020)
(406,0.1306250)
(409,0.0263158)
(421,0.0280556)
(425,2.7872994)
(433,0.0297059)
(435,0.1542857)
(441,3.0052080)
(449,0.0451111)
(457,0.0327869)
(461,0.0335000)
(484,3.9417521)
(495,4.5779167)
(529,5.2525789)
(532,5.5220270)
(576,7.0065077)
(610,8.8133333)
(625,8.8033295)
(651,10.2770076)
(676,10.7207730)
(729,13.2586325)
(737,15.6608333)
(782,17.8313172)
(784,16.7798352)
(841,20.9255271)
(876,26.2691667)
(900,26.0196702)
(925,30.3591667)
(1024,43.1100000)
(1027,42.8983333)
(1156,62.9600000)
(1225,75.9100000)
(1296,92.2300000)
(1444,127.2800000)
(1600,176.9600000)
(2025,356.4200000)
(2500,\timeout)};
\Nauty coordinates {
(4,0.0000014)
(5,0.0000011)
(6,0.0000021)
(7,0.0000045)
(9,0.0000043)
(10,0.0000036)
(12,0.0000108)
(13,0.0000031)
(15,0.0000093)
(16,0.0000083)
(17,0.0000046)
(21,0.0000125)
(25,0.0000151)
(26,0.0000813)
(28,0.0000288)
(29,0.0000086)
(35,0.0000790)
(36,0.0000581)
(37,0.0000152)
(41,0.0000149)
(45,0.0000918)
(49,0.0000749)
(53,0.0000205)
(55,0.0001658)
(57,0.0080631)
(61,0.0000338)
(64,0.0001581)
(66,0.0002374)
(70,0.0140630)
(73,0.0000351)
(78,0.0003383)
(81,0.0002177)
(89,0.0000631)
(91,0.0004744)
(97,0.0000580)
(100,0.0330984)
(101,0.0000813)
(105,0.0005593)
(109,0.0000702)
(113,0.0001012)
(117,0.0558491)
(120,0.0007396)
(121,0.0398384)
(125,0.0004596)
(136,0.0010204)
(137,0.0001451)
(144,0.0680388)
(149,0.0001255)
(153,0.0012887)
(155,0.1194853)
(157,0.0001753)
(169,0.1074674)
(171,0.0015924)
(173,0.0002137)
(176,0.1614969)
(181,0.0001797)
(190,0.0017819)
(193,0.0003129)
(196,0.1921905)
(197,0.0002668)
(210,0.0021848)
(222,0.3045238)
(225,0.2203997)
(229,0.0002660)
(231,0.0028230)
(233,0.0004281)
(241,0.0003823)
(247,0.3956783)
(253,0.0033500)
(256,0.3433514)
(257,0.0004160)
(269,0.0003582)
(276,0.0039453)
(277,0.0004717)
(281,0.0003805)
(289,0.4988366)
(293,0.0005263)
(300,0.0046136)
(301,0.6817708)
(313,0.0004638)
(317,0.0006039)
(324,0.5589598)
(325,0.0053723)
(330,0.8389139)
(337,0.0006793)
(349,0.0007082)
(351,0.0062500)
(353,0.0005841)
(361,0.7778131)
(373,0.0006443)
(378,0.0067905)
(389,0.0006964)
(392,1.3456667)
(397,0.0009124)
(400,1.2792957)
(401,0.0007310)
(406,0.0078125)
(409,0.0007553)
(421,0.0010040)
(425,1.6148275)
(433,0.0008333)
(435,0.0089224)
(441,1.5411408)
(449,0.0011161)
(457,0.0014045)
(461,0.0011628)
(484,2.0622054)
(495,2.4727778)
(529,1.9931524)
(532,2.8872222)
(576,2.4543163)
(610,4.1862500)
(625,3.1283850)
(651,4.8897083)
(676,3.7484458)
(729,4.6576549)
(737,6.9808333)
(782,7.8930937)
(784,6.3707807)
(841,7.4596524)
(876,11.0258333)
(900,8.1302056)
(925,12.7375260)
(1024,13.2600000)
(1027,16.6583333)
(1156,18.0800000)
(1225,20.6000000)
(1296,24.1200000)
(1444,31.8700000)
(1600,41.7700000)
(2025,78.6900000)
(2500,136.4400000)};
\Traces coordinates {
(4,0.0000064)
(5,0.0000068)
(6,0.0000120)
(7,0.0000066)
(9,0.0000154)
(10,0.0000158)
(12,0.0000263)
(13,0.0000146)
(15,0.0000236)
(16,0.0000266)
(17,0.0000181)
(21,0.0000366)
(25,0.0000372)
(26,0.0000641)
(28,0.0000606)
(29,0.0000257)
(35,0.0000907)
(36,0.0000758)
(37,0.0000349)
(41,0.0000516)
(45,0.0001438)
(49,0.0001021)
(53,0.0000582)
(55,0.0002024)
(57,0.0015699)
(61,0.0000750)
(64,0.0001900)
(66,0.0002874)
(70,0.0025074)
(73,0.0000831)
(78,0.0003846)
(81,0.0002185)
(89,0.0001143)
(91,0.0005423)
(97,0.0001314)
(100,0.0060434)
(101,0.0001857)
(105,0.0007022)
(109,0.0001605)
(113,0.0001935)
(117,0.0096300)
(120,0.0009225)
(121,0.0078539)
(125,0.0003788)
(136,0.0011364)
(137,0.0002287)
(144,0.0132799)
(149,0.0002599)
(153,0.0014620)
(155,0.0203950)
(157,0.0003303)
(169,0.0186078)
(171,0.0018248)
(173,0.0003268)
(176,0.0266446)
(181,0.0004950)
(190,0.0022124)
(193,0.0004655)
(196,0.0286902)
(197,0.0004892)
(210,0.0026172)
(222,0.0510436)
(225,0.0417843)
(229,0.0006313)
(231,0.0031646)
(233,0.0005556)
(241,0.0006024)
(247,0.0641396)
(253,0.0041189)
(256,0.0581619)
(257,0.0006427)
(269,0.0007331)
(276,0.0045682)
(277,0.0007440)
(281,0.0008897)
(289,0.0814112)
(293,0.0008278)
(300,0.0053457)
(301,0.1140037)
(313,0.0012376)
(317,0.0011364)
(324,0.1166627)
(325,0.0065705)
(330,0.1402265)
(337,0.0010870)
(349,0.0013441)
(351,0.0076894)
(353,0.0013736)
(361,0.1402735)
(373,0.0014970)
(378,0.0083065)
(389,0.0013812)
(392,0.2305750)
(397,0.0014535)
(400,0.1943956)
(401,0.0014793)
(406,0.0093056)
(409,0.0018116)
(421,0.0019380)
(425,0.2723133)
(433,0.0019841)
(435,0.0115909)
(441,0.2590226)
(449,0.0026882)
(457,0.0022433)
(461,0.0022235)
(484,0.3151886)
(495,0.4302727)
(529,0.4172742)
(532,0.5023681)
(576,0.4915291)
(610,0.7655556)
(625,0.6130718)
(651,0.8892393)
(676,0.7508001)
(729,0.9382786)
(737,1.3119444)
(782,1.4869737)
(784,1.1128398)
(841,1.3406538)
(876,2.1000000)
(900,1.5811321)
(925,2.3620703)
(1024,2.6100000)
(1027,3.3975000)
(1156,3.8100000)
(1225,4.2100000)
(1296,4.9300000)
(1444,7.0700000)
(1600,8.8400000)
(2025,16.6700000)
(2500,28.7800000)};
\end{semilogyaxis} 
\end{tikzpicture}
   \multicolumn{2}{l}{Hadamard matrix graphs} \\
   \begin{tikzpicture}[transform shape, scale=\picScale]
\def \timeout{600}
\def \xmin{0}
\def \xmax{1030}
\def \ymin{-0}
\def \ymax{1700}
\begin{semilogyaxis}[
width=9cm,
height=8cm,
xmin=\xmin,  xmax=\xmax, 
ymin=\ymin,  ymax=\ymax, 
enlargelimits=0.01, 
legend style={anchor=north west,
at={(0.01,0.998)},
font=\tiny,
inner xsep=-.5pt,
inner ysep=-.5pt,
fill=white,
draw=none},
grid=major]
\addplot [lightgray, no markers,line width=3pt] coordinates {(-10,\timeout) (1100,\timeout)};
\addlegendentry{timeout: 600 secs}
\BlissOnlyMarks coordinates {
(4,0.0000588)
(8,0.0000497)
(16,0.0000289)
(32,0.0000985)
(48,0.0015349)
(64,0.0029630)
(80,0.0239286)
(96,0.0030769)
(112,0.0112994)
(128,0.0066225)
(144,0.0024420)
(160,0.0243902)
(176,0.0019627)
(192,0.0026810)
(208,0.0232558)
(224,0.0065574)
(240,0.0048309)
(256,0.0125786)
(272,0.0057471)
(288,0.0094340)
(304,0.0093897)
(320,0.0085106)
(336,0.0078125)
(352,0.0153435)
(368,0.0507500)
(384,0.0238095)
(400,0.4320000)
(416,0.0683333)
(432,0.0966667)
(448,0.4300000)
(464,0.4840000)
(480,0.1766667)
(496,0.1528571)
(512,0.3000000)
(528,0.1105263)
(544,0.3866667)
(560,0.1783333)
(576,0.1500000)
(592,0.2020000)
(608,0.1366667)
(624,0.6766667)
(640,0.2885714)
(656,0.1927273)
(672,0.1881818)
(688,0.7566667)
(704,0.4940000)
(720,0.2455556)
(736,1.9550000)
(752,110.3700000)
(768,0.2344444)
(784,0.2675000)
(800,0.2400000)
(816,0.3242857)
(832,5.3100000)
(848,0.2775000)
(864,0.2455556)
(880,0.3716667)
(896,0.2587500)
(912,0.3314286)
(928,3.7200000)
(944,161.9500000)
(960,0.0693103)
(976,0.1661538)
(992,0.1116667)
(1008,0.0976190)
(1024,0.2957143)};
\NautyOnlyMarks coordinates {
(4,0.0000009)
(8,0.0000016)
(16,0.0000055)
(32,0.0000161)
(48,0.0000495)
(64,0.0001029)
(80,0.0035211)
(96,0.0008961)
(112,0.0074632)
(128,0.0158594)
(144,0.0010460)
(160,0.0343750)
(176,0.0019841)
(192,0.0007418)
(208,0.7766667)
(224,0.0186607)
(240,0.0075758)
(256,0.0020833)
(272,0.0093981)
(288,0.0119643)
(304,0.0097596)
(320,0.0020161)
(336,0.0147059)
(352,0.0318750)
(368,2.6800000)
(384,0.0445833)
(400,6.0700000)
(416,0.0247727)
(432,0.0261250)
(448,1.4850000)
(464,7.1800000)
(480,0.1182353)
(496,0.0793750)
(512,0.0072857)
(528,0.0205769)
(544,1.4600000)
(560,0.0051786)
(576,0.2150000)
(592,0.0085000)
(608,0.0281944)
(624,6.7800000)
(640,0.2837500)
(656,0.0342187)
(672,0.0045909)
(688,18.7600000)
(704,0.0696875)
(720,0.1105263)
(736,89.1600000)
(752,\timeout)
(768,0.0064103)
(784,0.1800000)
(800,0.0068243)
(816,0.0668750)
(832,101.0200000)
(848,0.0582500)
(864,0.3583333)
(880,0.0850000)
(896,0.1827273)
(912,0.1972727)
(928,568.2500000)
(944,\timeout)
(960,0.0061585)
(976,2.9900000)
(992,1.3650000)
(1008,0.0912500)
(1024,0.0376786)};
\ConautoOnlyMarks coordinates {
(4,0.0000059)
(8,0.0000105)
(16,0.0000287)
(32,0.0000863)
(48,0.0001946)
(64,0.0002873)
(80,0.0015186)
(96,0.0010035)
(112,0.0009574)
(128,0.0014399)
(144,0.0013280)
(160,0.0047506)
(176,0.0014306)
(192,0.0017197)
(208,0.0673333)
(224,0.0044944)
(240,0.0030395)
(256,0.0044944)
(272,0.0035714)
(288,0.0047059)
(304,0.0046189)
(320,0.0050125)
(336,0.0050378)
(352,0.0074349)
(368,0.2060000)
(384,0.0178571)
(400,0.3466667)
(416,0.0070671)
(432,0.0087336)
(448,0.0248148)
(464,0.4680000)
(480,0.0116959)
(496,0.0103627)
(512,0.0173913)
(528,0.0114943)
(544,0.0824000)
(560,0.0160800)
(576,0.0153435)
(592,0.0262338)
(608,0.0139860)
(624,0.4940000)
(640,0.0229885)
(656,0.0184404)
(672,0.0203030)
(688,1.6600000)
(704,0.0404000)
(720,0.0233721)
(736,1.0800000)
(752,228.9900000)
(768,0.0275342)
(784,0.0295588)
(800,0.0288571)
(816,0.0335000)
(832,3.5900000)
(848,0.0320635)
(864,0.0327869)
(880,0.0404000)
(896,0.0327419)
(912,0.0374074)
(928,6.3700000)
(944,\timeout)
(960,0.0390385)
(976,0.0700000)
(992,0.0528947)
(1008,0.0500000)
(1024,0.0882609)};
\SaucyOnlyMarks coordinates {
(4,0.0000012)
(8,0.0000038)
(16,0.0000093)
(32,0.0000514)
(48,0.0001442)
(64,0.0002871)
(80,0.0115391)
(96,0.0054706)
(112,0.0073316)
(128,0.0520682)
(144,0.0187669)
(160,0.0519178)
(176,0.0750400)
(192,0.1062483)
(208,2.0414460)
(224,0.1010344)
(240,0.2399152)
(256,0.0061256)
(272,0.4331496)
(288,0.5021273)
(304,0.2004147)
(320,0.7852737)
(336,0.9347913)
(352,0.4282558)
(368,20.5983820)
(384,0.6452810)
(400,27.2043590)
(416,2.3443700)
(432,2.5824110)
(448,3.5693670)
(464,80.0669970)
(480,1.1458360)
(496,1.0899930)
(512,0.0273683)
(528,6.2764870)
(544,2.3948000)
(560,8.0387260)
(576,2.6581020)
(592,1.2463815)
(608,11.7267520)
(624,293.2123370)
(640,2.9052300)
(656,16.1550630)
(688,\timeout)
(704,5.4032480)
(720,24.4830980)
(736,\timeout)
(752,\timeout)
(768,32.4930320)
(784,2.4564590)
(800,40.1920980)
(816,5.2176720)
(832,\timeout)
(848,49.3869880)
(864,9.7506540)
(880,4.8504770)
(896,63.8004810)
(912,69.0142870)
(928,\timeout)
(944,\timeout)
(960,88.4347670)
(976,11.9730570)
(992,11.1558200)
(1008,114.5289100)
(1024,0.1260576)};
\TracesOnlyMarks coordinates {
(4,0.0000064)
(8,0.0000128)
(16,0.0000229)
(32,0.0000512)
(48,0.0001034)
(64,0.0001908)
(80,0.0006757)
(96,0.0007911)
(112,0.0005910)
(128,0.0018116)
(144,0.0008475)
(160,0.0030271)
(176,0.0008503)
(192,0.0010204)
(208,0.0114205)
(224,0.0028736)
(240,0.0018519)
(256,0.0032051)
(272,0.0018116)
(288,0.0020100)
(304,0.0022235)
(320,0.0024159)
(336,0.0023364)
(352,0.0033784)
(368,0.0429167)
(384,0.0044518)
(400,0.0550000)
(416,0.0037500)
(432,0.0035714)
(448,0.0152206)
(464,0.0756250)
(480,0.0065064)
(496,0.0048317)
(512,0.0131579)
(528,0.0054620)
(544,0.0086250)
(560,0.0074632)
(576,0.0072143)
(592,0.0065789)
(608,0.0067568)
(624,0.1008333)
(640,0.0115341)
(656,0.0075758)
(672,0.0100500)
(688,0.1406667)
(704,0.0316667)
(720,0.0102000)
(736,0.2333333)
(752,8.4800000)
(768,0.0127500)
(784,0.0119048)
(800,0.0164063)
(816,0.0122619)
(832,0.4480000)
(848,0.0169167)
(864,0.0155882)
(880,0.0147794)
(896,0.0181250)
(912,0.0197115)
(928,0.5425000)
(944,25.2700000)
(960,0.0169167)
(976,0.0460417)
(992,0.0326563)
(1008,0.0260000)
(1024,0.0595000)};
\end{semilogyaxis} 
\end{tikzpicture}
\def \timeout{600}
\def \xmin{0}
\def \xmax{1030}
\def \ymin{-0}
\def \ymax{1700}
\begin{semilogyaxis}[
width=9cm,
height=8cm,
xmin=\xmin,  xmax=\xmax, 
ymin=\ymin,  ymax=\ymax, 
enlargelimits=0.01, 
legend style={anchor=north west,
at={(0.01,0.998)},
font=\tiny,
inner xsep=-.5pt,
inner ysep=-.5pt,
fill=white,
draw=none},
grid=major]
\addplot [lightgray, no markers,line width=3pt] coordinates {(-10,\timeout) (1060,\timeout)};
\addlegendentry{timeout: 600 secs}
\NautyOnlyMarks coordinates {
(4,0.0000009)
(8,0.0000017)
(16,0.0000062)
(32,0.0000178)
(48,0.0000649)
(64,0.0001236)
(80,0.0007082)
(96,0.0017819)
(112,0.0021659)
(128,0.0160156)
(144,0.0010549)
(160,0.0483333)
(176,0.0004394)
(192,0.0005519)
(208,0.5950000)
(224,0.0154412)
(240,0.0075000)
(256,0.0025510)
(272,0.0010204)
(288,0.0011013)
(304,0.0025000)
(320,0.0018750)
(336,0.0145833)
(352,0.0066118)
(368,2.0600000)
(384,0.0517500)
(400,5.0800000)
(416,0.0250000)
(432,0.0266250)
(448,0.2712500)
(464,6.4400000)
(480,0.0165625)
(496,0.0458333)
(512,0.0087069)
(528,0.0023264)
(544,1.4550000)
(560,0.0052865)
(576,0.2150000)
(592,0.0067568)
(608,0.0023148)
(624,5.9000000)
(640,0.1068421)
(656,0.0339062)
(672,0.0037687)
(688,8.6000000)
(704,0.0547500)
(720,0.0038654)
(736,82.2900000)
(752,\timeout)
(768,0.0062500)
(784,0.0140972)
(800,0.0068581)
(816,0.0097596)
(832,64.5000000)
(848,0.0065789)
(864,0.3633333)
(880,0.0828125)
(896,0.1836364)
(912,0.0056667)
(928,543.6600000)
(944,\timeout)
(960,0.0051276)
(976,1.2600000)
(992,1.5650000)
(1008,0.0912500)
(1024,0.0468750)};
\BlissOnlyMarks coordinates {
(4,0.0000099)
(8,0.0000140)
(16,0.0000356)
(32,0.0001212)
(48,0.0002967)
(64,0.0006221)
(80,0.0018282)
(96,0.0042283)
(112,0.0055402)
(128,0.0142857)
(144,0.0025543)
(160,0.0420833)
(176,0.0040404)
(192,0.0032626)
(208,0.2675000)
(224,0.0280556)
(240,0.0125786)
(256,0.0132237)
(272,0.0147059)
(288,0.0658065)
(304,0.0268000)
(320,0.0116959)
(336,0.0131373)
(352,0.0434043)
(368,0.8733333)
(384,0.0609091)
(400,1.8400000)
(416,0.0176316)
(432,0.0220879)
(448,0.2900000)
(464,1.5650000)
(480,0.1615385)
(496,0.0517949)
(512,0.0551351)
(528,0.1138889)
(544,0.3650000)
(560,0.0577143)
(576,0.1576923)
(592,0.0980952)
(608,0.0561111)
(624,3.4400000)
(640,0.2433333)
(656,0.1741667)
(672,0.0873913)
(688,8.3700000)
(704,0.9533333)
(720,0.0645161)
(736,5.3200000)
(768,0.1340000)
(784,0.2377778)
(800,0.1450000)
(816,0.1963636)
(832,22.0500000)
(848,0.1700000)
(864,0.3257143)
(880,0.3683333)
(896,0.4440000)
(912,0.1045000)
(928,47.0800000)
(960,0.1340000)
(976,0.3766667)
(992,0.8433333)
(1008,0.4580000)
(1024,0.2985714)};
\BlissMed coordinates {
(752,\timeout)
(944,\timeout)};
\TracesOnlyMarks coordinates {
(4,0.0000066)
(8,0.0000111)
(16,0.0000231)
(32,0.0000507)
(48,0.0001053)
(64,0.0001901)
(80,0.0005531)
(96,0.0007911)
(112,0.0008929)
(128,0.0024159)
(144,0.0012953)
(160,0.0040079)
(176,0.0009091)
(192,0.0011111)
(208,0.0419643)
(224,0.0062500)
(240,0.0021008)
(256,0.0032051)
(272,0.0020262)
(288,0.0023050)
(304,0.0037879)
(320,0.0027747)
(336,0.0025902)
(352,0.0078030)
(368,0.2050000)
(384,0.0082258)
(400,0.2850000)
(416,0.0043103)
(432,0.0040524)
(448,0.0405357)
(464,0.4420000)
(480,0.0147059)
(496,0.0095833)
(512,0.0136842)
(528,0.0063437)
(544,0.0172500)
(560,0.0086207)
(576,0.0154412)
(592,0.0143056)
(608,0.0079297)
(624,0.4100000)
(640,0.0284722)
(656,0.0086638)
(672,0.0117045)
(688,0.5975000)
(704,0.0634375)
(720,0.0119048)
(736,2.2600000)
(768,0.0144444)
(784,0.0261250)
(800,0.0179464)
(816,0.0300000)
(832,1.4150000)
(848,0.0203846)
(864,0.0412500)
(880,0.0340625)
(896,0.0216667)
(912,0.0231818)
(928,3.3200000)
(960,0.0204808)
(976,0.0562500)
(992,0.1194118)
(1008,0.0297222)
(1024,0.0595000)};
\TracesMed coordinates {
(752,\timeout)
(944,\timeout)};
\end{semilogyaxis} 
\end{tikzpicture}
    \hline
   \multicolumn{2}{l}{\showBliss\hspace{5pt}{\bliss}\hspace{10pt}\showSaucy\hspace{5pt}{\saucy}\hspace{10pt}
   \showConauto\hspace{5pt}{\conauto}\hspace{10pt}\showNauty\hspace{5pt}{\nauty}\hspace{10pt}\showTraces\hspace{5pt}{\traces}} \\
  \end{tabular}
\end{center}
\caption{Performance comparison
  (horizontal: number of vertices; vertical: time in seconds)}
\label{perf3}
\end{figure}


\begin{figure}[p]
\begin{center}
  \begin{tabular}{@{} cc @{}}
    Automorphism group & Canonical label \\
    \hline
   \multicolumn{2}{l}{Random trees} \\
   \begin{tikzpicture}[transform shape, scale=\picScale]
\def \timeout{600}
\def \xmin{0}
\def \xmax{100000}
\def \ymin{-0}
\def \ymax{1700}
\begin{loglogaxis}[
width=9cm,
height=8cm,
xmin=\xmin,  xmax=\xmax, 
ymin=\ymin,  ymax=\ymax, 
enlargelimits=0.01, 
legend style={anchor=north west,
at={(0.01,0.998)},
font=\tiny,
inner xsep=-.5pt,
inner ysep=-.5pt,
fill=white,
draw=none},
grid=major]
\addplot [lightgray, no markers,line width=3pt] coordinates {(10,\timeout) (\xmax,\timeout)};
\addlegendentry{timeout: 600 secs}
\Bliss coordinates {
(10,0.0000067)
(20,0.0000078)
(50,0.0000198)
(100,0.0000281)
(200,0.0000481)
(300,0.0000825)
(400,0.0001332)
(500,0.0002000)
(600,0.0002372)
(700,0.0002826)
(800,0.0004091)
(900,0.0004406)
(1000,0.0004894)
(2000,0.0016892)
(5000,0.0076046)
(10000,0.0307692)
(20000,0.1205882)
(50000,0.7966667)
(100000,3.1200000)};
\Nauty coordinates {
(10,0.0000009)
(20,0.0000015)
(50,0.0000097)
(100,0.0000242)
(200,0.0000672)
(300,0.0001853)
(400,0.0005308)
(500,0.0011792)
(600,0.0013812)
(700,0.0018657)
(800,0.0031563)
(900,0.0041189)
(1000,0.0051531)
(2000,0.0450000)
(5000,0.5175000)
(10000,5.4300000)
(20000,49.2100000)
(50000,\timeout)};
\Conauto coordinates {
(10,0.0000092)
(20,0.0000126)
(50,0.0000441)
(100,0.0000966)
(200,0.0002204)
(300,0.0004518)
(400,0.0006335)
(500,0.0008969)
(600,0.0013689)
(700,0.0016667)
(800,0.0021030)
(900,0.0023781)
(1000,0.0033727)
(2000,0.0173276)
(5000,0.1127778)
(10000,0.4680000)
(20000,1.8450000)
(50000,\timeout)};
\Saucy coordinates {
(10,0.0000013)
(50,0.0000057)
(100,0.0000097)
(200,0.0000141)
(300,0.0000235)
(400,0.0000371)
(500,0.0000458)
(600,0.0000513)
(700,0.0000627)
(800,0.0000754)
(900,0.0000934)
(1000,0.0001038)
(2000,0.0002329)
(5000,0.0006200)
(10000,0.0013398)
(20000,0.0028944)
(50000,0.0094720)
(100000,0.0254946)};
\Traces coordinates {
(10,0.0000068)
(20,0.0000077)
(50,0.0000102)
(100,0.0000136)
(200,0.0000201)
(300,0.0000293)
(400,0.0000361)
(500,0.0000440)
(600,0.0000688)
(700,0.0000611)
(800,0.0000734)
(900,0.0000996)
(1000,0.0000952)
(2000,0.0002137)
(5000,0.0008651)
(10000,0.0025902)
(20000,0.0068581)
(50000,0.0416667)
(100000,0.1546154)};
\end{loglogaxis} 
\end{tikzpicture}
\def \timeout{600}
\def \xmin{0}
\def \xmax{100000}
\def \ymin{-0}
\def \ymax{1700}
\begin{loglogaxis}[
width=9cm,
height=8cm,
xmin=\xmin,  xmax=\xmax, 
ymin=\ymin,  ymax=\ymax, 
enlargelimits=0.01, 
legend style={anchor=north west,
at={(0.01,0.998)},
font=\tiny,
inner xsep=-.5pt,
inner ysep=-.5pt,
fill=white,
draw=none},
grid=major]
\addplot [lightgray, no markers,line width=3pt] coordinates {(10,\timeout) (110000,\timeout)};
\addlegendentry{timeout: 600 secs}
\Bliss coordinates {
(10,0.0000095)
(20,0.0000129)
(50,0.0000308)
(100,0.0000501)
(200,0.0000926)
(300,0.0001523)
(400,0.0002254)
(500,0.0003121)
(600,0.0003684)
(700,0.0004298)
(800,0.0005631)
(900,0.0006319)
(1000,0.0007189)
(2000,0.0020790)
(5000,0.0088889)
(10000,0.0327869)
(20000,0.1256250)
(50000,0.8000000)
(100000,3.2500000)};
\Nauty coordinates {
(10,0.0000010)
(20,0.0000015)
(50,0.0000106)
(100,0.0000272)
(200,0.0000767)
(300,0.0002315)
(400,0.0007205)
(500,0.0016129)
(600,0.0020594)
(700,0.0026882)
(800,0.0045909)
(900,0.0062500)
(1000,0.0075368)
(2000,0.0612500)
(5000,0.5900000)
(10000,5.6600000)
(20000,50.3300000)
(50000,\timeout)};
\Traces coordinates {
(10,0.0000071)
(20,0.0000077)
(50,0.0000104)
(100,0.0000140)
(200,0.0000207)
(300,0.0000304)
(400,0.0000370)
(500,0.0000456)
(600,0.0000700)
(700,0.0000648)
(800,0.0000769)
(900,0.0001024)
(1000,0.0000996)
(2000,0.0002254)
(5000,0.0008834)
(10000,0.0026447)
(20000,0.0072857)
(50000,0.0416667)
(100000,0.1576923)};
\end{loglogaxis} 
\end{tikzpicture}
   \multicolumn{2}{l}{Cai-F\"urer-Immerman graphs} \\
   \begin{tikzpicture}[transform shape, scale=\picScale]
\def \timeout{10}
\def \xmin{200}
\def \xmax{2000}
\def \ymin{-0}
\def \ymax{10}
\begin{semilogyaxis}[
width=9cm,
height=8cm,
xmin=\xmin,  xmax=\xmax, 
ymin=\ymin,  ymax=\ymax, 
enlargelimits=0.01, 
grid=major,
/pgfplots/xtick={200,400,600,800,1200,1600,2000}]
\BlissOnlyMarks coordinates {
(200,0.0010315)
(220,0.0010947)
(240,0.0013841)
(260,0.0017138)
(280,0.0013360)
(300,0.0015552)
(320,0.0023810)
(340,0.0018100)
(360,0.0027360)
(380,0.0024125)
(400,0.0031496)
(420,0.0031847)
(440,0.0040650)
(460,0.0050000)
(480,0.0038314)
(500,0.0039526)
(520,0.0057307)
(540,0.0048426)
(560,0.0065359)
(580,0.0078125)
(600,0.0078431)
(620,0.0060976)
(640,0.0093897)
(660,0.0108696)
(680,0.0097561)
(700,0.0116279)
(720,0.0125786)
(740,0.0138889)
(760,0.0111732)
(780,0.0147794)
(800,0.0166116)
(820,0.0157480)
(840,0.0180180)
(860,0.0192308)
(880,0.0203030)
(900,0.0188679)
(920,0.0184404)
(940,0.0231034)
(960,0.0239286)
(980,0.0250000)
(1000,0.0261039)
(1020,0.0285714)
(1040,0.0257692)
(1060,0.0246914)
(1080,0.0291304)
(1100,0.0224719)
(1120,0.0317187)
(1140,0.0272973)
(1160,0.0356140)
(1180,0.0265789)
(1200,0.0300000)
(1220,0.0408163)
(1240,0.0398039)
(1260,0.0418750)
(1280,0.0461364)
(1300,0.0369091)
(1320,0.0465116)
(1340,0.0356140)
(1360,0.0370370)
(1380,0.0369091)
(1400,0.0540541)
(1420,0.0582857)
(1440,0.0648387)
(1460,0.0441304)
(1480,0.0648387)
(1500,0.0648387)
(1520,0.0693103)
(1540,0.0490244)
(1560,0.0523077)
(1580,0.0792308)
(1600,0.0755556)
(1620,0.0561111)
(1640,0.0588235)
(1660,0.0600000)
(1680,0.0625000)
(1700,0.0664516)
(1720,0.0658065)
(1740,0.0966667)
(1760,0.1015000)
(1780,0.1035000)
(1800,0.0735714)
(1820,0.1100000)
(1840,0.1100000)
(1860,0.1155556)
(1880,0.1116667)
(1900,0.1300000)
(1920,0.1318750)
(1940,0.0918182)
(1960,0.1155556)
(1980,0.1000000)
(2000,0.1035000)};
\NautyOnlyMarks coordinates {
(200,0.0017361)
(220,0.0021113)
(240,0.0025000)
(260,0.0030120)
(280,0.0033784)
(300,0.0039453)
(320,0.0044079)
(340,0.0050750)
(360,0.0056818)
(380,0.0063750)
(400,0.0071181)
(420,0.0080078)
(440,0.0116477)
(460,0.0094444)
(480,0.0175833)
(500,0.0110870)
(520,0.0200000)
(540,0.0130625)
(560,0.0141667)
(580,0.0151471)
(600,0.0163281)
(620,0.0174167)
(640,0.0184821)
(660,0.0200000)
(680,0.0212500)
(700,0.0226042)
(720,0.0238636)
(740,0.0251250)
(760,0.0267500)
(780,0.0286111)
(800,0.0297222)
(820,0.0312500)
(840,0.0331250)
(860,0.0350000)
(880,0.0373214)
(900,0.0394643)
(920,0.0420833)
(940,0.0431250)
(960,0.0452083)
(980,0.1287500)
(1000,0.0483333)
(1020,0.0502500)
(1040,0.1223529)
(1060,0.1041667)
(1080,0.0557500)
(1100,0.2388889)
(1120,0.1836364)
(1140,0.0653125)
(1160,0.0650000)
(1180,0.0675000)
(1200,0.0715625)
(1220,0.0731250)
(1240,0.2477778)
(1260,0.0790625)
(1280,0.2160000)
(1300,0.0850000)
(1320,0.2942857)
(1340,0.0904167)
(1360,0.1478571)
(1380,0.3616667)
(1400,0.4200000)
(1420,4.9900000)
(1440,0.4900000)
(1460,0.1111111)
(1480,0.4320000)
(1500,0.1182353)
(1520,0.1211765)
(1540,0.5250000)
(1560,0.1360000)
(1580,0.1380000)
(1600,0.3650000)
(1620,0.4160000)
(1640,0.1485714)
(1660,0.1538462)
(1680,0.1576923)
(1700,0.1638462)
(1720,4.2400000)
(1740,0.2575000)
(1760,2.3500000)
(1780,0.1808333)
(1800,1.1900000)
(1820,0.9900000)
(1840,0.3700000)
(1860,1.3300000)
(1880,0.4360000)
(1900,1.4150000)
(1920,0.2170000)
(1940,0.8233333)
(1960,0.2311111)
(1980,0.3866667)
(2000,0.6100000)};
\ConautoOnlyMarks coordinates {
(200,0.0046189)
(220,0.0044543)
(240,0.0059701)
(260,0.0078125)
(280,0.0061920)
(300,0.0071685)
(320,0.0117647)
(340,0.0089286)
(360,0.0132450)
(380,0.0119048)
(400,0.0150376)
(420,0.0149254)
(440,0.0211579)
(460,0.0300000)
(480,0.0208333)
(500,0.0213830)
(520,0.0340678)
(540,0.0266667)
(560,0.0350877)
(580,0.0497561)
(600,0.0502500)
(620,0.0338983)
(640,0.0600000)
(660,0.0714286)
(680,0.0594118)
(700,0.0755556)
(720,0.0854167)
(740,0.0952381)
(760,0.0571429)
(780,0.0966667)
(800,0.1063158)
(820,0.1015000)
(840,0.1366667)
(860,0.1333333)
(880,0.1464286)
(900,0.1235294)
(920,0.1005000)
(940,0.1675000)
(960,0.1646154)
(980,0.1675000)
(1000,0.1825000)
(1020,0.2120000)
(1040,0.1733333)
(1060,0.1428571)
(1080,0.2040000)
(1100,0.1464286)
(1120,0.2140000)
(1140,0.1457143)
(1160,0.2650000)
(1180,0.1691667)
(1200,0.1750000)
(1220,0.3171429)
(1240,0.2675000)
(1260,0.2857143)
(1280,0.3633333)
(1300,0.1900000)
(1320,0.3333333)
(1340,0.2300000)
(1360,0.2388889)
(1380,0.2090000)
(1400,0.3983333)
(1420,0.4300000)
(1440,0.4980000)
(1460,0.2700000)
(1480,0.4840000)
(1500,0.4940000)
(1520,0.5075000)
(1540,0.2957143)
(1560,0.3200000)
(1580,0.5975000)
(1600,0.5075000)
(1620,0.3550000)
(1640,0.3983333)
(1660,0.3228571)
(1680,0.4200000)
(1700,0.4200000)
(1720,0.4200000)
(1740,0.6050000)
(1760,0.7433333)
(1780,0.7666667)
(1800,0.5000000)
(1820,0.8266667)
(1840,0.6766667)
(1860,0.8200000)
(1880,0.7366667)
(1900,0.9800000)
(1920,1.0000000)
(1940,0.6050000)
(1960,0.6733333)
(1980,0.5750000)
(2000,0.6200000)};
\SaucyOnlyMarks coordinates {
(200,0.0004638)
(220,0.0003542)
(240,0.0006471)
(260,0.0006499)
(280,0.0008618)
(300,0.0006372)
(320,0.0007275)
(340,0.0008870)
(360,0.0007373)
(380,0.0011581)
(400,0.0013477)
(420,0.0010685)
(440,0.0010070)
(460,0.0017628)
(480,0.0014004)
(500,0.0017785)
(520,0.0013293)
(540,0.0016165)
(560,0.0014989)
(580,0.0016525)
(600,0.0019407)
(620,0.0029942)
(640,0.0030996)
(660,0.0019235)
(680,0.0019447)
(700,0.0029508)
(720,0.0036005)
(740,0.0025612)
(760,0.0037193)
(780,0.0028244)
(800,0.0030025)
(820,0.0034183)
(840,0.0027998)
(860,0.0045271)
(880,0.0034337)
(900,0.0040033)
(920,0.0040874)
(940,0.0035005)
(960,0.0037207)
(980,0.0056156)
(1000,0.0059339)
(1020,0.0044981)
(1040,0.0047716)
(1060,0.0042857)
(1080,0.0058489)
(1100,0.0065743)
(1120,0.0066363)
(1140,0.0048048)
(1160,0.0053430)
(1180,0.0080873)
(1200,0.0078115)
(1220,0.0084142)
(1240,0.0069086)
(1260,0.0082415)
(1280,0.0068465)
(1300,0.0064959)
(1320,0.0093587)
(1340,0.0073120)
(1360,0.0069770)
(1380,0.0104079)
(1400,0.0072643)
(1420,0.0088784)
(1440,0.0112507)
(1460,0.0102748)
(1480,0.0115410)
(1500,0.0079878)
(1520,0.0108677)
(1540,0.0103533)
(1560,0.0086559)
(1580,0.0086414)
(1600,0.0120923)
(1620,0.0127290)
(1640,0.0114673)
(1660,0.0147256)
(1680,0.0099860)
(1700,0.0097409)
(1720,0.0113192)
(1740,0.0151126)
(1760,0.0150994)
(1780,0.0112710)
(1800,0.0159878)
(1820,0.0123727)
(1840,0.0127121)
(1860,0.0119331)
(1880,0.0123105)
(1900,0.0134234)
(1920,0.0121923)
(1940,0.0175033)
(1960,0.0202116)
(1980,0.0142000)
(2000,0.0208897)};
\TracesOnlyMarks coordinates {
(200,0.0006158)
(220,0.0006410)
(240,0.0005708)
(260,0.0008446)
(280,0.0007375)
(300,0.0009653)
(320,0.0011574)
(340,0.0010460)
(360,0.0012563)
(380,0.0014970)
(400,0.0013736)
(420,0.0014286)
(440,0.0015060)
(460,0.0016422)
(480,0.0018474)
(500,0.0017857)
(520,0.0020661)
(540,0.0021930)
(560,0.0023364)
(580,0.0028409)
(600,0.0026729)
(620,0.0022841)
(640,0.0024393)
(660,0.0026172)
(680,0.0031646)
(700,0.0033224)
(720,0.0033784)
(740,0.0040079)
(760,0.0032372)
(780,0.0040079)
(800,0.0047170)
(820,0.0040726)
(840,0.0040323)
(860,0.0043319)
(880,0.0042373)
(900,0.0058430)
(920,0.0048077)
(940,0.0053457)
(960,0.0048798)
(980,0.0057386)
(1000,0.0051000)
(1020,0.0057670)
(1040,0.0063437)
(1060,0.0061585)
(1080,0.0070833)
(1100,0.0065064)
(1120,0.0068919)
(1140,0.0066118)
(1160,0.0070486)
(1180,0.0067105)
(1200,0.0074265)
(1220,0.0083750)
(1240,0.0076515)
(1260,0.0084583)
(1280,0.0080469)
(1300,0.0083750)
(1320,0.0076894)
(1340,0.0088793)
(1360,0.0081048)
(1380,0.0087069)
(1400,0.0099519)
(1420,0.0108333)
(1440,0.0098077)
(1460,0.0106250)
(1480,0.0125625)
(1500,0.0117614)
(1520,0.0108333)
(1540,0.0107812)
(1560,0.0113636)
(1580,0.0125000)
(1600,0.0120833)
(1620,0.0120833)
(1640,0.0128125)
(1660,0.0126250)
(1680,0.0126250)
(1700,0.0136184)
(1720,0.0287500)
(1740,0.0128125)
(1760,0.0145139)
(1780,0.0138889)
(1800,0.0152941)
(1820,0.0140278)
(1840,0.0157031)
(1860,0.0173333)
(1880,0.0158594)
(1900,0.0175000)
(1920,0.0169167)
(1940,0.0184821)
(1960,0.0169167)
(1980,0.0162500)
(2000,0.0192308)};
\end{semilogyaxis} 
\end{tikzpicture}
\def \timeout{100}
\def \xmin{200}
\def \xmax{2000}
\def \ymin{-0}
\def \ymax{100}
\begin{semilogyaxis}[
width=9cm,
height=8cm,
xmin=\xmin,  xmax=\xmax, 
ymin=\ymin,  ymax=\ymax, 
enlargelimits=0.01, 
grid=major,
/pgfplots/xtick={200,400,600,800,1200,1600,2000}]
\BlissOnlyMarks coordinates {
(200,0.0020921)
(220,0.0022099)
(240,0.0017286)
(260,0.0036563)
(280,0.0018215)
(300,0.0028450)
(320,0.0046838)
(340,0.0031201)
(360,0.0048309)
(380,0.0044944)
(400,0.0051414)
(420,0.0052770)
(440,0.0058651)
(460,0.0088889)
(480,0.0076923)
(500,0.0067568)
(520,0.0082645)
(540,0.0094340)
(560,0.0136735)
(580,0.0151515)
(600,0.0159524)
(620,0.0101523)
(640,0.0128025)
(660,0.0155814)
(680,0.0203030)
(700,0.0216129)
(720,0.0220879)
(740,0.0295588)
(760,0.0168067)
(780,0.0277778)
(800,0.0331148)
(820,0.0320635)
(840,0.0314062)
(860,0.0360714)
(880,0.0370370)
(900,0.0357143)
(920,0.0291304)
(940,0.0502500)
(960,0.0478571)
(980,0.0410204)
(1000,0.0427660)
(1020,0.0643750)
(1040,0.0683333)
(1060,0.0606061)
(1080,0.0431915)
(1100,0.0689655)
(1120,0.0680000)
(1140,0.0444444)
(1160,0.0717857)
(1180,0.0472093)
(1200,0.1127778)
(1220,0.0858333)
(1240,0.1035000)
(1260,0.0966667)
(1280,0.0824000)
(1300,0.1010000)
(1320,0.0728571)
(1340,0.1094737)
(1360,0.0927273)
(1380,0.0878261)
(1400,0.1750000)
(1420,0.2388889)
(1440,0.1546154)
(1460,0.1538462)
(1480,0.2322222)
(1500,0.2300000)
(1520,0.1546154)
(1540,0.1393333)
(1560,0.1413333)
(1580,0.2537500)
(1600,0.1791667)
(1620,0.1025000)
(1640,0.1800000)
(1660,0.1155556)
(1680,0.2277778)
(1700,0.2687500)
(1720,0.3600000)
(1740,0.1716667)
(1760,0.3228571)
(1780,0.2750000)
(1800,0.1818182)
(1820,0.2800000)
(1840,0.4040000)
(1860,0.4280000)
(1880,0.4380000)
(1900,0.5300000)
(1920,0.4900000)
(1940,0.4400000)
(1960,0.4420000)
(1980,0.2150000)
(2000,0.5475000)};
\NautyOnlyMarks coordinates {
(200,0.0048317)
(220,0.0084167)
(240,0.0035069)
(260,0.0085417)
(280,0.0095833)
(300,0.0146528)
(320,0.0128125)
(340,0.0178571)
(360,0.0288889)
(380,0.0291667)
(400,0.0362500)
(420,0.0317187)
(440,0.1111111)
(460,0.0681250)
(480,0.1105263)
(500,0.0257500)
(520,0.1428571)
(540,0.1845455)
(560,0.0809375)
(580,0.1972727)
(600,0.1872727)
(620,0.3533333)
(640,0.2650000)
(660,0.1666667)
(680,0.2210000)
(700,0.4000000)
(720,0.3683333)
(740,0.2222222)
(760,0.4600000)
(780,0.1836364)
(800,0.5250000)
(820,0.7933333)
(840,0.3300000)
(860,0.4160000)
(880,0.4960000)
(900,0.8966667)
(920,0.8600000)
(940,1.5300000)
(960,1.5300000)
(980,0.8066667)
(1000,0.4320000)
(1020,0.4600000)
(1040,0.9300000)
(1060,1.4650000)
(1080,1.8600000)
(1100,2.3000000)
(1120,3.7000000)
(1140,1.3400000)
(1160,1.1000000)
(1180,1.0800000)
(1200,2.7300000)
(1220,1.3150000)
(1240,3.8900000)
(1260,4.2200000)
(1280,3.5100000)
(1300,4.8800000)
(1320,4.6600000)
(1340,3.6600000)
(1360,5.5400000)
(1380,5.5300000)
(1400,4.4000000)
(1420,8.0600000)
(1440,4.7400000)
(1460,3.8900000)
(1480,7.8800000)
(1500,11.0600000)
(1520,3.5400000)
(1540,2.0600000)
(1560,3.0100000)
(1580,7.7400000)
(1600,12.0200000)
(1620,5.6800000)
(1640,10.1200000)
(1660,14.7000000)
(1680,12.8200000)
(1700,8.4400000)
(1720,13.2700000)
(1740,9.1100000)
(1760,7.1900000)
(1780,7.6000000)
(1800,13.6800000)
(1820,17.2900000)
(1840,19.9800000)
(1860,16.1100000)
(1880,35.6300000)
(1900,15.5100000)
(1920,18.6000000)
(1940,12.6900000)
(1960,15.0300000)
(1980,31.3700000)
(2000,23.4300000)};
\TracesOnlyMarks coordinates {
(200,0.0006068)
(220,0.0006667)
(240,0.0005787)
(260,0.0008418)
(280,0.0007102)
(300,0.0009690)
(320,0.0011905)
(340,0.0010331)
(360,0.0012255)
(380,0.0014867)
(400,0.0013661)
(420,0.0014205)
(440,0.0015060)
(460,0.0016667)
(480,0.0017946)
(500,0.0017694)
(520,0.0020594)
(540,0.0022039)
(560,0.0022936)
(580,0.0028736)
(600,0.0026882)
(620,0.0023481)
(640,0.0023810)
(660,0.0025773)
(680,0.0032212)
(700,0.0033667)
(720,0.0032895)
(740,0.0039683)
(760,0.0032051)
(780,0.0039258)
(800,0.0045312)
(820,0.0040984)
(840,0.0040323)
(860,0.0044079)
(880,0.0042083)
(900,0.0059524)
(920,0.0048317)
(940,0.0054348)
(960,0.0049020)
(980,0.0058140)
(1000,0.0050000)
(1020,0.0056667)
(1040,0.0063437)
(1060,0.0061585)
(1080,0.0069257)
(1100,0.0064103)
(1120,0.0065385)
(1140,0.0066447)
(1160,0.0070139)
(1180,0.0068581)
(1200,0.0077652)
(1220,0.0082258)
(1240,0.0076515)
(1260,0.0086207)
(1280,0.0079688)
(1300,0.0083065)
(1320,0.0076136)
(1340,0.0090625)
(1360,0.0081855)
(1380,0.0087500)
(1400,0.0098558)
(1420,0.0108696)
(1440,0.0098077)
(1460,0.0107812)
(1480,0.0126875)
(1500,0.0118750)
(1520,0.0107812)
(1540,0.0106250)
(1560,0.0112500)
(1580,0.0125000)
(1600,0.0118182)
(1620,0.0117614)
(1640,0.0125000)
(1660,0.0125000)
(1680,0.0127500)
(1700,0.0134868)
(1720,0.0286111)
(1740,0.0131250)
(1760,0.0144444)
(1780,0.0137500)
(1800,0.0152941)
(1820,0.0141667)
(1840,0.0155882)
(1860,0.0170833)
(1880,0.0157813)
(1900,0.0180357)
(1920,0.0167500)
(1940,0.0183036)
(1960,0.0171667)
(1980,0.0161719)
(2000,0.0190179)};
\end{semilogyaxis} 
\end{tikzpicture}
   \multicolumn{2}{l}{Miyazaki graphs} \\
   \begin{tikzpicture}[transform shape, scale=\picScale]
\def \timeout{600}
\def \xmin{90}
\def \xmax{1200}
\def \ymin{-0}
\def \ymax{1620}
\begin{semilogyaxis}[
width=9cm,
height=8cm,
xmin=\xmin,  xmax=\xmax, 
ymin=\ymin,  ymax=\ymax, 
enlargelimits=0.01, 
legend style={anchor=north west,
at={(0.01,0.998)},
font=\tiny,
inner xsep=-.5pt,
inner ysep=-.5pt,
fill=white,
draw=none},
grid=major]
\addplot [lightgray, no markers,line width=3pt] coordinates {(0,\timeout) (1300,\timeout)};
\addlegendentry{timeout: 600 secs}
\Bliss coordinates {
(96,0.0001207)
(144,0.0002201)
(192,0.0003721)
(240,0.0005848)
(288,0.0008726)
(336,0.0012626)
(384,0.0017778)
(432,0.0024010)
(480,0.0031847)
(528,0.0041494)
(576,0.0054496)
(624,0.0070175)
(672,0.0085106)
(720,0.0104712)
(768,0.0129032)
(816,0.0154615)
(864,0.0185185)
(912,0.0219780)
(960,0.0259740)
(1008,0.0307576)
(1056,0.0356140)
(1104,0.0416667)
(1152,0.0469767)
(1200,0.0536842)};
\Nauty coordinates {
(96,0.0003472)
(144,0.0036232)
(192,0.0315625)
(240,0.2344444)
(288,1.4900000)
(336,7.3300000)
(384,51.7100000)
(432,253.9700000)
(480,\timeout)};
\Conauto coordinates {
(96,0.0006581)
(144,0.0012812)
(192,0.0020855)
(240,0.0030120)
(288,0.0041667)
(336,0.0054496)
(384,0.0069444)
(432,0.0087336)
(480,0.0108108)
(528,0.0125786)
(576,0.0150376)
(624,0.0170940)
(672,0.0198020)
(720,0.0227273)
(768,0.0259740)
(816,0.0307692)
(864,0.0344828)
(912,0.0383019)
(960,0.0412245)
(1008,0.0459091)
(1056,0.0502500)
(1104,0.0566667)
(1152,0.0609091)
(1200,0.0666667)};
\Saucy coordinates {
(96,0.0000679)
(144,0.0000838)
(192,0.0001023)
(240,0.0001218)
(288,0.0001430)
(336,0.0001642)
(384,0.0001812)
(432,0.0002069)
(480,0.0002270)
(528,0.0002428)
(576,0.0002593)
(624,0.0002817)
(672,0.0002959)
(720,0.0003202)
(768,0.0003346)
(816,0.0003661)
(864,0.0003786)
(912,0.0004124)
(960,0.0004337)
(1008,0.0004402)
(1056,0.0004649)
(1104,0.0004769)
(1152,0.0004944)
(1200,0.0005373)};
\Traces coordinates {
(96,0.0000429)
(144,0.0000655)
(192,0.0000968)
(240,0.0001303)
(288,0.0001813)
(336,0.0002220)
(384,0.0002700)
(432,0.0003201)
(480,0.0003956)
(528,0.0004579)
(576,0.0005112)
(624,0.0005981)
(672,0.0006757)
(720,0.0007716)
(768,0.0008091)
(816,0.0008865)
(864,0.0010373)
(912,0.0011364)
(960,0.0012500)
(1008,0.0013812)
(1056,0.0014368)
(1104,0.0015625)
(1152,0.0016340)
(1200,0.0017986)};
\end{semilogyaxis} 
\end{tikzpicture}
\def \timeout{600}
\def \xmin{90}
\def \xmax{1200}
\def \ymin{-0}
\def \ymax{1620}
\begin{semilogyaxis}[
width=9cm,
height=8cm,
xmin=\xmin,  xmax=\xmax, 
ymin=\ymin,  ymax=\ymax, 
enlargelimits=0.01, 
legend style={anchor=north west,
at={(0.01,0.998)},
font=\tiny,
inner xsep=-.5pt,
inner ysep=-.5pt,
fill=white,
draw=none},
grid=major]
\addplot [lightgray, no markers,line width=3pt] coordinates {(0,\timeout) (1300,\timeout)};
\addlegendentry{timeout: 600 secs}
\Bliss coordinates {
(96,0.0002213)
(144,0.0004515)
(192,0.0008435)
(240,0.0014749)
(288,0.0026316)
(336,0.0051948)
(384,0.0102564)
(432,0.0228409)
(480,0.0384615)
(528,0.0913636)
(576,0.1854545)
(624,0.3900000)
(672,0.6625000)
(720,1.5200000)
(768,2.7100000)
(816,6.5400000)
(864,11.9200000)
(912,27.2700000)
(960,47.0300000)
(1008,98.7800000)
(1056,230.1900000)
(1104,420.0600000)
(1152,\timeout)
(1200,\timeout)};
\Nauty coordinates {
(96,0.0012821)
(144,0.0326562)
(192,0.1211765)
(240,1.6300000)
(288,11.7700000)
(336,42.8600000)
(384,277.9200000)
(432,\timeout)};
\Traces coordinates {
(96,0.0000429)
(144,0.0000684)
(192,0.0000975)
(240,0.0001310)
(288,0.0001820)
(336,0.0002258)
(384,0.0002657)
(432,0.0003161)
(480,0.0003949)
(528,0.0004537)
(576,0.0005319)
(624,0.0005967)
(672,0.0006631)
(720,0.0007463)
(768,0.0008143)
(816,0.0008803)
(864,0.0010246)
(912,0.0011312)
(960,0.0012500)
(1008,0.0013736)
(1056,0.0014440)
(1104,0.0016447)
(1152,0.0016026)
(1200,0.0017730)};
\end{semilogyaxis} 
\end{tikzpicture}
    \hline
   \multicolumn{2}{l}{\showBliss\hspace{5pt}{\bliss}\hspace{10pt}\showSaucy\hspace{5pt}{\saucy}\hspace{10pt}
   \showConauto\hspace{5pt}{\conauto}\hspace{10pt}\showNauty\hspace{5pt}{\nauty}\hspace{10pt}\showTraces\hspace{5pt}{\traces}} \\
  \end{tabular}
\end{center}
\caption{Performance comparison
  (horizontal: number of vertices; vertical: time in seconds)}
\label{perf4}
\end{figure}


\begin{figure}[p]
\begin{center}
  \begin{tabular}{@{} cc @{}}
   \multicolumn{2}{l}{\parbox{0.9\textwidth}{Automorphisms groups of projective planes of order 16\\
     (regular bipartite graphs of order 546 and degree 17)}} \\[1ex]
   \begin{tikzpicture}[transform shape, scale=\picScale]
\def \timeout{3600}
\def \xmin{$P_1$}
\def \xmax{$P_{12}$}
\def \ymin{-0}
\def \ymax{7000}
\begin{semilogyaxis}[%
width=10cm,
height=8cm,
xmin=\xmin,  xmax=\xmax, 
ymin=\ymin,  ymax=\ymax, 
enlargelimits=0.025, 
legend style={anchor=north west,
at={(0.01,0.998)},
font=\tiny,
inner xsep=-.5pt,
inner ysep=-.5pt,
fill=white,
draw=none},
grid=major,
/pgfplots/xtick={$P_1$,$P_2$,$P_3$,$P_4$,$P_5$,$P_6$,$P_7$,$P_8$,$P_9$,$P_{10}$,$P_{11}$,$P_{12}$},
symbolic x coords={A,$P_1$,$P_2$,$P_3$,$P_4$,$P_5$,$P_6$,$P_7$,$P_8$,$P_9$,$P_{10}$,$P_{11}$,$P_{12}$,B}]
\addplot [lightgray, no markers,line width=3pt] coordinates {(A,\timeout) (B,\timeout)};
\addlegendentry{timeout: 3600 secs}
\NautyLightOnlyMarks coordinates {
($P_1$,0.0084583)
($P_2$,28.66)
($P_3$,1.62)
($P_4$,108.95)
($P_5$,11.92)
($P_6$,89.13)
($P_7$,185.95)
($P_8$,33.24)
($P_9$,\timeout)
($P_{10}$,965.35)
($P_{11}$,202.61)
($P_{12}$,1276.55)};
\BlissOnlyMarks coordinates {
($P_1$,0.0054201)
($P_2$,256.38)
($P_3$,40.48)
($P_4$,71.03)
($P_5$,38.43)
($P_6$,1864.01)
($P_7$,185.95)
($P_8$,89.20)
($P_9$,2347.88)
($P_{10}$,327.76)
($P_{11}$,1249.03)
($P_{12}$,2643.41)};
\ConautoOnlyMarks coordinates {
($P_1$,0.0137931)
($P_2$,118.85)
($P_3$,24.03)
($P_4$,28.6)
($P_5$,2.59)
($P_6$,86.05)
($P_7$,45.79)
($P_8$,99.55)
($P_9$,370.99)
($P_{10}$,150.12)
($P_{11}$,342.19)
($P_{12}$,32.65)};
\SaucyOnlyMarks coordinates {
($P_1$,0.0019061)
($P_2$,\timeout)
($P_3$,\timeout)
($P_4$,\timeout)
($P_5$,\timeout)
($P_6$,\timeout)
($P_7$,\timeout)
($P_8$,\timeout)
($P_9$,\timeout)
($P_{10}$,\timeout)
($P_{11}$,\timeout)
($P_{12}$,\timeout)};
\TracesOnlyMarks coordinates {
($P_1$,0.0054348)
($P_2$,0.0382143)
($P_3$,0.0441667)
($P_4$,0.05775)
($P_5$,0.0778125)
($P_6$,0.0831250)
($P_7$,0.11)
($P_8$,0.1836364)
($P_9$,0.2971429)
($P_{10}$,0.6125)
($P_{11}$,0.9166667)
($P_{12}$,0.8466667)};
\end{semilogyaxis}
\end{tikzpicture} & \renewcommand{\arraystretch}{1.1}
\scriptsize{
  \begin{tabular}[b]{@{} lrr @{}}
    \# & group size & orbits \\ \hline
    $P_1$ & $3.42171648e10$ & $1$ \\ 
    $P_2$ & $921{,}600$ & $6$ \\ 
    $P_3$ & $884{,}736$ & $3$ \\ 
    $P_4$ & $258{,}048$ & $6$ \\ 
    $P_5$ & $147{,}456$ & $3$ \\ 
    $P_6$ & $92{,}160$ & $8$ \\ 
    $P_7$ & $55{,}296$ & $8$ \\ 
    $P_8$ & $18{,}432$ & $5$ \\ 
    $P_9$ & $12{,}288$ & $6$ \\ 
    $P_{10}$ & $3{,}840$ & $10$ \\ 
    $P_{11}$ & $3{,}456$ & $12$ \\ 
    $P_{12}$ & $2{,}304$ & $14$ \\ 
     &  & \\ 
  \end{tabular}
} \\
   \multicolumn{2}{l}{Automorphisms of some combinatorial graphs} \\[1ex]
   \begin{tikzpicture}[transform shape, scale=\picScale]
\def \timeout{3600}
\def \xmin{$C_1$}
\def \xmax{$C_{12}$}
\def \ymin{-0}
\def \ymax{7000}
\begin{semilogyaxis}[%
width=10cm,
height=8cm,
xmin=\xmin,  xmax=\xmax, 
ymin=\ymin,  ymax=\ymax, 
enlargelimits=0.025, 
legend style={anchor=north west,
at={(0.01,0.998)},
font=\tiny,
inner xsep=-.5pt,
inner ysep=-.5pt,
fill=white,
draw=none},
grid=major,
/pgfplots/xtick={$C_1$,$C_2$,$C_3$,$C_4$,$C_5$,$C_6$,$C_7$,$C_8$,$C_9$,$C_{10}$,$C_{11}$,$C_{12}$},
symbolic x coords={A,$C_1$,$C_2$,$C_3$,$C_4$,$C_5$,$C_6$,$C_7$,$C_8$,$C_9$,$C_{10}$,$C_{11}$,$C_{12}$,B}]
\addplot [lightgray, no markers,line width=3pt] coordinates {(A,\timeout) (B,\timeout)};
\addlegendentry{timeout: 3600 secs}
\NautyOnlyMarks coordinates {
($C_1$,3052.61)
($C_2$,\timeout)
($C_3$,\timeout)
($C_4$,1371.82)
($C_5$,26.10)
($C_6$,\timeout)
($C_9$,\timeout)
($C_{10}$,0.486)
($C_{11}$,0.6425)};
\NautyLightOnlyMarks coordinates {
($C_7$,78.20)
($C_8$,\timeout)
($C_{12}$,909.59)};
\BlissOnlyMarks coordinates {
($C_1$,1.21)
($C_2$,91.90)
($C_3$,10.68)
($C_4$,0.7633333)
($C_5$,4.57)
($C_6$,81.98)
($C_7$,442.21)
($C_8$,\timeout)
($C_9$,2.16)
($C_{10}$,0.6766667)
($C_{11}$,0.0536842)
($C_{12}$,664.69)};
\ConautoOnlyMarks coordinates {
($C_1$,3.12)
($C_2$,1084.23)
($C_3$,438.52)
($C_4$,10.91)
($C_6$,704.59)
($C_7$,50.16)
($C_9$,16.41)
($C_{10}$,0.5725)
($C_{11}$,0.0640625)
($C_{12}$,957.98)};
\ConautoSmall coordinates {
($C_8$,\timeout)};
\SaucyOnlyMarks coordinates {
($C_1$,\timeout)
($C_2$,\timeout)
($C_3$,\timeout)
($C_4$,\timeout)
($C_5$,11.929546)
($C_6$,\timeout)
($C_7$,\timeout)
($C_9$,\timeout)
($C_{10}$,316.921072)
($C_{11}$,105.586394)
($C_{12}$,\timeout)};
\SaucySmall coordinates {
($C_8$,\timeout)};
\TracesOnlyMarks coordinates {
($C_1$,1.135)
($C_2$,13.07)
($C_3$,2.65)
($C_4$,0.2433333)
($C_5$,3.83)
($C_6$,35.11)
($C_7$,0.345)
($C_8$,25.37)
($C_9$,1.74)
($C_{10}$,0.0929167)
($C_{11}$,0.0081048)
($C_{12}$,1.5)};
\end{semilogyaxis}
\end{tikzpicture} & \renewcommand{\arraystretch}{1.1}
\scriptsize{
  \begin{tabular}[b]{@{\hspace{3pt}} l@{\hspace{3pt}}r@{\hspace{3pt}}r@{\hspace{3pt}}r @{}}
    \# & $(V,E)$ & group size & orbits \\ \hline
$C_1$ & $(3650,598600)$ & $324$ & $30$ \\
$C_2$ & $(15984,10725264)$ & $2{,}125{,}873{,}200$ & $2$ \\
$C_3$ & $(7300,2693700)$ & $188{,}956{,}800$ & $2$ \\
$C_4$ & $(2752,481600)$ & $38{,}723{,}328$ & $2$ \\
$C_5$ & $(900000,1200000)$ & $600{,}000$ & $2$ \\
$C_6$ & $(15984,10725264)$ & $231{,}913{,}440$ & $2$ \\ 
$C_7$ & $(1302,16926)$ & $1{,}488{,}000$ & $2$ \\
$C_8$ & $(8322,270465)$ & $43{,}352{,}064$ & $8$ \\
$C_9$ & $(3650,598600)$ & $72$ & $65$ \\
$C_{10}$ & $(3276,245700)$ & $9{,}000{,}000$ & $3$ \\
$C_{11}$ & $(756,49140)$ & $9{,}000{,}000$ & $2$ \\ 
$C_{12}$ & $(1514,21196)$ & $122{,}472$ & $4$ \\ 
\\
 \end{tabular}
}
 \\ 
   \multicolumn{2}{l}{Canonical labelling of the above graphs} \\[1ex]
   \multicolumn{2}{l}{\begin{tikzpicture}[transform shape, scale=\picScale]
\def \timeout{3600}
\def \xmin{$P_1$}
\def \xmax{$C_{12}$}
\def \ymin{-0}
\def \ymax{7000}
\begin{semilogyaxis}[%
width=18cm,
height=8cm,
xmin=\xmin,  xmax=\xmax, 
ymin=\ymin,  ymax=\ymax, 
enlargelimits=0.015, 
grid=major,
legend style={anchor=north west,
at={(0.01,0.998)},
font=\tiny,
inner xsep=-.5pt,
inner ysep=-.5pt,
fill=white,
draw=none},
/pgfplots/xtick={$P_1$,$P_2$,$P_3$,$P_4$,$P_5$,$P_6$,$P_7$,$P_8$,$P_9$,$P_{10}$,$P_{11}$,$P_{12}$,$C_1$,$C_2$,$C_3$,$C_4$,$C_5$,$C_6$,$C_7$,$C_8$,$C_9$,$C_{10}$,$C_{11}$,$C_{12}$},
symbolic x coords={A,$P_1$,$P_2$,$P_3$,$P_4$,$P_5$,$P_6$,$P_7$,$P_8$,$P_9$,$P_{10}$,$P_{11}$,$P_{12}$,$C_1$,$C_2$,$C_3$,$C_4$,$C_5$,$C_6$,$C_7$,$C_8$,$C_9$,$C_{10}$,$C_{11}$,$C_{12}$,B}]
\addplot [lightgray, no markers,line width=3pt] coordinates {(A,\timeout) (B,\timeout)};
\addlegendentry{timeout: 3600 secs}
\NautyLightOnlyMarks coordinates {
($P_1$,0.0054620)
($P_2$,28.15)
($P_3$,1.560000)
($P_4$,8.47)
($P_5$,8.05)
($P_6$,12.12)
($P_7$,23.09)
($P_8$,31.84)
($P_9$,\timeout)
($P_{10}$,984.77)
($P_{11}$,191.06)
($P_{12}$,1048.77)
($C_7$,58.83)
($C_8$,\timeout)
($C_{12}$,890.97)};
\NautyOnlyMarks coordinates {
($C_1$,1984.60)
($C_2$,\timeout)
($C_3$,2010.96)
($C_4$,1045.60)
($C_5$,27.00)
($C_6$,\timeout)
($C_9$,\timeout)
($C_{10}$,0.40)
($C_{11}$,0.575)};
\BlissOnlyMarks coordinates {
($P_1$,0.008658)
($P_2$,438.18)
($P_3$,139.66)
($P_4$,76.04)
($P_5$,90.43)
($P_6$,837.93)
($P_7$,398.32)
($P_8$,278.45)
($P_9$,551.38)
($P_{10}$,616.29)
($P_{11}$,1169.17)
($P_{12}$,1191.51)
($C_1$,101.3700000)
($C_2$,\timeout)
($C_3$,\timeout)
($C_4$,32.7000000)
($C_5$,2.1000000)
($C_6$,\timeout)
($C_7$,2576.58)
($C_8$,\timeout)
($C_9$,76.1200000)
($C_{10}$,0.83)
($C_{11}$,4.7600000)
($C_{12}$,\timeout)};
\TracesOnlyMarks coordinates {
($P_1$,0.0031019)
($P_2$,0.125)
($P_3$,0.115)
($P_4$,0.3328571)
($P_5$,0.42)
($P_6$,0.6733333)
($P_7$,1.105)
($P_8$,2.53)
($P_9$,3.94)
($P_{10}$,11.57)
($P_{11}$,13.68)
($P_{12}$,20.21)
($C_1$,3.03)
($C_2$,10.47)
($C_3$,2.40)
($C_4$,0.454)
($C_5$,4.17)
($C_6$,28.53)
($C_7$,1.85)
($C_8$,356.68)
($C_9$,5.92)
($C_{10}$,0.1194118)
($C_{11}$,0.0165625)
($C_{12}$,38.33)};
\end{semilogyaxis}
\end{tikzpicture}} \\
    \hline
   \multicolumn{2}{l}{\showBliss\hspace{5pt}{\bliss}\hspace{10pt}\showSaucy\hspace{5pt}{\saucy}\hspace{10pt}
   \showConauto\hspace{5pt}{\conauto}\hspace{10pt}\showNauty\hspace{5pt}{\nauty}\hspace{10pt}\showNautyLight
   \hspace{5pt}{\nauty} with invariant \textit{cellfano2}\hspace{8pt}\showTraces\hspace{5pt}{\traces}} \\
  \end{tabular}
\end{center}
\caption{Performance comparison
  (horizontal: graph number; vertical: time in seconds)}
\label{perf5}
\end{figure}

\end{document}